\renewcommand{\hat}[1]{\ensuremath{\widehat{#1}}}
\newcommand{\Sim}{\ensuremath{\mathop{\sim}\limits}}
\newcommand{\grandO}{\ensuremath{\mathop{\mathrm{O}}\limits}}
\newcommand{\crochet}[2]{\left\langle #1, #2\right\rangle}
\newcommand{\scrochet}[2]{\langle #1, #2\rangle}
\newcommand{\ie}{{\em i.e.}~}
\newcommand{\dd}{\mathrm{d}}
\newcommand{\R}{\mathbb{R}}
\renewcommand{\P}{\mathcal{P}}
\newcommand{\Exp}{\mathbb{E}}
\renewcommand{\Pr}{\mathbb{P}}
\newcommand{\ind}[1]{\mathds{1}_{\{#1\}}}
\newcommand{\ua}{\underline{a}}
\newcommand{\pc}{p_{\mathrm{c}}}
\newcommand{\qc}{q_{\mathrm{c}}}
\newcommand{\Beta}{\mathrm{Beta}}
\newcommand{\Cov}{\mathrm{Cov}}
\newcommand{\ex}[1]{\mathrm{e}^{#1}}
\newcommand{\uto}{\uparrow}
\newcommand{\dto}{\downarrow}
\newcommand{\tY}{\tilde{Y}}
\newcommand{\tF}{\tilde{F}}
\newcommand{\Z}{\mathcal{Z}}
\newcommand{\bZ}{\bar{\mathcal{Z}}}
\newcommand{\bPi}{\bar{\Pi}}
\newcommand{\bmu}{\bar{\mu}}
\newtheorem{defi}{Definition}[section]
\newtheorem{lem}[defi]{Lemma}
\newtheorem{prop}[defi]{Proposition}
\newtheorem{theo}[defi]{Theorem}
\newtheorem{cor}[defi]{Corollary}
\newtheoremstyle{myremark}{}{}{}{0pt}{\bfseries}{.}{ }{}
\theoremstyle{myremark}
\newtheorem{rk}[defi]{Remark}
\newtheorem{expl}[defi]{Example}
\title{Capital distribution and portfolio performance in the mean-field Atlas model}
\author{Benjamin Jourdain}
\address{{\bf Benjamin Jourdain}\newline
{\rm Université Paris-Est, CERMICS (ENPC), INRIA, F-77455 Marne-la-Vallée}}
\email{\href{mailto:jourdain@cermics.enpc.fr}{jourdain@cermics.enpc.fr}}
\author{Julien Reygner}
\address{{\bf Julien Reygner}\newline
{\rm Sorbonne Universités, UPMC Univ Paris 06, UMR 7599, LPMA, F-75005 Paris\newline
Université Paris-Est, CERMICS (ENPC), F-77455 Marne-la-Vallée}}
\email{\href{mailto:julien.reygner@polytechnique.org}{julien.reygner@polytechnique.org}}
\thanks{This research benefited from the support of the Chaire Risques Financiers, Fondation du Risque, and of the French National Research Agency (ANR) under the program ANR-12-BLAN Stab.}
\subjclass[2010]{60H10; 91B26; 91G10.\newline \indent {\em JEL Classification.} G10; G11}
\keywords{Stochastic Portfolio Theory; Capital distribution curves; Rank-based models; Mean-field Atlas model; Growth rate; Size effect}
\begin{document}

\begin{abstract}
  We study a mean-field version of rank-based models of equity markets such as the Atlas model introduced by Fernholz in the framework of Stochastic Portfolio Theory. We obtain an asymptotic description of the market when the number of companies grows to infinity. Then, we discuss the long-term capital distribution. We recover the Pareto-like shape of capital distribution curves usually derived from empirical studies, and provide a new description of the phase transition phenomenon observed by Chatterjee and Pal. Finally, we address the performance of simple portfolio rules and highlight the influence of the volatility structure on the growth of portfolios.
\end{abstract}

\maketitle

\section{Introduction}\label{s:intro}

\subsection{Rank-based models}\label{ss:intro:rbm} Rank-based models of equity markets were introduced by Fernholz within the framework of {\em Stochastic Portfolio Theory}~\cite{fernholz, ferkar} as first-order approximations of asymptotically stable markets. In such models, the capitalization of a stock is described by the exponential of a diffusion process, the drift and variance of which depend only on the rank of the stock among the whole market. A simple but celebrated instance of such a model is the {\em Atlas model}~\cite{fernholz, banner, ferkar, ipbkf}, where all the stocks have the same variance and the smallest stock is responsible for the growth of the whole market.

In the long-term, the Atlas model was proven to capture the actual distribution of the total capital~\cite{fernholz}. This gave rise to a large amount of mathematical studies on rank-based models~\cite{ik, iks, fikp, fik:skew}; in particular, concerning the shape of capital distribution curves~\cite{banner, chapalPTRF, ferkar, sarantsev} as well as the selection of optimal investment strategies (portfolios) on the market~\cite{banner, ferkar}. Both the capital distribution and the performance of portfolios depend on the long time behaviour of the market, which was described in~\cite{banner, pp, jm, ips, ipbkf}. In order to study large markets, asymptotic properties, when the number of stocks grows to infinity, of long-term rank-based models were derived in~\cite{banner, chapalPTRF, sarantsev}.

In this article, we introduce a rank-based model that we call the {\em mean-field Atlas model}, where the drift and variance of the capitalization processes depend on empirical quantiles. This particular shape for the characteristics of the market, that we shall discuss below, allows us to:
\begin{enumerate}
  \item derive an asymptotic description of the evolution of the market when its size grows to infinity, through a functional law of large numbers;
  \item obtain closed form expressions for the long time behaviour of this asymptotic market;
  \item recover capital distribution curves similar to those empirically observed;
  \item carry out a detailed analysis of the performance of a portfolio rule.
\end{enumerate}

Before providing more insight into these issues in Subsection~\ref{ss:intro:context} and giving a proper definition of our model in Subsection~\ref{ss:intro:model}, let us insist on the following particularity of our approach. In all the works cited above, the authors first address the long time behaviour of market models with a fixed number of stocks, then possibly study the large size limit of the market under its steady state. The latter is not so easy to handle as the underlying stationary distribution is generically not known, see~\S\ref{sss:intro:lt} below for a more detailed review. As a consequence, the asymptotic behaviour of these steady states for large markets is all the more difficult to understand, although there have been remarkable results in this direction~\cite{banner, chapalPTRF}.

In the present paper, we somehow take the opposite path and first obtain an asymptotic description of the evolution of the whole market when the number of stocks grows to infinity. This limit shall be referred to as the {\em asymptotic market}. Then, we address the long time behaviour of this asymptotic market and get an explicit description of the steady states of large markets, which is widely based on the theoretical results of~\cite{jourey}. To our knowledge, this is the first study proceeding in this way. For models assigning the same volatility to each stock, we provide a mathematical justification of the equivalence of both approaches, thanks to the technical results of~\cite{chaoticity}.

\subsection{Context and motivations}\label{ss:intro:context} We now provide a general introduction to the issues we shall address in the context of the mean-field Atlas model; namely, the long-term stability of rank-based models, the description of capital distribution curves and the analysis of portfolio performance.

\subsubsection{Long-term stability of rank-based models}\label{sss:intro:lt} The framework of Stochastic Portfolio Theory~\cite{fernholz, ferkar} is described as follows. For a market containing a fixed number $n \geq 1$ of stocks, with respective capitalizations $X^1_n(t), \ldots, X_n^n(t) > 0$ at time $t$, the log-capitalizations $Y^i_n(t) := \log X^i_n(t)$ are assumed to satisfy the relation
\begin{equation}\label{eq:model}
  \forall i \in \{1, \ldots, n\}, \qquad \dd Y^i_n(t) = \gamma^i_n(t) \dd t + \sigma^i_n(t) \dd B^i(t),
\end{equation}
where the {\em growth rate} process $(\gamma^1_n(t), \ldots, \gamma^n_n(t))_{t \geq 0}$ and the {\em volatility} process $(\sigma^1_n(t), \ldots, \sigma^n_n(t))_{t \geq 0}$ in $\R^n$ are adapted to a given filtration $(\mathcal{F}(t))_{t \geq 0}$ on some probability space $(\Omega,\mathcal{F},\Pr)$, and the processes $(B^i(t))_{t \geq 0}$, $i \geq 1$ are independent $(\mathcal{F}(t))_{t \geq 0}$-Brownian motions.

The model is said to be {\em rank-based} whenever the growth rate process and volatility process write 
\begin{equation}\label{eq:gammasigma}
  \gamma^i_n(t) = \sum_{j=1}^n \ind{Y^i_n(t) = Y^{(j)}_n(t)} \gamma^j_n, \qquad \sigma^i_n(t) = \sum_{j=1}^n \ind{Y^i_n(t) = Y^{(j)}_n(t)}\sigma^j_n,
\end{equation}
for given growth rate coefficients $\gamma^1_n, \ldots, \gamma^n_n \in \R$ and volatility coefficients $\sigma^1_n, \ldots, \sigma^n_n \in \R$, where $Y^{(1)}_n(t) \leq \cdots \leq Y^{(n)}_n(t)$ refer to the increasing reordering of $Y^1_n(t), \ldots, Y^n_n(t)$. In other words, the dynamics of each stock is determined by its rank among the whole market. As soon as, for all $j \in \{1, \ldots, n\}$, $(\sigma^j_n)^2 > 0$, then the stochastic differential equation~(\ref{eq:model},~\ref{eq:gammasigma}) admits a unique weak solution~\cite{bass}, and almost surely, $\dd t$-almost everywhere, the random variables $Y^1_n(t), \ldots, Y^n_n(t)$ are pairwise distinct, therefore there is no need to take a specific convention to resolve ties. Then, we define the capitalization of the $i$-th stock by $X^i_n(t) := \exp Y^i_n(t)$.

Let us emphasize the fact that we use the notation $(j)$ to refer to the {\em increasing reordering}, following the usual convention for order statistics. However it is sometimes convenient to use the decreasing reordering~\cite{fernholz, ferkar, pp, chapalPTRF}. In such situations, we shall use the notation $[k]$. In other words, if $(y^1, \ldots, y^n) \in \R^n$, then $((1), \ldots, (n))$ is a permutation of $(1, \ldots, n)$ such that $y^{(1)} \leq \cdots \leq y^{(n)}$, while $([1], \ldots, [n])$ is a permutation of $(1, \ldots, n)$ such that $y^{[1]} \geq \cdots \geq y^{[n]}$. Note that one may always choose $[k] = (n-k+1)$.

A first mathematical study of rank-based models was carried out by Banner, Fernholz and Karatzas~\cite{banner}. There, the emphasis was laid on the particular choice for the growth rate coefficients
\begin{equation}\label{eq:atlas}
  \gamma^1_n = ng, \qquad \gamma^2_n = \cdots = \gamma^n_n = 0,
\end{equation}
where $g > 0$. With this choice of coefficients, the smallest stock is responsible for the growth of the whole market, therefore, analogically to the ancient Greek myth, this model is called the {\em Atlas model}. Various generalizations of this model were introduced later, such as {\em hybrid Atlas models} by Ichiba, Papathanakos, Banner, Karatzas and Fernholz~\cite{ipbkf} (see also Fernholz, Ichiba and Karatzas~\cite{fik}), in which the growth rate and volatility processes depend both on the rank and on the index $i$ of a stock.

As far as the long time behaviour of the solution to~(\ref{eq:model},~\ref{eq:gammasigma}) is concerned, Banner, Fernholz and Karatzas~\cite{banner} described the marginal distribution of each stock in the long-term. Pal and Pitman~\cite{pp} and Jourdain and Malrieu~\cite{jm} described their joint distribution in the long-term for models in which all the stocks are assigned the same variance, and Ichiba, Papathanakos, Banner, Karatzas and Fernholz~\cite{ipbkf} extended these results to the case of a linearly decreasing variance coefficient with respect to the rank. Rates of convergence were provided by Ichiba, Pal and Shkolnikov~\cite{ips}. 

Generically, a necessary and sufficient condition ensuring the long-term stability of rank-based models~(\ref{eq:model},~\ref{eq:gammasigma}) is that
\begin{equation}\label{eq:stabAtlas}
  \forall k \in \{1, \ldots, n-1\}, \qquad \sum_{j=1}^k (\gamma^j_n - g_n) > 0,
\end{equation}
where $g_n := (1/n)\sum_{j=1}^n \gamma^j_n$ is the mean growth rate of the processes $(Y^1_n(t))_{t \geq 0}, \ldots, (Y^n_n(t))_{t \geq 0}$. This condition somehow expresses the fact that the growth rate of small stocks is larger than the mean growth rate of the market, while the growth rate of large stocks is smaller than the mean growth rate of the market. This is known as the {\em size effect}, see~\cite[p.~86]{fernholz}. From the economic point of view, this effect is a natural consequence of {\em rebalancing}, that is, the fact that investors buy stocks when their prices are low and sell stocks with large prices.

Similarly to the size effect on growth rates, the variance of small stocks is also empirically observed to be larger than the variance of large stocks. Throughout this article, we shall refer to this phenomenon as the {\em volatility size effect}. As an example, in~\cite[Figure~13.6]{ferkar}, the variance is observed to be linearly decreasing with respect to the rank, which motivates the stability result of~\cite{ipbkf}. 

Several models have been introduced to capture the growth rate and volatility size effects, see for instance the so-called {\em Volatility-Stabilized Model} by Fernholz and Karatzas~\cite[Section~12]{ferkar}, which was later on discussed by Pal~\cite{palAAP11}, Shkolnikov~\cite{shkSPA13} and Sarantsev~\cite{sarAF}. As we shall see below, both rebalancing and the volatility size effect play a key role in the analysis of portfolio performance.

\subsubsection{Capital distribution curves}\label{sss:intro:capital} For all $i \in \{1, \ldots, n\}$, the market weight at time $t \geq 0$ of the $i$-th stock is defined by
\begin{equation*}
  \mu^i_n(t) := \frac{X^i_n(t)}{X^1_n(t) + \cdots + X^n_n(t)}.  
\end{equation*}
The {\em capital distribution curve} is the logarithmic representation of the market weights rearranged by decreasing order, namely the curve $\log k \mapsto \log \mu^{[k]}_n(t)$, where we recall that $[k]$ refers to the index of the stock with $k$-th largest capitalization at time $t$; that is to say, $X^{[1]}_n(t) \geq \cdots \geq X^{[n]}_n(t)$. 

The actual capital distribution curves for the stocks traded on the NYSE, the AMEX and the NASDAQ stock market between 1929 and 1999 were described by Fernholz~\cite[Figure~5.1, p.~95]{fernholz}. They exhibit a remarkable stability over time, and indicate, at least for the largest stocks, a Pareto-like distribution of the capital, which is a common observation in the economic literature~\cite[Section~5.6]{fernholz}.

This Pareto-like distribution was recovered for the Atlas model~\eqref{eq:atlas} by Fernholz~\cite[Example~5.3.3, pp.~103-104]{fernholz}. In the case of rank-based models~(\ref{eq:model},~\ref{eq:gammasigma}) with constant variance coefficients, Chatterjee and Pal~\cite{chapalPTRF} adressed the asymptotic behaviour, when $n$ grows to infinity, of the stationary distribution of $(\mu^{[1]}_n(t), \ldots, \mu^{[n]}_n(t))_{t \geq 0}$. They observed the following {\em phase transition} phenomenon: depending on the growth rate coefficients,
\begin{itemize}
  \item either the largest stock dominates the market and monopolizes all the capital,
  \item or most of the capital is spread among a few leading stocks,
  \item or the market weight of every stock vanishes.
\end{itemize}
In the second case above, the distribution of the capital between the few leading stocks was also proven to exhibit a Pareto-like distribution.

\subsubsection{Portfolio performance}\label{sss:intro:portfolio} A portfolio rule on an equity market is a strategy prescribing the proportion of wealth to be invested in each stock. In particular, the {\em equally weighted portfolio} assigns the same weight to all stocks, while the {\em market portfolio} is given by market weights. Due to the fact that these two strategies can easily be implemented, they are of importance for practitioners and have aroused many empirical and theoretical studies. 

From the empirical point of view, it has been observed that the equally weighted portfolio generally outperforms the market portfolio (`beats the market') under various indicators; we refer to the work by Plyakha, Uppal and Vilkov~\cite{plyakha} for a study of the major U.S. equity indices over the last four decades. From the theoretical point of view, it is commonly believed that the equally weighted portfolio beating the market is due to rebalancing: indeed, the market portfolio tends to invest more capital in large stocks, while the equally weighted portfolio is insensitive to this effect; see the preprint by Pal and Wong~\cite{palwong} and the references therein. As far as the Atlas model is concerned, the performance of the equally weighted portfolio and the market portfolio was addressed by Banner, Fernholz and Karatzas~\cite{banner}, who essentially confirmed that the equally weighted portfolio beats the market.

\subsection{Model and results}\label{ss:intro:model} We complete this introduction by giving a proper definition of the mean-field Atlas model and providing an overview of our results.

\subsubsection{The mean-field Atlas model}\label{sss:intro:mfa} Let $\gamma, \sigma : [0,1] \to \R$ be continuous functions; $\gamma$ is the {\em growth rate function}, $\sigma$ is the {\em volatility function}. The function $\sigma^2$ shall be called the {\em variance function}. The mean-field Atlas model consists of the rank-based model~(\ref{eq:model},~\ref{eq:gammasigma}), with growth rate and volatility coefficients given by, for all $n \geq 1$,
\begin{equation}\label{eq:mfgammasigma}
  \forall j \in \{1, \ldots, n\}, \qquad \gamma^j_n = \gamma(j/n), \qquad \sigma^j_n = \sigma(j/n);
\end{equation}
and initial log-capitalizations $Y^1_n(0), \ldots, Y^n_n(0)$ i.i.d. according to a given probability distribution $m$ on $\R$. It is well defined as soon as $\sigma^2(u) > 0$ for all $u \in [0,1]$, which we shall refer to as the uniform ellipticity assumption~\eqref{hyp:UE} in the sequel.

For all $j \in \{1, \ldots, n\}$, for all $t \geq 0$, $Y^{(j)}_n(t)$ is the empirical quantile of order $j/n$ of the vector $(Y^1_n(t), \ldots, Y^n_n(t))$, so that the growth rate and volatility of the log-capitalization process $Y^{(j)}_n$ is a function of $j/n$. From the point of view of economical modelling, we argue that mean-field coefficients~\eqref{eq:mfgammasigma} are reasonable choices for large rank-based models as they describe {\em weak interactions} between the stocks, in the sense that the larger the market is, the smaller the individual influence of a stock on another is.

\begin{rk}\label{rk:atlas} 
  Let us emphasize that the mean-field Atlas model is not a generalization of the genuine Atlas model~\eqref{eq:atlas}: formally, to recover (in the large size limit) the growth rate coefficients~\eqref{eq:atlas} from the mean-field coefficients~\eqref{eq:mfgammasigma}, one should replace the growth rate function $\gamma$ with $g \delta_0$, where $\delta_0$ is the Dirac distribution in $0$. Of course, this is not a function and therefore the Atlas model cannot be rigorously described in terms of mean-field coefficients. However, mean-field approximations of the Atlas model can be introduced by using the growth rate function
  \begin{equation*}
    \gamma_{\alpha}(u) := g(\alpha+1)(1-u)^{\alpha}, \qquad g > 0,
  \end{equation*} 
  where $\alpha > 0$ is the {\em Atlas index}: the larger it is, the more the growth rate concentrates on small stocks. This mean-field approximation of the Atlas model is used in Example~\ref{expl:portfolio} below.
\end{rk}

\subsubsection{Results and outline of the article}\label{sss:intro:results} Section~\ref{s:meanfield} is dedicated to the description of the asymptotic behaviour of the mean-field Atlas model in the large size limit. This issue was first addressed by Shkolnikov~\cite{shkolnikov} for stationary initial distributions, and then by the authors~\cite{jourey} for generical initial distributions (see also the recent article by Dembo, Shkolnikov, Varadhan and Zeitouni~\cite{dsvz}). The following {\em propagation of chaos} phenomenon was observed: when $n$ grows to infinity, the log-capitalization processes asymptotically behave like independent copies of a stochastic process $(Y(t))_{t \geq 0}$, such that, for all $t \geq 0$, 
\begin{equation}\label{eq:EY}
  \Exp(Y(t)) = \Exp(Y(0)) + gt,
\end{equation}
where
\begin{equation}\label{def:g}
  g := \int_{u=0}^1 \gamma(u)\dd u
\end{equation}
is the {\em market mean growth rate}. In other words, the chaoticity of the i.i.d. initial conditions is asymptotically propagated to the log-capitalization processes when their number is large. We first recall this result, and then describe the long time behaviour of the fluctuation $\tY(t)$ of $Y(t)$ around $gt$. Under a size effect assumption of the same nature as~\eqref{eq:stabAtlas}, we prove that the law of $\tY(t)$ converges toward an explicit equilibrium distribution. We also discuss the shape of the tails of this equilibrium distribution in $-\infty$ and $+\infty$.

In Section~\ref{s:Pi}, we define the {\em weighted capital measure} $\Pi_n^p(t)$ by
\begin{equation}\label{eq:Pin}
  \Pi_n^p(t) := \sum_{j=1}^n \frac{(X_n^{(j)}(t))^p}{(X_n^1(t))^p + \cdots + (X_n^n(t))^p} \delta_{j/n},
\end{equation}
for all {\em diversity indices} $p \geq 0$. When $p=1$, we drop the superscript notation and only refer to $\Pi_n(t)$ as the {\em capital measure}.

The weighted capital measure is a random probability measure on $[0,1]$. Our study of capital distribution curves and portfolio performance is based on the analysis of $\Pi_n^p(t)$ when $n \to +\infty$ and $t \to +\infty$. We first use our propagation of chaos result to derive a law of large numbers for $\Pi_n^p(t)$; namely, we prove that
\begin{equation*}
  \lim_{n \to +\infty} \Pi_n^p(t) = \Pi^p(t),
\end{equation*}
where the {\em asymptotic weighted capital measure} $\Pi^p(t)$ is a deterministic probability distribution on $[0,1]$, with an explicit expression in terms of the law of $Y(t)$. Then, we address the long time behaviour of $\Pi^p(t)$, and prove that there exists a {\em critical diversity index} $\pc \geq 0$ such that:
\begin{itemize}
  \item if $p \in [0,\pc)$, then
  \begin{equation*}
    \lim_{t \to +\infty} \Pi^p(t) = \bPi^p,
  \end{equation*}
  where the {\em long-term asymptotic weighted capital measure} $\bPi^p$ is a probability distribution on $[0,1]$, with an explicit density with respect to the Lebesgue measure on $[0,1]$, given in terms of the equilibrium distribution introduced above,
  \item if $p > \pc$, then
  \begin{equation*}
    \lim_{t \to +\infty} \Pi^p(t) = \delta_1.
  \end{equation*}
\end{itemize}
We shall refer to the fact that the model behave differently whether $p \in [0,\pc)$ or $p>\pc$ as a {\em phase transition} phenomenon, and the case $p \in [0,\pc)$ (resp. $p=\pc$ and $p > \pc$) shall be called the {\em subcritical phase} (resp. {\em criticality} and the {\em supercritical phase}). 

We conclude the study of the weighted capital measure by discussing the validity of the interversion property
\begin{equation*}
  \lim_{t \to +\infty} \lim_{n \to +\infty} \Pi^p_n(t) = \lim_{n \to +\infty} \lim_{t \to +\infty} \Pi^p_n(t)
\end{equation*}
in Subsection~\ref{ss:interv}.

In Section~\ref{s:capital}, we study the distribution of the capital for the long-term asymptotic market. This relies on the analysis of the capital measure as follows. Recall from~\S\ref{sss:intro:capital} that the capital distribution curve describes the repartition of capital with respect to the rank of companies, ordered by size. For the sake of coherence with the works by Fernholz~\cite{fernholz} and Chatterjee and Pal~\cite{chapalPTRF}, the companies are ranked with respect to the decreasing order of their size: $\mu^{[1]}_n(t) \geq \cdots \geq \mu^{[n]}_n(t)$. We recall that one can choose $[k] = (n-k+1)$.

For $u,v \in [0,1]$ with $u \leq v$, the proportion of capital held by companies ranked between $nu$ and $nv$ is roughly
\begin{equation*}
  \sum_{nu \leq k \leq nv} \mu^{[k]}_n(t) = \sum_{nu \leq k \leq nv} \mu^{(n-k+1)}_n(t) \simeq \sum_{n(1-v) \leq j \leq n(1-u)} \mu^{(j)}_n(t) = \crochet{\ind{1-v \leq \cdot \leq 1-u}}{\Pi_n(t)},
\end{equation*}
which explicits the link between the capital distribution curves and the capital measure $\Pi_n(t)$. In order to describe the long-term capital distribution on large markets, we use the results of Section~\ref{s:Pi} on the long time behaviour of the asymptotic capital measure $\Pi(t)$.

Interestingly, the phase transition for the asymptotic weighted capital measure derived in Section~\ref{s:Pi} results in the same phenomenon as was observed by Chatterjee and Pal~\cite{chapalPTRF} (see~\S\ref{sss:intro:capital} above). Yet we provide a different, and complementary, description. In particular, in the case where the market weight of every stock vanishes, we introduce the capital density $\bmu : [0,1] \to [0,+\infty)$ such that the proportion of capital held by the companies ranked between $nu$ and $n(u+\dd u)$ is given by $\bmu(u)\dd u$ in the long-term asymptotic market. The study of the capital density allows us to recover the Pareto-like shape of capital distribution curves, similar to the ones obtained by Fernholz.

We finally address the performance of portfolio rules in Section~\ref{s:portfolio}. We first introduce a family of portfolio rules, called {\em $p$-diversity weighted portfolios}, interpolating between the equally weighted and the market portfolio. The performance of such portfolios is described in terms of the weighted capital measures. Therefore, based on the results of Section~\ref{s:Pi}, we obtain a law of large numbers for the growth rates of these portfolios. Then, we analyse the long time behaviour of these asymptotic growth rates. 

As far as the discussion led in~\S\ref{sss:intro:portfolio} is concerned, we draw the following conclusions: in the limit of a large market, the relative performance of the equally weighted portfolio with respect to the market portfolio only depends on the volatility structure of the market model, and no longer on the growth rate. In particular, if the variance of a stock is a nonincreasing function of its capitalization, which matches the volatility size effect described in~\S\ref{sss:intro:lt}, then we recover the fact that the equally weighted portfolio beats the market. However, we also provide an example of a model, where large stocks have large variance, in which the market portfolio outperforms the equally weighted portfolio, in spite of rebalancing.


\section{The mean-field Atlas model}\label{s:meanfield}

In this section, we give a general description of the limit of the mean-field Atlas model when the number of companies $n$ grows to infinity, laying particular emphasis on the long time behaviour of the market. Our analysis is based on the theoretical study~\cite{jourey}, the main results of which shall be recalled whenever needed. Notations and conventions are set up in Subsection~\ref{ss:prelim}. The description of the large market asymptotics is made in Subsection~\ref{ss:chaos}, and its long time behaviour is discussed in Subsection~\ref{ss:equimarket}.

\subsection{Preliminaries}\label{ss:prelim} We first set up some notations and conventions.

\subsubsection{Assumptions} Let us introduce and discuss the various assumptions that we shall use on the initial distribution $m$ and the coefficients $\gamma$ and $\sigma$ of the mean-field Atlas model.

Following~\cite{bass}, a sufficient condition for the system~\eqref{eq:model} to be defined in the mean-field Atlas model is the following {\em uniform ellipticity} assumption
\begin{equation}\label{hyp:UE}
  \forall u \in [0,1], \qquad \sigma^2(u)>0.
  \tag{UE}
\end{equation}
A weakening of this assumption, allowing degeneracies in $0$ and $1$, is discussed in~\cite[Chapitres 5 et 6]{these}.

The law of large numbers for the weighted capital measure requires integrability conditions on the powers of the capitalization processes. These conditions are propagated from integrability conditions on the powers of initial capitalizations, therefore we shall assume that the common probability distribution $m$ of the initial log-capitalizations $Y^1_n(0), \ldots, Y^n_n(0)$ satisfies
\begin{equation}\label{hyp:H}
  \forall p \geq 0, \qquad \int_{y \in \R} \ex{py} m(\dd y) < +\infty.
  \tag{H}
\end{equation}

We now define the function $\Gamma$ on $[0,1]$ by, for all $u \in [0,1]$,
\begin{equation*}
  \Gamma(u) := \int_{v=0}^u \gamma(v)\dd v.
\end{equation*}
Then, the long-term stability of large markets is ensured by the following equilibrium assumptions~\eqref{hyp:E1} and~\eqref{hyp:E2}. The first one is the continuous equivalent of~\eqref{eq:stabAtlas}, namely
\begin{equation}\label{hyp:E1}
  \forall u \in (0,1), \qquad \Gamma(u) - gu > 0,
  \tag{E1}
\end{equation}
where we recall that $g$ is the market mean growth rate defined in~\eqref{def:g}. Note that~\eqref{hyp:E1} is the continuous translation of the size effect, and that combining this assumption with the continuity of $\gamma$ implies that $\gamma(0) \geq g \geq \gamma(1)$: in average, small stocks grow faster than the market, while large stocks grow slower than the market. In particular, if the growth rate function $\gamma$ is decreasing on $[0,1]$, then Assumption~\eqref{hyp:E1} is satisfied.

The second equilibrium condition writes
\begin{equation}\label{hyp:E2}
  \int_{u=0}^{1/2} \frac{u}{|\Gamma(u)-gu|} \dd u + \int_{u=1/2}^1 \frac{1-u}{|\Gamma(u)-gu|}\dd u < +\infty,
  \tag{E2}
\end{equation}
and ensures integrability properties for the equilibrium distribution. Note that under Assumption~\eqref{hyp:E1} and because of the continuity of $\gamma$, a sufficient condition for~\eqref{hyp:E2} to hold is $\gamma(0) > g > \gamma(1)$, which is a stronger expression of the size effect.

Let us finally note that the growth rate function corresponding to the mean-field approximation of the Atlas model introduced in Remark~\ref{rk:atlas} satisfies the equilibrium conditions~\eqref{hyp:E1} and~\eqref{hyp:E2} for all $\alpha > 0$.

\subsubsection{Notations} For all $T>0$, the space of continuous sample-paths $C([0,T],\R)$ is endowed with the sup norm $||\cdot||_{\infty}$, and the space $C([0,+\infty),\R)$ is provided with the topology of the locally uniform convergence. For all $k \geq 1$, the set of probability distributions on $C([0,+\infty),\R^k)$ is denoted by $\P(C([0,+\infty),\R^k))$ and endowed with the topology of weak convergence. The marginal distribution of $P \in \P(C([0,+\infty),\R))$ at time $t \geq 0$ is denoted by $P_t$. The cumulative distribution function of $P_t$ is denoted by $F_t := H*P_t$, where $H*\cdot$ refers to the convolution with the Heaviside function $H(y) := \ind{y \geq 0}$. For all nonincreasing function $a : \R \to \R$, the pseudo-inverse of $a$ is defined by $a^{-1}(u) := \inf\{y \in \R : a(y) > u\}$.

For all $q \in [1,+\infty)$, the $q$-Wasserstein distance between two cumulative distribution functions $F$ and $G$ on $\R$ is defined by
\begin{equation}\label{eq:coupl}
  W_q(F,G) := \inf_{(X,Y) \in \mathrm{Coupl}(F,G)} \left(\Exp(|X-Y|^q)\right)^{1/q},
\end{equation}
where $\mathrm{Coupl}(F,G)$ refers to the set of random pairs $(X,Y)$ with marginal cumulative distribution functions $F$ and $G$, see Rachev and Rüschendorf~\cite{rachrusch}. The right-hand side above can actually be rewritten in terms of the pseudo-inverse functions $F^{-1}$ and $G^{-1}$ as follows: given a uniform random variable $U$ on $[0,1]$, an optimal coupling is provided by the random pair $(F^{-1}(U),G^{-1}(U)) \in \mathrm{Coupl}(F,G)$~\cite[Theorem~3.1.2, p.~109]{rachrusch}, so that
\begin{equation}\label{eq:wasserstein}
  W_q(F,G) = \left(\int_{u=0}^1 |F^{-1}(u)-G^{-1}(u)|^q\dd u\right)^{1/q}.
\end{equation}

Finally, if $\Pi$ refers to a probability distribution on $[0,1]$, for all measurable and bounded function $f : [0,1] \to \R$, we denote
\begin{equation*}
  \crochet{f}{\Pi} := \int_{u=0}^1 f(u) \Pi(\dd u).
\end{equation*}

\subsection{Propagation of chaos and nonlinear log-capitalization process}\label{ss:chaos} We first recall the following propagation of chaos result from~\cite[Corollary~2.13]{jourey}. For an introduction to the propagation of chaos phenomenon, we refer to the lecture notes by Sznitman~\cite{sznitman}.

\begin{theo}\label{theo:chaos}
  Let us assume that the variance function $\sigma^2$ satisfies the uniform ellipticity condition~\eqref{hyp:UE}, and that the probability distribution $m$ admits a finite first order moment. Recall that $Y^1_n(0), \ldots, Y^n_n(0)$ are i.i.d. according to $m$.
  \begin{itemize}
    \item There exists a unique weak solution $(Y(t))_{t \geq 0}$ to the stochastic differential equation, nonlinear in the sense of McKean,
    \begin{equation}\label{eq:nlSDE}
      \left\{\begin{aligned}
        & \dd Y(t) = \gamma(F_t(Y(t)))\dd t + \sigma(F_t(Y(t))) \dd B(t),\\
        & \text{$F_t=H*P_t$ is the cumulative distribution function of $Y(t)$},
      \end{aligned}\right.
    \end{equation}
    where $Y(0)$ is distributed according to $m$ and $(B(t))_{t \geq 0}$ is a standard brownian motion in $\R$ independent of $Y(0)$. Let $P \in \P(C([0,+\infty),\R))$ denote the law of $(Y(t))_{t \geq 0}$.
    \item For any finite set $\{i_1, \ldots, i_k\}$ of distinct indices, the joint law of $(Y_n^{i_1}(t), \ldots, Y_n^{i_k}(t))_{t \geq 0}$ converges weakly, in $\P(C([0,+\infty),\R^k))$, to the law $P^{\otimes k}$ of $k$ independent copies of the process $(Y(t))_{t \geq 0}$.
    \item Finally, $\dd t$-almost everywhere, the probability distribution $P_t$ is absolutely continuous with respect to the Lebesgue measure on $\R$.
  \end{itemize}
\end{theo}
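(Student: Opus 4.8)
The plan is to recognise the system as a mean-field interacting diffusion in which the interaction is carried by the empirical cumulative distribution function. Writing $F^n_t := H*\bigl(\tfrac1n\sum_{k=1}^n\delta_{Y^k_n(t)}\bigr)$, one has, at the $\dd t$-a.e.\ times where the $Y^k_n(t)$ are pairwise distinct, $\gamma^i_n(t)=\gamma\bigl(F^n_t(Y^i_n(t))\bigr)$ and $\sigma^i_n(t)=\sigma\bigl(F^n_t(Y^i_n(t))\bigr)$, the self-inclusion in the rank contributing only a harmless $1/n$. The natural limit is then the McKean--Vlasov equation~\eqref{eq:nlSDE}, whose nonlinearity enters only through the CDF $F_t$ of the unknown marginal law.

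First I would establish existence and uniqueness for~\eqref{eq:nlSDE} through its Fokker--Planck equation. Setting $\Sigma(u):=\int_0^u\sigma^2(v)\,\dd v$, an It\^o computation shows that the CDF of any solution of~\eqref{eq:nlSDE} solves, in the distributional sense, the scalar quasilinear Cauchy problem
\[
  \partial_t F_t + \partial_y\bigl(\Gamma(F_t)\bigr) = \tfrac12\,\partial_{yy}\bigl(\Sigma(F_t)\bigr),\qquad F_0=H*m,
\]
whose flux $\Gamma$ and diffusion $\Sigma$ are $C^1$ on the relevant range $[0,1]$ and which is \emph{uniformly parabolic} under~\eqref{hyp:UE}, since $\Sigma'=\sigma^2$ is bounded away from $0$. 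Classical theory of such equations yields a unique solution that remains a cumulative distribution function for all times and is, for $t>0$, a bounded function locally H\"older continuous in $(t,y)$; in particular $F_t$ is then continuous in $y$, so $P_t$ is atomless and, by parabolic regularity — or by Krylov-type estimates for the uniformly elliptic, bounded-coefficient diffusion obtained once $F$ is frozen — absolutely continuous for every $t>0$, which is the third assertion. Conversely, given this $F$, the SDE~\eqref{eq:nlSDE} has \emph{fixed} bounded coefficients $(t,y)\mapsto\bigl(\gamma(F_t(y)),\sigma(F_t(y))\bigr)$, continuous in $y$ for $t>0$, hence a unique weak solution; the marginal CDFs of that solution solve the same linear parabolic Cauchy problem as $F$ does, so they coincide with $F$, which closes the fixed point. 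Uniqueness for~\eqref{eq:nlSDE} follows likewise: any weak solution has marginal CDFs solving the quasilinear problem above, hence equal to $F$, and thus solves an SDE with fixed coefficients for which weak uniqueness holds.

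Then I would prove propagation of chaos by the compactness--uniqueness method. Tightness of the laws of the empirical path-measures $\tfrac1n\sum_{i=1}^n\delta_{(Y^i_n(t))_{t\ge0}}$ — equivalently of the $k$-particle laws — is immediate, since the drift and diffusion coefficients are bounded by $\|\gamma\|_\infty$ and $\|\sigma\|_\infty$, so that Aldous' criterion applies and the finite first moment of $m$ propagates uniformly in $n$. Next, one identifies any limit point as concentrated on solutions of the nonlinear martingale problem attached to~\eqref{eq:nlSDE}; passing to the limit in the drift and bracket terms is routine, and the only real difficulty is the convergence of the rank-based coefficient $\gamma\bigl(F^n_t(Y^i_n(t))\bigr)$. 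The point is that the value of a cumulative distribution function at a fixed point is a continuous functional of the underlying path-measure only at measures whose time-$t$ marginal has no atom there; one therefore needs a uniform-in-$n$ non-concentration estimate for the one-particle marginals $\mathrm{Law}(Y^1_n(t))$, which follows from~\eqref{hyp:UE} via Gaussian-type bounds for the non-degenerate diffusions $Y^i_n$, to conclude that the limiting time-marginals are atomless. I expect this non-concentration step — classical for rank-based systems but genuinely delicate — to be the main obstacle. Finally, uniqueness for the nonlinear martingale problem, from the previous step, forces every limit point to be the Dirac mass at $P$, which is exactly the announced weak convergence of the $k$-marginals to $P^{\otimes k}$. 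Since the statement is quoted from~\cite{jourey}, the actual proof there may instead, having first analysed the above quasilinear parabolic equation, couple the particles directly with i.i.d.\ copies of $(Y(t))_{t\ge0}$ driven by the same Brownian motions and close a Gr\"onwall estimate, using the regularity of $F_t$ to absorb the lack of Lipschitz continuity in $y$.
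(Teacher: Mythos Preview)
The paper does not contain a proof of this theorem: it is explicitly introduced as ``the following propagation of chaos result from~\cite[Corollary~2.13]{jourey}'' and is simply recalled, not re-proved. You already noticed this at the end of your proposal.

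Your sketch is nonetheless a faithful outline of the strategy carried out in~\cite{jourey}: pass to the quasilinear parabolic equation $\partial_t F_t + \partial_y \Gamma(F_t) = \tfrac12 \partial_{yy}\Sigma(F_t)$ for the CDF, use uniform parabolicity from~\eqref{hyp:UE} to obtain well-posedness and the regularity giving the absolute continuity of $P_t$, then close the fixed point by freezing $F$ in the SDE; and for propagation of chaos, use a compactness--uniqueness argument on the empirical measures, the delicate step being indeed the atomlessness of the limiting time-marginals so that $F^n_t(Y^i_n(t))$ passes to the limit. Your alternative guess --- a direct coupling with i.i.d.\ copies of $(Y(t))_{t\ge 0}$ and a Gr\"onwall argument exploiting the regularity of $F_t$ --- is also in the spirit of~\cite{jourey}. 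So there is nothing to correct here; there is simply no proof in the present paper to compare against.
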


Nonlinearity in the sense of McKean has to be understood as the fact that the coefficients in the stochastic differential equation~\eqref{eq:nlSDE} depend on the entire law of the random variable $Y(t)$ through its cumulative distribution function $F_t$. Therefore, the process $(Y(t))_{t \geq 0}$ shall be called the {\em nonlinear log-capitalization process}. 

\begin{rk}\label{rk:chaos}
  The following remarks on the nonlinear log-capitalization process can be formulated.
  \begin{itemize}
    \item The equality~\eqref{eq:nlSDE} rewrites
    \begin{equation*}
      Y(t) = Y(0) + \int_{s=0}^t \gamma(F_s(Y(s)))\dd s + \int_{s=0}^t \sigma(F_s(Y(s)))\dd B(s).
    \end{equation*}
    On the one hand, since $\sigma$ is bounded, then the stochastic integral is a centered martingale. On the other hand, by Theorem~\ref{theo:chaos}, $\dd s$-almost everywhere, the probability distribution $P_s$ does not weight points so that $F_s(Y(s))$ has a uniform distribution on $[0,1]$. As a consequence, taking the expectation of the equality above yields~\eqref{eq:EY}, \ie $\Exp(Y(t))=\Exp(Y(0))+gt$.
        
    \item Theorem~\ref{theo:chaos} rewrites as a law of large numbers for the empirical distribution $\nu^n$ of the process of log-capitalizations $(Y^1_n(t), \ldots, Y^n_n(t))_{t \geq 0}$, defined as a random variable in $\P(C([0,+\infty),\R))$ by
\begin{equation*}
  \nu^n := \frac{1}{n} \sum_{i=1}^n \delta_{(Y^i_n(t))_{t \geq 0}}.
\end{equation*}
  Indeed, following~\cite[Proposition~2.2, p.~177]{sznitman}, the statement of Theorem~\ref{theo:chaos} combined with the exchangeability of the processes $(Y^1_n(t))_{t \geq 0}, \ldots, (Y^n_n(t))_{t \geq 0}$ imply that $\nu^n$ converges in probability, in $\P(C([0,+\infty),\R))$, to $P$.
  \end{itemize}
\end{rk}

\subsection{Long time behaviour of the nonlinear log-capitalization process}\label{ss:equimarket} We now describe the long time behaviour of the nonlinear log-capitalization process $(Y(t))_{t \geq 0}$. Because of~\eqref{eq:EY}, it is necessary to introduce a shift by defining $\tY(t) := Y(t) - gt$, for all $t \geq 0$. The process $(\tY(t))_{t \geq 0}$ is called the {\em fluctuation} process, and we denote by $\tF_t(y) := \Pr(\tY(t) \leq y)$ its marginal cumulative distribution function, so that
\begin{equation}\label{eq:tF}
  \tF_t(y) = \Pr(Y(t)-gt \leq y) = F_t(y+gt).
\end{equation}
We first note that $(\tY(t))_{t \geq 0}$ satisfies the same nonlinear stochastic differential equation~\eqref{eq:nlSDE} as $(Y(t))_{t \geq 0}$, with shifted growth rate $\tilde{\gamma}(u) := \gamma(u)-g$.

\begin{lem}
  Under the assumptions of Theorem~\ref{theo:chaos}, the fluctuation $(\tY(t))_{t \geq 0}$ solves the nonlinear stochastic differential equation 
  \begin{equation*}
    \left\{\begin{aligned}
      & \dd \tY(t) = \tilde{\gamma}(\tF_t(\tY(t)))\dd t + \sigma(\tF_t(\tY(t))) \dd B(t),\\
      & \text{$\tF_t$ is the cumulative distribution function of $\tY(t)$},
    \end{aligned}\right.
  \end{equation*}
  where $\tY(0)$ is distributed according to $m$ and $(B(t))_{t \geq 0}$ is a standard brownian motion in $\R$ independent of $\tY(0)$. Moreover, weak uniqueness holds for this nonlinear stochastic differential equation.
\end{lem}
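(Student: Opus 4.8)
\emph{Proof proposal.} The plan is to obtain the equation for $(\tY(t))_{t \geq 0}$ from the equation~\eqref{eq:nlSDE} satisfied by $(Y(t))_{t \geq 0}$ through the deterministic time-dependent translation $y \mapsto y - gt$, and then to transport weak uniqueness along the same bijection.

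First I would note that $\tY(t) = Y(t) - gt$ differs from $Y(t)$ by a deterministic $C^1$ function of $t$, so no Itô correction arises and $\dd\tY(t) = \dd Y(t) - g\,\dd t$. Substituting~\eqref{eq:nlSDE} gives $\dd\tY(t) = \bigl(\gamma(F_t(Y(t))) - g\bigr)\,\dd t + \sigma(F_t(Y(t)))\,\dd B(t)$, with the same Brownian motion $B$ and with $\tY(0) = Y(0)$ distributed according to $m$. It then remains to re-express the coefficients in terms of $\tF_t$ and $\tY(t)$: by construction $\tF_t$ is the cumulative distribution function of $\tY(t)$, and identity~\eqref{eq:tF}, that is $\tF_t(y) = F_t(y+gt)$, yields $F_t(Y(t)) = F_t(\tY(t)+gt) = \tF_t(\tY(t))$ almost surely. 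Recalling $\tilde{\gamma} = \gamma - g$, this is exactly the announced nonlinear stochastic differential equation, and the McKean self-consistency constraint is automatically met since $\tF_t$ is, by definition, the cumulative distribution function of the solution's marginal law.

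For weak uniqueness I would run the computation backwards: the map $(y(t))_{t \geq 0} \mapsto (y(t)+gt)_{t \geq 0}$ is a bicontinuous bijection of $C([0,+\infty),\R)$, and if $\tY$ is any weak solution of the shifted equation with marginal cumulative distribution functions $\tF_t$, then $Y(t) := \tY(t)+gt$ has marginal cumulative distribution functions $F_t(y) = \tF_t(y-gt)$, so that $\tF_t(\tY(t)) = F_t(Y(t))$ and $\tilde{\gamma}(\tF_t(\tY(t))) + g = \gamma(F_t(Y(t)))$; hence $Y$ solves~\eqref{eq:nlSDE} with the same Brownian motion and the same initial law $m$. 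Weak uniqueness for~\eqref{eq:nlSDE}, which is part of Theorem~\ref{theo:chaos}, then fixes the law of $(Y(t))_{t \geq 0}$ on path space, and pushing forward by the inverse translation fixes the law of $(\tY(t))_{t \geq 0}$.

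I do not expect a serious obstacle: the only delicate point is that, because the equation is nonlinear in the sense of McKean, one must check that the change of variables preserves the fixed-point coupling between the coefficients and the marginal laws, rather than performing a purely pathwise substitution as for a classical stochastic differential equation — and this is precisely what the relation~\eqref{eq:tF} guarantees.
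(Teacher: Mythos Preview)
Your proposal is correct and matches the paper's own argument almost exactly: the paper also derives the equation for $(\tY(t))_{t\geq 0}$ as ``a straightforward consequence of~\eqref{eq:tF}'', and obtains weak uniqueness from Theorem~\ref{theo:chaos}. The only cosmetic difference is that, for uniqueness, the paper invokes Theorem~\ref{theo:chaos} \emph{directly} with $\gamma$ replaced by $\tilde{\gamma}$ (the hypotheses are unchanged since $\tilde{\gamma}$ is still continuous and $\sigma$ is the same), whereas you transport a putative solution back to~\eqref{eq:nlSDE} via $Y(t)=\tY(t)+gt$ and then apply uniqueness there; both routes are equivalent and equally short.
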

\begin{proof}
  Weak uniqueness for the nonlinear stochastic differential equation follows from the application of Theorem~\ref{theo:chaos} with $\gamma$ replaced with $\tilde{\gamma}$, and it is a straightforward consequence of~\eqref{eq:tF} that $(\tY(t))_{t \geq 0}$ is a solution.
\end{proof}

To describe the long time behaviour of the fluctuation $(\tY(t))_{t \geq 0}$, we now assume that the uniform ellipticity condition~\eqref{hyp:UE} and the equilibrium condition~\eqref{hyp:E1} hold. This enables us to define the continuous, increasing function $\Psi$ on $(0,1)$ by
\begin{equation*}
  \forall u \in (0,1), \qquad \Psi(u) := \int_{v=1/2}^u \frac{\sigma^2(v)}{2(\Gamma(v)-gv)} \dd v.
\end{equation*}
Note that the pseudo-inverse function $\Psi^{-1}$ is a cumulative distribution function on $\R$. Its first order moment writes
\begin{equation*}
  \int_{u=0}^1 \Psi(u)\dd u = -\int_{u=0}^{1/2} u \frac{\sigma^2(u)}{2(\Gamma(u)-gu)}\dd u + \int_{u=1/2}^1 (1-u) \frac{\sigma^2(u)}{2(\Gamma(u)-gu)}\dd u,
\end{equation*}
and, because of Assumption~\eqref{hyp:UE}, it is finite if and only if Assumption~\eqref{hyp:E2} holds.

The cumulative distribution function $\Psi^{-1}$ is a weak solution to the stationary version of the Cauchy problem satisfied by $F_t$, see~\cite{jourey}. It actually describes the equilibrium of the fluctuation process, which is made precise in the following theorem from~\cite[Section~4]{jourey}.

\begin{theo}\label{theo:equi}
  Let us assume that:
  \begin{itemize}
    \item the function $\sigma^2$ satisfies the uniform ellipticity condition~\eqref{hyp:UE},
    \item the function $\gamma$ satisfies the equilibrium conditions~\eqref{hyp:E1} and~\eqref{hyp:E2},
    \item the function $\sigma^2$ is $C^2$ on $[0,1]$, the function $\gamma$ is $C^1$ on $[0,1]$ and there exists $\beta > 0$ such that the functions $(\sigma^2)''$ and $\gamma'$ are $\beta$-Hölder continuous,
    \item the probability distribution $m$ has a finite first order moment, and, for all $p \geq 0$, $W_p(H*m, \Psi^{-1}) < +\infty$.
  \end{itemize}
  Then, the pseudo-inverse $\Psi^{-1}$ of the function $\Psi$ introduced above is the cumulative distribution function of a probability distribution with positive density on $\R$. 
  
  Let us now write $\tF_t$ for the cumulative distribution function of $\tY(t)$, and define $\tF_{\infty}$ by $\tF_{\infty}(y) = \Psi^{-1}(y+\bar{y})$, where $\bar{y}$ is chosen so that 
  \begin{equation*}
    \int_{y \in \R} y \tF'_{\infty}(y)\dd y = \int_{y \in \R} y m(\dd y).
  \end{equation*} 
  
  Then, for all $p \geq 1$,
  \begin{equation*}
    \lim_{t \to +\infty} W_p(\tF_t, \tF_{\infty}) = 0.
  \end{equation*}
\end{theo}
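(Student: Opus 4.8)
The plan is to pass to the quantile (pseudo-inverse) formulation of the nonlinear equation, where the claim becomes an $L^p([0,1])$-convergence statement; this can be attacked by a monotonicity/contraction estimate for the quantile flow, combined with a compactness argument of LaSalle type that identifies the limit as the unique equilibrium.

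First I would record that, by~\eqref{eq:wasserstein}, $W_p(\tF_t,\tF_\infty)^p = \int_0^1 |X(t,u)-X_\infty(u)|^p\,\dd u$, where $X(t,u):=\tF_t^{-1}(u)$ and $X_\infty(u):=\tF_\infty^{-1}(u)=\Psi(u)-\bar y$. The standard computation on the nonlinear Fokker--Planck equation satisfied by $\tF_t$ (as in~\cite{jourey}) shows that $X(t,\cdot)$ solves the quasilinear parabolic equation $\partial_t X = \tilde\gamma(u) - \tfrac12\partial_u\!\bigl(\sigma^2(u)/\partial_u X\bigr)$ on $(0,1)$, and that $X_\infty$ is precisely its stationary solution, since $\partial_u X_\infty=\Psi'(u)=\sigma^2(u)/(2(\Gamma(u)-gu))$ by definition of $\Psi$. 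This is where the hypotheses enter: \eqref{hyp:UE}, \eqref{hyp:E1}, \eqref{hyp:E2} guarantee that $X_\infty$ is a genuine quantile function (finite moments of all orders, positive density on $\R$), while the $C^2$/Hölder regularity of $\sigma^2$ and $\gamma$ and the assumption $W_p(H*m,\Psi^{-1})<+\infty$ for all $p$ ensure the Cauchy problem is well posed in a function class containing both $X(t,\cdot)$ and $X_\infty$, and that $\sup_{t\geq 0}W_p(\tF_t,\tF_\infty)<+\infty$ for every $p$ (propagation of the Wasserstein bound), so that $\{\tF_t\}_{t\geq 0}$ is tight and uniformly integrable at every order.

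The key estimate is a contraction principle. Setting $w(t,u):=X(t,u)-X_\infty(u)$, the difference solves a linear parabolic equation $\partial_t w=\partial_u(a(t,u)\partial_u w)$ with $a(t,u)=\tfrac12\sigma^2(u)/\bigl(\partial_u X(t,u)\,\partial_u X_\infty(u)\bigr)>0$; testing against $|w|^{q-2}w$ and checking that the boundary terms at $u\in\{0,1\}$ vanish yields that $t\mapsto\|w(t,\cdot)\|_{L^q([0,1])}$ is nonincreasing for every $q\in[1,+\infty)$, hence $t\mapsto W_p(\tF_t,\tF_\infty)$ is nonincreasing and converges to some $\ell\geq 0$. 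Then I would run the LaSalle argument: given $t_n\to+\infty$, tightness yields (up to extraction) $\tY(t_n)$ converging in law, and the uniform moment bounds promote this to convergence in $W_p$ for all $p$, say $\tF_{t_n}\to\nu$; using the weak stability of the nonlinear flow (Theorem~\ref{theo:chaos} applied with $\tilde\gamma$) together with the existence of $\lim_{s\to+\infty}W_p(\tF_s,\tF_\infty)$, one checks that $\nu$ is invariant for the nonlinear dynamics and has the same first moment as $m$. The uniqueness of such an invariant law — which follows from the fact that the stationary ODE $\partial_u X_\infty=\sigma^2(u)/(2(\Gamma(u)-gu)+C)$ has, under~\eqref{hyp:E1}--\eqref{hyp:E2}, a unique solution with integrable quantile function once the mean is prescribed — forces $\nu=\tF_\infty$, so $\ell=0$ and, by monotonicity, $W_p(\tF_t,\tF_\infty)\to 0$ for every $p\geq 1$.

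The main obstacle is the degeneracy of the problem at the endpoints $u\in\{0,1\}$: both $\partial_u X_\infty$ and the coefficient $a(t,u)$ may blow up or vanish there (recall $X_\infty(0^+)=-\infty$, $X_\infty(1^-)=+\infty$), so controlling the boundary terms in the contraction estimate, and establishing well-posedness of the quantile Cauchy problem in a class containing $X_\infty$, are genuinely delicate — this is the technical core of~\cite{jourey}, and the precise regularity assumptions in the statement are tailored to it. A secondary difficulty is making the identification of subsequential limits as invariant measures rigorous, since one must pass to the limit in the nonlinear coefficients $\tilde\gamma(\tF_s(\cdot))$ and $\sigma(\tF_s(\cdot))$. (If one instead wanted a convergence rate, one could try to upgrade the $q=2$ energy estimate using a weighted Poincaré inequality for the equilibrium quantile profile, but this is not needed for the stated result.)
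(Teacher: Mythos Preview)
The paper does not prove this theorem: it is quoted verbatim from~\cite[Section~4]{jourey} and no argument is supplied here beyond the heuristic identification of $\Psi^{-1}$ as a stationary solution. So there is no ``paper's own proof'' to compare against; what you have written is, in effect, a sketch of the strategy carried out in~\cite{jourey} itself, and you explicitly acknowledge this when you say the boundary analysis and well-posedness ``is the technical core of~\cite{jourey}''.

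That said, your outline is broadly faithful to the approach of~\cite{jourey}: pass to the pseudo-inverse/quantile formulation, exhibit $\Psi$ as the stationary profile, use an $L^p$-contraction estimate for the difference of two quantile solutions to get monotonicity of $t\mapsto W_p(\tF_t,\tF_\infty)$, and identify the limit via an invariance/uniqueness argument. Two caveats are worth flagging. First, your derivation of the linear equation $\partial_t w=\partial_u(a\,\partial_u w)$ for the difference $w=X-X_\infty$ is slightly too clean: the quantile equation is genuinely quasilinear, and the contraction in~\cite{jourey} proceeds by comparing two solutions of the nonlinear problem rather than by linearising around $X_\infty$; the monotonicity of $\|w\|_{L^q}$ still holds, but the justification is not the elementary ``test against $|w|^{q-2}w$'' you describe --- the boundary degeneracy at $u\in\{0,1\}$ is exactly where the regularity hypotheses on $\sigma^2$ and $\gamma$ are used, and this step is not short. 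Second, in the LaSalle step the uniqueness of the stationary quantile with prescribed mean is indeed what pins down the limit, but proving that subsequential weak limits are invariant for the nonlinear flow requires more than ``weak stability'' of Theorem~\ref{theo:chaos}: one needs continuity of the nonlinear semigroup in Wasserstein distance, which is again part of the machinery of~\cite{jourey}. Your sketch captures the architecture correctly but understates the analytic work needed at both of these points.
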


The probability distribution with density $(\Psi^{-1})'$ shall be referred to as the {\em equilibrium distribution}. We discuss the shape of its tails in the following remark.

\begin{rk}\label{rk:tail}
  Describing the tail of the equilibrium distribution in $+\infty$ amounts to describing the behaviour of $\Psi(u)$ when $u \uto 1$. Let us recall that, under Assumptions~\eqref{hyp:UE}, \eqref{hyp:E1} and~\eqref{hyp:E2}, $\gamma(1) \leq g$; so that the {\em critical diversity index} $\pc$ defined by
  \begin{equation}\label{eq:pc}
    \pc := \frac{2(g-\gamma(1))}{\sigma^2(1)}
  \end{equation}
  is nonnegative.
  \begin{itemize}
    \item If $\gamma(1) < g$, that is to say $\pc > 0$, then writing
    \begin{equation*}
      \Gamma(v) - gv = g(1-v) - \int_{w=v}^1 \gamma(w)\dd w
    \end{equation*}
    yields
    \begin{equation*}
      \Psi(u) = \int_{v=1/2}^u \frac{\sigma^2(v)}{2(\Gamma(v) - gv)}\dd v \Sim_{u \uto 1} \frac{\sigma^2(1)}{2(g-\gamma(1))} \int_{v=1/2}^u \frac{\dd v}{1-v} \Sim_{u \uto 1} -\frac{1}{\pc}\log(1-u), 
    \end{equation*}
    so that the tail of the equilibrium distribution in $+\infty$ is expected to be exponential with parameter $\pc$, that is to say, $1 - \Psi^{-1}(y)$ is expected to decay to $0$ at an exponential rate of order $\pc$.
    \item If $\gamma(1)=g$, that is to say $\pc=0$, then the tail of the equilibrium distribution in $+\infty$ is expected to be heavy, that is to say, $1 - \Psi^{-1}(y)$ is expected to decay to $0$ slower than any exponential rate.
  \end{itemize}
  
  Likewise, a symmetric phenomenon is observed for the tail of the equilibrium distribution in $-\infty$. The critical index $\qc$ defined by $\qc := 2(\gamma(0)-g)/\sigma^2(0)$ is nonnegative, and if $\qc > 0$, then 
  \begin{equation*}
    \Psi(u) \Sim_{u \dto 0} \frac{1}{\qc}\log(u),
  \end{equation*}
  so that, when $y \to -\infty$, $\Psi^{-1}(y)$ is expected to decay to $0$ at an exponential rate of order $\qc$. If $\qc = 0$, then the tail of the equilibrium distribution in $-\infty$ is expected to be heavy.
\end{rk}


\section{The weighted capital measure}\label{s:Pi}

For all $p \geq 0$, $t \geq 0$, the weighted capital measure $\Pi^p_n(t)$ is defined by~\eqref{eq:Pin}. For all measurable and bounded function $f : [0,1] \to \R$,
\begin{equation*}
  \crochet{f}{\Pi_n^p(t)} = \sum_{j=1}^n \frac{(X^{(j)}_n(t))^p}{(X^1_n(t))^p + \cdots + (X^n_n(t))^p}f\left(\frac{j}{n}\right) = \sum_{j=1}^n \frac{\ex{pY^{(j)}_n(t)}}{\ex{pY^1_n(t)} + \cdots + \ex{pY^n_n(t)}}f\left(\frac{j}{n}\right).
\end{equation*}

As is explained in Section~\ref{s:intro}, the capital measure is strongly related to the capital distribution curves. Likewise, we shall describe in Section~\ref{s:portfolio} below the link between the weighted capital measures and the performance of a family of portfolio rules. Therefore, it is of interest to describe the asymptotic behaviour of the weighted capital measure when the size of the market grows to infinity. This task is carried out in Subsection~\ref{ss:Pi:lln} by deriving a law of large numbers for $\Pi^p_n(t)$. The corresponding limit $\Pi^p(t)$ is referred to as the asymptotic weighted capital measure, and its long time behaviour is addressed in Subsection~\ref{ss:Pi:lt}. We discuss the validity of the interversion of the limits `$t \to +\infty$' and `$n \to +\infty$' in Subsection~\ref{ss:interv}.

\subsection{Law of large numbers}\label{ss:Pi:lln} We first address the limit, when $n$ grows to infinity, of $\Pi^p_n(t)$.

\begin{prop}\label{prop:lln:Pi}
  Let us assume that the conditions of Theorem~\ref{theo:chaos} are satisfied, and that the probability distribution $m$ satisfies the condition~\eqref{hyp:H}. Let us fix $T>0$ and $q \in [1,+\infty)$. Then, for all $p \geq 0$,
  \begin{itemize}
    \item there exists $C_T^p < +\infty$ such that
    \begin{equation}\label{eq:supexpY}
      \forall t \in [0,T], \qquad \Z^p(t) := \int_{u=0}^1 \ex{p F_t^{-1}(u)}\dd u = \Exp\left(\ex{p Y(t)}\right) \leq C_T^p,
    \end{equation}
    
    \item for all continuous function $f : [0,1] \to \R$, the process $(\scrochet{f}{\Pi^p_n(t)})_{t \in [0,T]}$ converges, in $L^q(C([0,T],\R))$, to the deterministic process $(\scrochet{f}{\Pi^p(t)})_{t \in [0,T]}$, where $\Pi^p(t)$ is the probability distribution with density $\exp(p F_t^{-1}(u)) / \Z^p(t)$ with respect to the Lebesgue measure on $[0,1]$.
  \end{itemize}
\end{prop}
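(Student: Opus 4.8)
The plan is to deduce everything from the propagation of chaos result (Theorem~\ref{theo:chaos}) combined with classical facts about $q$-Wasserstein convergence of empirical measures of i.i.d.\ samples. First I would establish the moment bound~\eqref{eq:supexpY}. The identity $\Z^p(t)=\Exp(\ex{pY(t)})$ is just the change of variables $u=F_t(y)$ in the integral $\int_0^1 \ex{pF_t^{-1}(u)}\dd u$; finiteness and the uniform-in-$t$ bound on $[0,T]$ should follow by writing $Y(t)=Y(0)+\int_0^t\gamma(F_s(Y(s)))\dd s+\int_0^t\sigma(F_s(Y(s)))\dd B(s)$, noting that $\gamma$ and $\sigma$ are bounded (continuous on $[0,1]$), so that $Y(t)\le Y(0)+\|\gamma\|_\infty T + M(t)$ with $M(t)$ a martingale with bounded quadratic variation $\le \|\sigma\|_\infty^2 t$; then $\Exp(\ex{pY(t)})\le \ex{p\|\gamma\|_\infty T}\Exp(\ex{pY(0)})\,\Exp(\ex{pM(t)})$, using independence of $Y(0)$ and $B$, and the exponential-martingale bound $\Exp(\ex{pM(t)-\frac{p^2}{2}\langle M\rangle_t})\le 1$ gives $\Exp(\ex{pM(t)})\le \ex{p^2\|\sigma\|_\infty^2 T/2}$. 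Assumption~\eqref{hyp:H} makes $\Exp(\ex{pY(0)})$ finite, yielding $C_T^p$.

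Next I would reduce the measure-valued statement to a one-dimensional one. Fix a continuous $f:[0,1]\to\R$. Write
\begin{equation*}
  \crochet{f}{\Pi^p_n(t)} = \frac{\frac1n\sum_{i=1}^n \ex{pY^i_n(t)}\,f(\hat F^n_t(Y^i_n(t)))}{\frac1n\sum_{i=1}^n \ex{pY^i_n(t)}},
\end{equation*}
where $\hat F^n_t$ is the empirical c.d.f.\ of $(Y^1_n(t),\dots,Y^n_n(t))$ (so that $f(\hat F^n_t(Y^i_n(t)))$ recovers $f(j/n)$ when $Y^i_n(t)=Y^{(j)}_n(t)$, up to the usual negligible correction for ties). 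Since by Theorem~\ref{theo:chaos} the empirical measure $\nu^n$ converges in probability to $P$ in $\P(C([0,+\infty),\R))$, the time-$t$ empirical measure converges weakly to $P_t$; because $P_t$ has no atoms ($\dd t$-a.e.), $\hat F^n_t\to F_t$ uniformly $\dd t$-a.e., and $f\circ F_t$ is the natural limit of $f\circ\hat F^n_t$. The clean way to carry this through is to couple: let $U$ be uniform on $[0,1]$, so $F_t^{-1}(U)$ has law $P_t$, and use that the empirical quantile function converges to $F_t^{-1}$ in every $L^q([0,1])$ when the sample is i.i.d.\ with a distribution having finite moments of all orders (which holds here by~\eqref{hyp:H}). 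Then both numerator and denominator of $\crochet{f}{\Pi^p_n(t)}$ converge (in $L^{q'}$ for every $q'$, by uniform integrability coming from the moment bounds on $\ex{pY^i_n}$ which are again controlled as in the previous paragraph, uniformly in $n$) to $\int_0^1 \ex{pF_t^{-1}(u)}f(u)\dd u$ and $\Z^p(t)$ respectively, the latter being bounded below by $\ex{p\int_0^1 F_t^{-1}}>0$ by Jensen. This identifies the pointwise-in-$t$ limit as $\crochet{f}{\Pi^p(t)}=\int_0^1 \ex{pF_t^{-1}(u)}f(u)\dd u/\Z^p(t)$, i.e.\ $\Pi^p(t)$ is the claimed probability measure.

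To upgrade pointwise-in-$t$ convergence in probability to convergence in $L^q(C([0,T],\R))$ of the whole path, I would proceed in two moves. First, a uniform moment bound: show $\Exp\big(\sup_{t\le T}(\frac1n\sum_i \ex{qpY^i_n(t)})\big)\le C$ uniformly in $n$, via Doob's maximal inequality applied to the exponential (sub)martingales $\ex{pY^i_n(t)}$ after subtracting drift — here one uses that $|\gamma|,|\sigma|$ are bounded so the ranked drift and diffusion coefficients are bounded regardless of $n$. This gives $L^q(C([0,T],\R))$-boundedness of $(\crochet{f}{\Pi^p_n})_n$ (note $|f|$ bounded, and the ratio is bounded by $\|f\|_\infty$ pathwise, so in fact the sequence is bounded in every $L^{q'}(C([0,T],\R))$ trivially, $\|\crochet{f}{\Pi^p_n(t)}\|\le\|f\|_\infty$). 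Second, since the target process $t\mapsto\crochet{f}{\Pi^p(t)}$ is deterministic and, one checks, continuous in $t$ (continuity of $t\mapsto P_t$ in Wasserstein distance, which propagates through $\gamma,\sigma$ bounded and continuous, plus dominated convergence using the moment bound), it suffices to show $\sup_{t\le T}|\crochet{f}{\Pi^p_n(t)}-\crochet{f}{\Pi^p(t)}|\to 0$ in probability; combined with the uniform boundedness (hence uniform integrability) this yields the $L^q$ convergence. The uniform-in-$t$ convergence in probability I would get from weak convergence of $\nu^n$ to $P$ in path space together with an equicontinuity/tightness argument: the map $\nu\mapsto (t\mapsto \crochet{f}{\Pi^p[\nu_t]})$ is continuous at $P$ for the relevant topology (using that $P_t$ is atomless $\dd t$-a.e.\ and the moment control to handle the unbounded weight $\ex{pY}$ — this is where one must be a little careful, truncating $\ex{py}$ at level $K$, controlling the tail uniformly via the moment bound, and letting $K\to\infty$).

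The main obstacle is precisely this last point: the functional $\nu\mapsto\crochet{f}{\Pi^p[\nu]}$ involves the \emph{unbounded} weight $\ex{py}$, so it is not a bounded continuous functional of the empirical measure and weak convergence does not apply directly. The fix — a truncation argument combined with the uniform exponential-moment bound~\eqref{eq:supexpY} (and its $L^q$-reinforced version with $qp$ in place of $p$) to show the truncation error vanishes uniformly in $n$ and in $t\in[0,T]$ — is the technical heart of the proof; everything else (the quantile-convergence, the drift/martingale estimates, the continuity of the limit) is routine given Theorems~\ref{theo:chaos} and~\ref{theo:equi} and assumption~\eqref{hyp:H}.
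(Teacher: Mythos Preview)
Your proposal is essentially correct and follows the same overall strategy as the paper: identical proof of the moment bound~\eqref{eq:supexpY}, decomposition of $\scrochet{f}{\Pi^p_n(t)}$ as a ratio of empirical averages, truncation of the unbounded weight $\ex{py}$ combined with uniform exponential-moment control (via Doob applied to exponential martingales) to handle the tails, and the observation that $|\scrochet{f}{\Pi^p_n(t)}|\le\|f\|_\infty$ to upgrade convergence in probability in $C([0,T],\R)$ to convergence in $L^q(C([0,T],\R))$.

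The one place where your sketch is vaguer than the paper is the passage from pointwise-in-$t$ convergence to uniform-in-$t$ convergence on $[0,T]$ for the \emph{truncated} part. You invoke ``continuity at $P$'' of the functional $\nu\mapsto(t\mapsto\scrochet{f}{\Pi^p[\nu_t]})$ together with an unspecified equicontinuity/tightness argument; this is the right target, but weak convergence of $\nu^n$ in path space only gives you marginal (hence finite-dimensional) convergence of the truncated processes, and you still need tightness in $C([0,T],\R)$ to conclude. The paper supplies this concretely: it proves tightness of the truncated processes via Kolmogorov's criterion, obtaining moment bounds on the time increments $|I_n^M(t)-I_n^M(s)|^\alpha$ through It\^o's formula on $\ex{pY^i_n}$ and the Burkholder--Davis--Gundy inequality. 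This is the missing ingredient in your sketch; your ``continuity at $P$'' route would ultimately need an input of this type. A minor side remark: the samples $Y^1_n(t),\ldots,Y^n_n(t)$ are exchangeable but not i.i.d.\ for finite $n$, so the ``empirical quantile of i.i.d.\ samples'' heuristic should be phrased via the empirical-measure convergence of Theorem~\ref{theo:chaos}, as the paper does.
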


The proof of Proposition~\ref{prop:lln:Pi} is detailed in Appendix~\ref{app:lln}. The probability distribution $\Pi^p(t)$ shall be called the {\em asymptotic weighted capital measure}.

\subsection{Long-term asymptotic capital measure}\label{ss:Pi:lt} We now address the long time behaviour of the asymptotic weighted capital measure $\Pi^p(t)$.

\subsubsection{Heuristic derivation}\label{sss:heuristic} Let us recall that the cumulative distribution function $\tF_t$ of the fluctuation $\tY(t) = Y(t) - gt$ writes $\tF_t(y) = F_t(y+gt)$. As a consequence, the density of the asymptotic weighted capital measure $\Pi^p(t)$ with respect to the Lebesgue measure on $[0,1]$ rewrites
\begin{equation*}
  \frac{\ex{p F_t^{-1}(u)}}{\displaystyle\int_{u=0}^1 \ex{p F_t^{-1}(u)}\dd u} = \frac{\ex{p(\tF_t^{-1}(u)+gt)}}{\displaystyle\int_{u=0}^1 \ex{p(\tF_t^{-1}(u)+gt)}\dd u} = \frac{\ex{p \tF_t^{-1}(u)}}{\displaystyle\int_{u=0}^1 \ex{p \tF_t^{-1}(u)}\dd u}.
\end{equation*}
Under appropriate assumptions, Theorem~\ref{theo:equi} asserts that $\tF_t$ converges, in Wasserstein distance, to $\tF_{\infty}$ defined by $\tF_{\infty}(y) = \Psi^{-1}(y+\bar{y})$, where $\bar{y}$ is chosen so that $\tF_{\infty}$ and $m$ have the same expectation. As a consequence, the asymptotic weighted capital measure $\Pi^p(t)$ is expected to converge to the probability distribution $\bPi^p$ with density
\begin{equation*}
  \frac{\ex{p \tF_{\infty}^{-1}(u)}}{\displaystyle\int_{u=0}^1 \ex{p \tF_{\infty}^{-1}(u)}\dd u} = \frac{\ex{p(\Psi(u)-\bar{y})}}{\displaystyle\int_{u=0}^1 \ex{p(\Psi(u)-\bar{y})}\dd u} = \frac{\ex{p\Psi(u)}}{\displaystyle\int_{u=0}^1 \ex{p\Psi(u)}\dd u},
\end{equation*}
as long as 
\begin{equation*}
  \bZ^p := \int_{u=0}^1 \ex{p\Psi(u)}\dd u < +\infty. 
\end{equation*}
Following the first-order analysis of the equilibrium distribution carried out in Remark~\ref{rk:tail}, this should be the case for $p \in [0, \pc)$. On the contrary, if $p > \pc$, then $\bZ^p$ is expected to be infinite, and all the mass of $\Pi^p(t)$ should concentrates around $1$ when $t$ grows to infinity, so that $\Pi^p(t)$ is rather expected to converge to the Dirac distribution $\delta_1$. This {\em phase transition phenomenon} is made precise in~\S\ref{sss:phtrans} below.

\subsubsection{Phase transition}\label{sss:phtrans} Let us recall that the critical diversity index $\pc \geq 0$ was defined in~\eqref{eq:pc}.

\begin{lem}\label{lem:bPip}
  Let us assume that the uniform ellipticity condition~\eqref{hyp:UE}, that the equilibrium condition~\eqref{hyp:E1} hold, and that the critical diversity index $\pc$ is positive. Then, for all $p \in [0,\pc)$, $\bZ^p < +\infty$, and we denote by $\bPi^p$ the probability distribution with density $\exp(p \Psi(u))/\bZ^p$ with respect to the Lebesgue measure on $[0,1]$.
  
  Moreover, for all continuous function $f : [0,1] \to \R$, the function $p \mapsto \scrochet{f}{\bPi^p}$ is continuous on $[0,\pc)$, and:
  \begin{itemize}
    \item if $\bZ^{\pc} = +\infty$, then $\lim_{p \uto \pc} \scrochet{f}{\bPi^p} = f(1)$,
    \item if $\bZ^{\pc} < +\infty$, we denote by $\bPi^{\pc}$ the probability distribution with density $\exp(\pc \Psi(u))/\bZ^{\pc}$ with respect to the Lebesgue measure on $[0,1]$, and then $\lim_{p \uto \pc} \scrochet{f}{\bPi^p} = \scrochet{f}{\bPi^{\pc}}$.
  \end{itemize}
\end{lem}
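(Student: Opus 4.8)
The plan is to deal with the three assertions in turn, all resting on the asymptotics of $\Psi$ near $1$ established in Remark~\ref{rk:tail}. First, for $p \in [0,\pc)$ with $\pc>0$: from $\Psi(u) \sim_{u\uto1} -\pc^{-1}\log(1-u)$ we get $\ex{p\Psi(u)} \sim (1-u)^{-p/\pc}$ up to a slowly varying factor, and since $p/\pc<1$ this is integrable near $1$; near $0$ there is no singularity (if $\qc>0$ the integrand decays like $u^{p/\qc}$, and if $\qc=0$ it is even better behaved, but in any case $\Psi$ is continuous and bounded above on any $[0,1-\varepsilon]$ interval away from $1$, and bounded below near $0$ is not needed for the finiteness of $\bZ^p$), so $\bZ^p<+\infty$. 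This makes $\bPi^p$ a well-defined probability measure.

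Next, continuity of $p \mapsto \scrochet{f}{\bPi^p}$ on $[0,\pc)$. Write $\scrochet{f}{\bPi^p} = \left(\int_0^1 f(u)\ex{p\Psi(u)}\dd u\right)/\bZ^p$. On a compact subinterval $[0,p_0] \subset [0,\pc)$, choose $p_1 \in (p_0,\pc)$; then $\ex{p\Psi(u)} \le \ex{p_1\Psi^+(u)} + 1$ is dominated by an integrable function uniformly in $p \le p_0$ (using $\Psi(u)\le 0$ near $0$ so the positive part controls things near $1$ only), and since $p \mapsto \ex{p\Psi(u)}$ is continuous for each $u$, dominated convergence gives continuity of both numerator and denominator; the denominator is bounded below by $\bZ^0=1>0$, so the quotient is continuous.

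The main content is the limit as $p \uto \pc$. When $\bZ^{\pc}=+\infty$: the idea is that mass escapes to $u=1$. Fix $\varepsilon>0$. Since $f$ is continuous, pick $\delta$ with $|f(u)-f(1)|<\varepsilon$ for $u \in [1-\delta,1]$. Then $|\scrochet{f}{\bPi^p} - f(1)| \le \varepsilon + 2\|f\|_\infty \bPi^p([0,1-\delta])$, and $\bPi^p([0,1-\delta]) = \left(\int_0^{1-\delta}\ex{p\Psi(u)}\dd u\right)/\bZ^p$. The numerator is bounded uniformly in $p\le\pc$ by $\int_0^{1-\delta}\ex{\pc\Psi^+(u)}\dd u<+\infty$ (finite since the only possible singularity was at $1$), while $\bZ^p \to \bZ^{\pc}=+\infty$ by monotone convergence as $p\uto\pc$; hence $\bPi^p([0,1-\delta])\to 0$, and letting $\varepsilon\dto0$ finishes this case. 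When $\bZ^{\pc}<+\infty$: the same domination $\ex{p\Psi(u)}\le\ex{\pc\Psi^+(u)}+1 \in L^1([0,1])$ holds for all $p\le\pc$, so dominated convergence applies directly to numerator and denominator as $p\uto\pc$, giving $\scrochet{f}{\bPi^p}\to\scrochet{f}{\bPi^{\pc}}$.

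The step I expect to require the most care is the bookkeeping of the domination near $u=1$: one must make precise that $\Psi \sim -\pc^{-1}\log(1-u)$ upgrades to a genuine two-sided bound $c_1(1-u)^{-p/\pc} \le \ex{p\Psi(u)} \le c_2(1-u)^{-p/\pc}$ on a neighbourhood of $1$ (so that the threshold $p<\pc$ versus $p=\pc$ is exactly what separates $\bZ^p$ finite from the borderline behaviour), and to handle the endpoint $u=0$ cleanly without needing the $\qc$ asymptotics — only an upper bound on $\Psi$ near $0$, which follows from $\Psi(1/2)=0$ and monotonicity of $\Psi$ (increasing), giving $\Psi(u)\le0$ for $u\le1/2$ outright. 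With that in hand the dominated-convergence and monotone-convergence arguments above are routine.
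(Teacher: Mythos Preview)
Your proposal is correct and follows essentially the same route as the paper: finiteness of $\bZ^p$ from the asymptotics of $\Psi$ near $1$ (this is the paper's Lemma~\ref{lem:Zp}), continuity and the case $\bZ^{\pc}<+\infty$ by dominated convergence against a majorant of the form $\ex{q[\Psi]^+}$ with $q\in(p,\pc)$, and the case $\bZ^{\pc}=+\infty$ by splitting near $u=1$ and using divergence of the normaliser.

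Two small points to tighten. First, the asymptotic $\Psi(u)\sim -\pc^{-1}\log(1-u)$ does \emph{not} upgrade to the two-sided bound $c_1(1-u)^{-p/\pc}\le\ex{p\Psi(u)}\le c_2(1-u)^{-p/\pc}$ with fixed constants that you write down at the end: the $o(\log(1-u))$ error may be unbounded. What the asymptotic \emph{does} give, and what suffices, is that for every $\epsilon>0$ there is a neighbourhood of $1$ on which $(1-u)^{-(1-\epsilon)p/\pc}\le\ex{p\Psi(u)}\le(1-u)^{-(1+\epsilon)p/\pc}$; choosing $\epsilon$ so that $(1+\epsilon)p/\pc<1$ then gives integrability. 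The paper makes exactly this explicit in Lemma~\ref{lem:Zp}. Second, your appeal to monotone convergence for $\bZ^p\to+\infty$ as $p\uto\pc$ is not quite direct, since $p\mapsto\ex{p\Psi(u)}$ is \emph{decreasing} on $\{u<1/2\}$ where $\Psi\le0$; either split at $1/2$ and apply monotone convergence on $[1/2,1]$ only, or (as the paper does) invoke Fatou's lemma, which sidesteps the issue.
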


The proof of Lemma~\ref{lem:bPip} is postponed to Appendix~\ref{app:longtime}. The probability distribution $\bPi^p$ shall be called the {\em long-term asymptotic weighted capital measure}.

\begin{expl}\label{expl:beta}
  We explicit the long-term asymptotic weighted capital measure for a constant variance function $\sigma^2$ and for $\gamma(u) = 1-2u$. For these coefficients, $g=0$ and the equilibrium distribution was computed in~\cite[Example~2.3]{jm}. In particular, the function $\Psi$ writes
  \begin{equation*}
    \Psi(u) = \int_{v=1/2}^u \frac{\sigma^2}{2v(1-v)}\dd v = \frac{1}{\pc}\log\left(\frac{u}{1-u}\right),
  \end{equation*}
  so that, for $p \in [0,\pc)$, $\bPi^p$ is the $\Beta(1+p/\pc,1-p/\pc)$ distribution. In addition, it is easily checked that $\bZ^{\pc} = +\infty$, so that $\bPi^p$ converges to the Dirac distribution in $1$ when $p$ approaches the critical diversity index $\pc$.

  \begin{figure}[ht]
    \includegraphics[width=15cm]{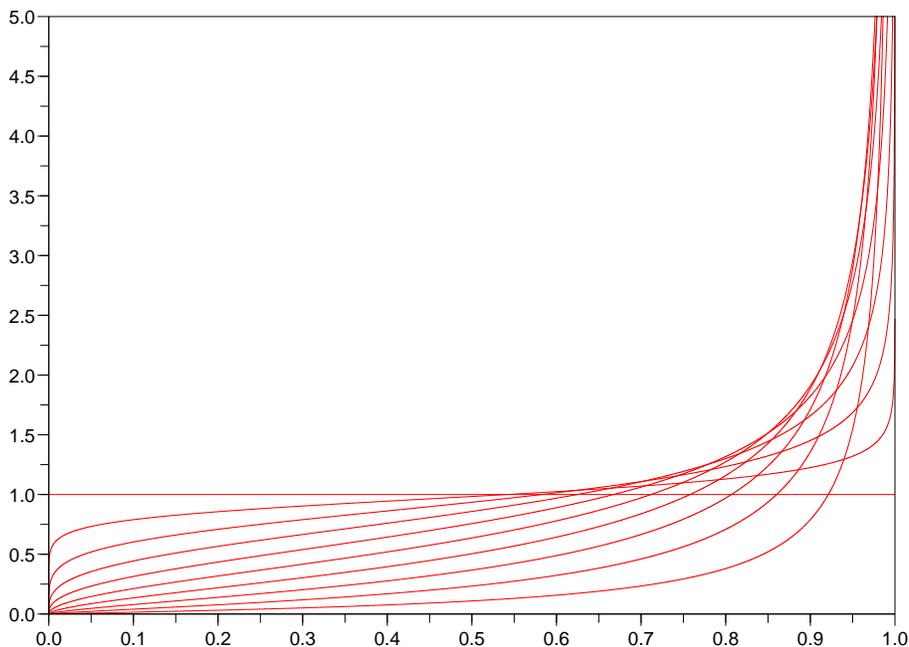}
    \caption{The density of $\bPi^p$ for a constant variance function $\sigma^2$ and $\gamma(u) = 1-2u$. The diversity index $p$ varies between $0$ and $\pc$. The uniform density is recovered for $p=0$, while the mass concentrates on the point $u=1$ when $p$ approaches the critical diversity index $\pc$.}
    \label{fig:beta}
  \end{figure}
\end{expl}

We now explicit the link between $\bPi^p$ and the long time behaviour of $\Pi^p(t)$. 

\begin{prop}\label{prop:phtrans}
  Let us assume that the conditions of Theorem~\ref{theo:equi} hold, and that the probability distribution $m$ satisfies the condition~\eqref{hyp:H}. Let $\pc \geq 0$ be defined by~\eqref{eq:pc}. Let $f : [0,1] \to \R$ be a continuous function, and $p \geq 0$.
  \begin{itemize}
    \item {\em Subcritical phase:} if $p \in [0,\pc)$, then
    \begin{equation*}
      \lim_{t \to +\infty} \crochet{f}{\Pi^p(t)} = \crochet{f}{\bPi^p},
    \end{equation*}
    where the probability distribution $\bPi^p$ is defined in Lemma~\ref{lem:bPip}.
    
    \item {\em Supercritical phase:} if $p > \pc$, then
    \begin{equation*}
      \lim_{t \to +\infty} \crochet{f}{\Pi^p(t)} = f(1).
    \end{equation*}
  
    \item {\em Criticality:} the long time behaviour of $\scrochet{f}{\Pi^{\pc}(t)}$ is described as follows:
    \begin{itemize}
      \item if $\bZ^{\pc} = +\infty$, by Lemma~\ref{lem:bPip}, $\lim_{p \uto \pc} \scrochet{f}{\bPi^p}=f(1)$, and then 
      \begin{equation*}
        \lim_{t \to +\infty} \crochet{f}{\Pi^{\pc}(t)} = f(1),
      \end{equation*}
      
      \item if $\bZ^{\pc} < +\infty$, by Lemma~\ref{lem:bPip}, $\lim_{p \uto \pc} \scrochet{f}{\bPi^p} = \scrochet{f}{\bPi^{\pc}}$, and then
      \begin{equation}\label{eq:encadre}
        f(1) \wedge \crochet{f}{\bPi^{\pc}} \leq \liminf_{t \to +\infty} \crochet{f}{\Pi^{\pc}(t)} \leq \limsup_{t \to +\infty} \crochet{f}{\Pi^{\pc}(t)} \leq f(1) \vee \crochet{f}{\bPi^{\pc}}.
      \end{equation} 
    \end{itemize}
  \end{itemize}  
\end{prop}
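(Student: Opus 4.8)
The plan is to work from the identity already recorded in~\S\ref{sss:heuristic},
\begin{equation*}
  \crochet{f}{\Pi^p(t)} = \frac{\displaystyle\int_{u=0}^1 f(u)\ex{p\tF_t^{-1}(u)}\dd u}{\displaystyle\int_{u=0}^1 \ex{p\tF_t^{-1}(u)}\dd u},
\end{equation*}
together with Theorem~\ref{theo:equi}, which under the present hypotheses gives $W_1(\tF_t,\tF_\infty)\to0$ with $\tF_\infty^{-1}(u)=\Psi(u)-\bar y$; since $\tF_\infty$ has positive density, hence is continuous and increasing, this forces $\tF_t^{-1}(u)\to\Psi(u)-\bar y$ for every $u\in(0,1)$. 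The case $p=0$ is trivial ($\Pi^0(t)$ is uniform), so assume $p>0$. A soft but crucial observation, used repeatedly, is that for each fixed $b\in(0,1)$ the monotonicity of $\tF_t^{-1}$ gives $\int_0^b\ex{p\tF_t^{-1}(u)}\dd u\le b\,\ex{p\tF_t^{-1}(b)}$, which is bounded uniformly in $t\ge0$ because $\tF_t^{-1}(b)$ converges.

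The main step is a uniform exponential moment bound: \emph{for every $q\in(0,\pc)$, $\sup_{t\ge0}\int_0^1\ex{q\tF_t^{-1}(u)}\dd u<+\infty$.} To prove it, note that by~\eqref{eq:supexpY} and $\tF_t(y)=F_t(y+gt)$ the function $t\mapsto\int_0^1\ex{q\tF_t^{-1}(u)}\dd u=\ex{-qgt}\Z^q(t)=\Exp(\ex{q\tY(t)})$ is finite and bounded on every compact time interval; Itô's formula applied to $\ex{q\tY(t)}$ in the nonlinear equation solved by $(\tY(t))_{t\ge0}$, together with the facts that the stochastic integral is then a true martingale and that $\dd s$-almost everywhere $\tF_s$ is continuous and $\tF_s(\tY(s))$ is uniform on $[0,1]$ (Remark~\ref{rk:chaos}), yields
\begin{equation*}
  \frac{\dd}{\dd s}\int_{u=0}^1\ex{q\tF_s^{-1}(u)}\dd u = \int_{u=0}^1\ex{q\tF_s^{-1}(u)}\left(q\tilde\gamma(u)+\frac{q^2}{2}\sigma^2(u)\right)\dd u.
\end{equation*}
At $u=1$ the bracket equals $\tfrac{q\sigma^2(1)}{2}(q-\pc)<0$, so there exist $u_0\in(0,1)$ and $a>0$ with $q\tilde\gamma+\tfrac{q^2}{2}\sigma^2\le-a$ on $[u_0,1]$; splitting the integral at $u_0$, controlling the part over $[0,u_0]$ by a constant via the previous paragraph and bounding the part over $[u_0,1]$ from above by $-a\bigl(\int_0^1-\int_0^{u_0}\bigr)\ex{q\tF_s^{-1}(u)}\dd u$, one obtains $\zeta'(s)\le C-a\,\zeta(s)$ for $\zeta(s):=\int_0^1\ex{q\tF_s^{-1}(u)}\dd u$, and Grönwall's lemma concludes. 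This is where I expect the real work to lie; the rest is routine.

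\textbf{Subcritical phase.} Pick $p'\in(p,\pc)$ and apply the bound above with $q=p'$: then $\{\ex{p\tF_t^{-1}}\}_{t\ge0}$ is bounded in $L^{p'/p}(0,1)$ with $p'/p>1$, hence uniformly integrable on $(0,1)$; since moreover $\ex{p\tF_t^{-1}(u)}\to\ex{p(\Psi(u)-\bar y)}$ for a.e.\ $u$, Vitali's theorem gives convergence in $L^1(0,1)$, and as $f$ is bounded both the numerator and the denominator in the displayed identity converge, respectively to $\ex{-p\bar y}\int_0^1 f(u)\ex{p\Psi(u)}\dd u$ and to $\ex{-p\bar y}\bZ^p\in(0,+\infty)$ (finiteness of $\bZ^p$ by Lemma~\ref{lem:bPip}). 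Taking the ratio gives $\crochet{f}{\Pi^p(t)}\to\crochet{f}{\bPi^p}$.

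\textbf{Supercritical phase and criticality.} When $\bZ^p=+\infty$ — that is, when $p>\pc$, or $p=\pc$ with $\bZ^{\pc}=+\infty$ — Fatou's lemma applied to $\tF_t^{-1}(u)\to\Psi(u)-\bar y$ gives $\int_0^1\ex{p\tF_t^{-1}(u)}\dd u\to\ex{-p\bar y}\bZ^p=+\infty$, while for every $\epsilon\in(0,1)$ the numerator over $[0,1-\epsilon]$ is bounded uniformly in $t$; hence $\Pi^p(t)([0,1-\epsilon])\to0$, so $\Pi^p(t)\Rightarrow\delta_1$ and, $f$ being continuous on $[0,1]$, $\crochet{f}{\Pi^p(t)}\to f(1)$. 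Finally, at criticality with $\bZ^{\pc}<+\infty$, the Grönwall argument degenerates (the bracket vanishes at $u=1$), so boundedness of $\int_0^1\ex{\pc\tF_t^{-1}}\dd u$ is not available; instead, along any sequence realizing the $\liminf$ or $\limsup$ of $\crochet{f}{\Pi^{\pc}(t)}$, extract a subsequence along which $\int_0^1\ex{\pc\tF_t^{-1}(u)}\dd u\to L\in[\ex{-\pc\bar y}\bZ^{\pc},+\infty]$ (lower bound by Fatou), then split the numerator at $1-\epsilon$, use dominated convergence on $[0,1-\epsilon]$ and the smallness of $|f-f(1)|$ on $[1-\epsilon,1]$, and let $\epsilon\downarrow0$: the subsequential limit of $\crochet{f}{\Pi^{\pc}(t)}$ is a convex combination of $\crochet{f}{\bPi^{\pc}}$ and $f(1)$ (equal to $f(1)$ when $L=+\infty$), which yields the two-sided estimate~\eqref{eq:encadre}.
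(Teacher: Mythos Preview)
Your proof is correct and follows essentially the same route as the paper's: the uniform exponential moment bound via It\^o/Gr\"onwall in the subcritical phase, Fatou's lemma for divergence of the normalizing constant in the supercritical phase and at criticality with $\bZ^{\pc}=+\infty$, and the subsequence/convex-combination argument at criticality with $\bZ^{\pc}<+\infty$. The only cosmetic difference is that you cut the integrals at a fixed level $1-\epsilon$ in $u$-space, whereas the paper cuts at a level $M$ in $y$-space (i.e.\ via the indicator $\{\tY(t)\le M\}$); both choices localize to a region where $\tF_t^{-1}$ is uniformly bounded, and the remaining arguments are identical.
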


The proof of Proposition~\ref{prop:phtrans} is postponed to Appendix~\ref{app:longtime}. The description of the long time behaviour of $\Pi^p(t)$ is summarized on Figure~\ref{fig:limt}. Note that, in the case $\bZ^{\pc} = +\infty$, the function $p \mapsto \lim_{t \to +\infty} \scrochet{f}{\Pi^p(t)}$ is defined and continuous on $[0,+\infty)$.

\begin{figure}[ht]
  \includegraphics[width=15cm]{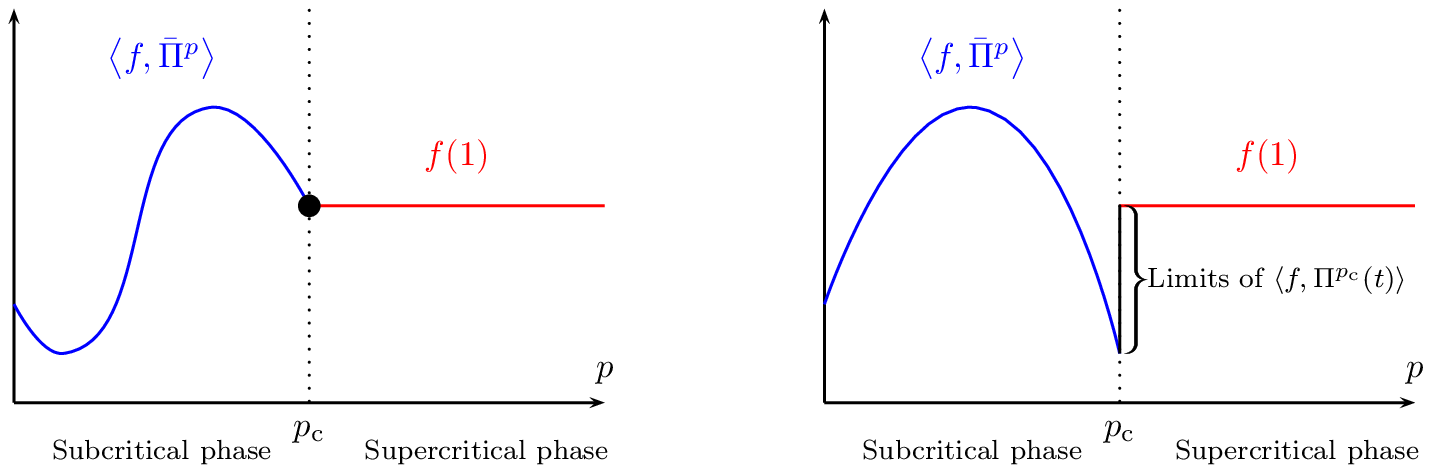}
  \caption{A schematic representation of the long time behaviour of $\scrochet{f}{\Pi^p(t)}$ according to Proposition~\ref{prop:phtrans}. On the left-hand figure, $\lim_{p \uto \pc} \scrochet{f}{\bPi^p} = f(1)$, so that $\lim_{t \to +\infty} \scrochet{f}{\Pi^{\pc}(t)}$ exists and its value is represented by the black dot. On the right-hand figure, $\lim_{p \uto \pc} \scrochet{f}{\bPi^p} \not= f(1)$ and the limit points of $\scrochet{f}{\Pi^{\pc}(t)}$ are located inside the gap between $\lim_{p \uto \pc} \scrochet{f}{\bPi^p} = \scrochet{f}{\bPi^{\pc}}$ and $f(1)$.}
  \label{fig:limt}
\end{figure}

\subsection{Interversion of the limits}\label{ss:interv} The results of Subsections~\ref{ss:Pi:lln} and~\ref{ss:Pi:lt} formally rewrite as follows:
\begin{equation*}
  \lim_{t \to +\infty} \lim_{n \to +\infty} \Pi^p_n(t) = \begin{cases}
    \bPi^p & \text{if $p \in [0, \pc)$,}\\
    \delta_1 & \text{if $p>\pc$.}
  \end{cases}
\end{equation*}
In order to compare our results with those of~\cite{banner, chapalPTRF}, it would be desirable to check that it holds
\begin{equation*}
  \lim_{n \to +\infty} \lim_{t \to +\infty} \Pi^p_n(t) = \begin{cases}
    \bPi^p & \text{if $p \in [0, \pc)$,}\\
    \delta_1 & \text{if $p>\pc$.}
  \end{cases}
\end{equation*}
In this subsection, we outline the arguments allowing to obtain such an interversion property. We shall restrict our study to the case of a constant variance function $\sigma^2$, for which the long time behaviour of the weighted capital measure, for a fixed size $n$ of the market, can be explicitely described thanks to the results of Pal and Pitman~\cite{pp} and Jourdain and Malrieu~\cite{jm}.

For $n \geq 1$, we denote by $(\bar{Y}^1_n(t), \ldots, \bar{Y}^n_n(t))$ the vector defined by, for all $i \in \{1, \ldots, n\}$,
\begin{equation*}
  \bar{Y}^i_n(t) := Y^i_n(t) - \frac{1}{n} \sum_{j=1}^n Y^j_n(t).
\end{equation*}
Then, under Assumption~\eqref{eq:stabAtlas}, which is the discrete version of~\eqref{hyp:E1}, Theorem~8 in~\cite{pp} asserts that $(\bar{Y}^1_n(t), \ldots, \bar{Y}^n_n(t))$ converges in distribution, when $t$ grows to infinity, to a random vector $(\bar{\xi}^1_n, \ldots, \bar{\xi}^n_n)$ with explicit symmetric law $P_n$. This convergence statement can be expressed in total variation~\cite{pp} or $\chi_2$ distance~\cite{jm}. As a consequence, for all continuous function $f : [0,1] \to \R$, for all $p \geq 0$,
\begin{equation*}
  \crochet{f}{\Pi^p_n(t)} = \sum_{j=1}^n \frac{\ex{p\bar{Y}^{(j)}_n(t)}}{\ex{p\bar{Y}^1_n(t)} + \cdots + \ex{p\bar{Y}^n_n(t)}}f\left(\frac{j}{n}\right)
\end{equation*}
depends continuously on $(\bar{Y}^1_n(t), \ldots, \bar{Y}^n_n(t))$, and therefore it converges in distribution, when $t$ grows to infinity, to
\begin{equation*}
  \sum_{j=1}^n \frac{\ex{p\bar{\xi}^{(j)}_n}}{\ex{p\bar{\xi}^1_n} + \cdots + \ex{p\bar{\xi}^n_n}}f\left(\frac{j}{n}\right) =: \crochet{f}{\bPi^p_n},
\end{equation*}
where $(\bar{\xi}^{(1)}_n, \ldots, \bar{\xi}^{(n)}_n)$ obviously refers to the increasing reordering of $(\bar{\xi}^1_n, \ldots, \bar{\xi}^n_n)$.

Following~\cite{chaoticity}, the stationary distribution $P_n$ of the process $(\bar{Y}^1_n(t), \ldots, \bar{Y}^n_n(t))_{t \geq 0}$ is chaotic with respect to the centered equilibrium distribution of the fluctuation process, given by Theorem~\ref{theo:equi}, and that we shall denote by $\bar{P}$. On account of the symmetry of $P_n$, Proposition~2.2 in~\cite{sznitman} allows us to rewrite the result of~\cite{chaoticity} as a law of large numbers for the (random) empirical distribution $\bar{\nu}^n$ of $(\bar{\xi}^1_n, \ldots, \bar{\xi}^n_n)$; namely,
\begin{equation*}
  \bar{\nu}^n := \frac{1}{n} \sum_{i=1}^n \delta_{\bar{\xi}^i_n}
\end{equation*}
converges in probability, in the space $\P(\R)$ of probability distributions on the real line, endowed with the topology of weak convergence, toward $\bar{P}$. Following the lines of the proof of Proposition~\ref{prop:lln:Pi}, but replacing the integrability properties of $\exp(p\tY(t))$ and $\exp(pY^1_n(t))$ with the results from~\cite{chaoticity} regarding the Laplace transforms of $\bar{P}$ and of the first marginal $P_{1,n}$ of $P_n$, we obtain that, in the subcritical case, for all $q \in [1,+\infty)$, the real valued random variable $\scrochet{f}{\bPi^p_n}$ converges in $L^q(\R)$ to 
\begin{equation*}
  \frac{\displaystyle \int_{u=0}^1 f(u) \ex{p (H*\bar{P})^{-1}(u)} \dd u}{\displaystyle \int_{u=0}^1 \ex{p (H*\bar{P})^{-1}(u)} \dd u} = \crochet{f}{\bPi^p}.
\end{equation*}
Similarly, in the supercritical case, arguing as in the proof of Proposition~\ref{prop:phtrans} allows us to obtain that $\scrochet{f}{\bPi^p_n}$ converges to $f(1)$.

As a conclusion, at least for models with a constant variance function, the limit when both $n$ and $t$ grow to infinity of quantities of the form $\scrochet{f}{\Pi^p_n(t)}$, with $p \not= \pc$, does not depend on the order in which the limits are taken.


\section{Capital distribution curves}\label{s:capital}

We pursue the discussion of~\S\ref{sss:intro:results} in order to describe the capital distribution in the long-term asymptotic mean-field Atlas model. Recall that for $p=1$, we write $\Pi_n(t)$ instead of $\Pi_n^1(t)$, and refer to this measure as the capital measure. Similarly, we write $\Pi(t)$ instead of $\Pi^1(t)$ for the asymptotic capital measure, and $\bPi$ instead of $\bPi^1$ for the long-term asymptotic capital measure when it is defined. The associated normalization constants are denoted by $\Z(t)$ and $\bZ$.

If $[k]$ refers to the index of the company with $k$-th largest capitalization, we define the {\em relative rank} of this company by $k/n \in [0,1]$. In the limit of large markets, we shall be interested by the proportion of capital held by companies with relative rank between $u$ and $u + \dd u$, for $u \in [0,1]$.

\subsection{Phase transition for the long-term asymptotic capital measure}\label{ss:capital:phtrans} We first recall the following technical lemma, which is a straighforward consequence of the Portmanteau theorem~\cite[Theorem~2.1, p.~11]{billingsley}.

\begin{lem}\label{lem:portmanteau}
  Let $(\Pi_n)_{n \geq 1}$ be a sequence of probability distributions on $\R$, such that $\Pi_n$ converges weakly to a probability distribution $\Pi$ on $[0,1]$. If $\Pi$ is absolutely continuous with respect to the Lebesgue measure on $[0,1]$, then for all interval $I \subset [0,1]$, $\Pi_n(I)$ converges to $\Pi(I)$.
\end{lem}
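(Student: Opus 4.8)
The plan is to invoke the Portmanteau theorem in the form asserting that weak convergence $\Pi_n \to \Pi$ on $\R$ is equivalent to the convergence $\Pi_n(A) \to \Pi(A)$ for every Borel set $A$ whose topological boundary $\partial A$ satisfies $\Pi(\partial A) = 0$; such sets are the so-called $\Pi$-continuity sets. The whole proof then reduces to verifying that every interval $I \subset [0,1]$ is a $\Pi$-continuity set.

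First I would note that the boundary $\partial I$ of an interval $I \subset \R$ (computed in $\R$) consists of at most two points, namely the endpoints of $I$. Since $\Pi$ is assumed absolutely continuous with respect to the Lebesgue measure on $[0,1]$, it assigns zero mass to every singleton, so that $\Pi(\partial I) = 0$. Applying the Portmanteau theorem to the set $I$ then gives $\lim_{n \to +\infty} \Pi_n(I) = \Pi(I)$, which is the claim.

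There is essentially no obstacle here: the only point requiring a word of care is that the $\Pi_n$ are a priori probability measures on the whole real line whereas $\Pi$ is concentrated on $[0,1]$, but this is harmless, since $\partial I$ is still reduced to the (at most two) endpoints of $I$ and these remain $\Pi$-null. One may also remark, for later use, that the argument applies verbatim to any finite union of intervals, the boundary of which is still a finite set and hence $\Pi$-null.
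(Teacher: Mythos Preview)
Your proof is correct and is exactly the argument the paper has in mind: the paper simply states that the lemma is a straightforward consequence of the Portmanteau theorem, and your proposal spells this out by identifying every interval $I \subset [0,1]$ as a $\Pi$-continuity set thanks to the absolute continuity of $\Pi$.
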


We deduce from~\S\ref{sss:intro:results}, Proposition~\ref{prop:lln:Pi} and Lemma~\ref{lem:portmanteau}, that for all $t \geq 0$ and $u,v \in [0,1]$ with $u \leq v$, the proportion of capital held by the companies with relative rank between $u$ and $v$ converges in probability to
\begin{equation*}
  \crochet{\ind{1-v \leq \cdot \leq 1-u}}{\Pi(t)} = \frac{1}{\Z(t)} \int_{w=u}^v \ex{F_t^{-1}(1-w)}\dd w.
\end{equation*} 
In particular, the proportion of capital held by the companies with relative rank between $u$ and $u+\dd u$ in a large market is roughly $\exp(F_t^{-1}(1-u))\dd u/\Z(t)$. Then, the phase transition phenomenon derived in Section~\ref{s:Pi} translates as follows. 
\begin{enumerate}[label=(\roman*), ref=\roman*]
  \item\label{it:capital:sub} If $\pc > 1$, then the asymptotic capital measure (with index $p=1$) is subcritical, so that in the long-term, the proportion of capital held by the companies with relative rank between $u$ and $u + \dd u$ is roughly $\bmu(u)\dd u$, where
  \begin{equation*}
    \bmu(u) := \frac{\ex{\Psi(1-u)}}{\bZ}
  \end{equation*}
  is the {\em capital density}.
  
  \item\label{it:capital:super} If $\pc < 1$, the asymptotic capital measure is supercritical, therefore $\Pi(t)$ converges weakly to the Dirac distribution $\delta_1$. As a consequence, all the capital concentrates on the relative rank $0$.
\end{enumerate} 

A detailed study of the capital density $\bmu$ is carried out in Subsection~\ref{ss:capital:curve}, and the Pareto-like distribution empirically observed is recovered. We establish a comparison between our results and the article by Chatterjee and Pal~\cite{chapalPTRF} in Subsection~\ref{ss:capital:chapal}.

\subsection{Capital distribution curve in the subcritical case}\label{ss:capital:curve} Let us assume that $\pc > 1$. Similarly to Fernholz~\cite[Section~5]{fernholz}, we call {\em capital distribution curve} the logarithmic plot of the function $u \mapsto \bmu(u)$. For the coefficients introduced in Example~\ref{expl:beta}, we draw the capital distribution curve on Figure~\ref{fig:capital}.

\begin{figure}[ht]
  \includegraphics[width=15cm]{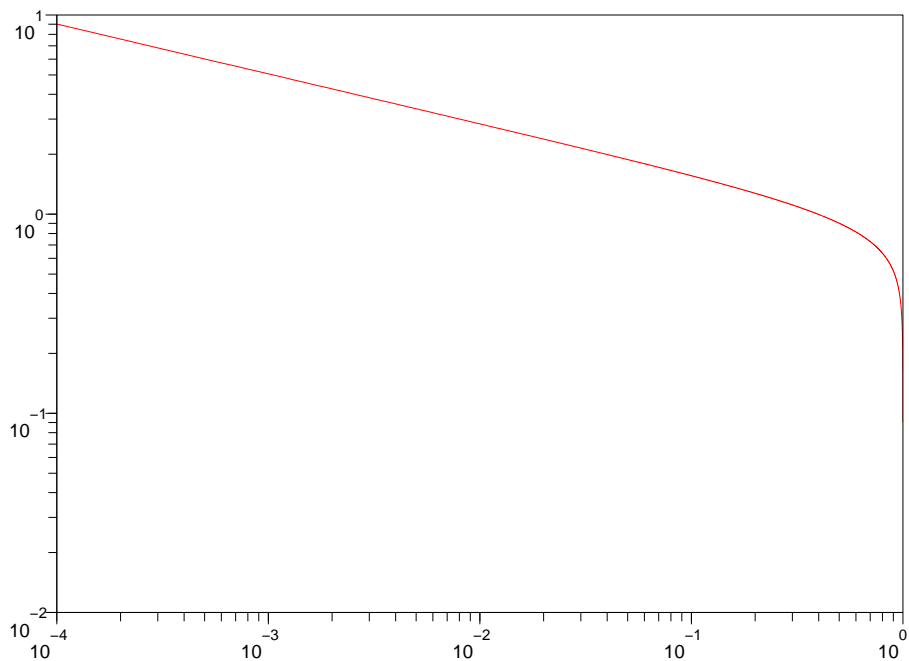}
  \caption{The logarithmic plot of the capital density $\bmu(u)$ for the coefficients of Example~\ref{expl:beta}, with $\pc=4$.}
  \label{fig:capital}
\end{figure}

Figure~\ref{fig:capital} has to be compared with the shape of the empirical curves obtained by Fernholz~\cite[Figure~5.1, p.~95]{fernholz}, which exhibit the following characteristics:
\begin{itemize}
  \item they are almost linear for stocks with small ranks, which indicates a Pareto-like distribution of the capital,
  \item they become concave for stocks with large ranks.
\end{itemize}
This behaviour is easily recovered for the long-term asymptotic capital measure.

\begin{prop}
  Let us assume that the conditions of Proposition~\ref{prop:phtrans} hold, with $\pc > 1$. Then:
  \begin{itemize}
    \item for $u \dto 0$, $\log \bmu(u) \sim (-1/\pc) \log u$, therefore the capital distribution curve is linear with slope $-1/\pc \in (-1, 0)$,
    
    \item for $u \uto 1$, $\log \bmu(u) \to -\infty$ and, if the parameter $\qc$ defined in Remark~\ref{rk:tail} is positive, then $\log \bmu(u) \sim (1/\qc) \log(1-u)$.
  \end{itemize}
\end{prop}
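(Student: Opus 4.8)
The plan is to reduce everything to the exact identity
\[
  \log\bmu(u) = \Psi(1-u) - \log\bZ, \qquad u \in (0,1),
\]
which holds because $\bmu(u) = \ex{\Psi(1-u)}/\bZ$ by definition, with $\bZ = \bZ^1 = \int_{u=0}^1 \ex{\Psi(u)}\dd u$ a finite positive constant: the hypothesis $\pc > 1$ places the index $p = 1$ in the subcritical range, so $\bZ < +\infty$ by Lemma~\ref{lem:bPip}. Hence the proposition is nothing but the one-sided asymptotics of $\Psi(v)$ as $v\uto 1$ and as $v\dto 0$ — both already recorded in Remark~\ref{rk:tail} — transported through the change of variable $v = 1-u$, together with the (trivial) observation that the additive constant $-\log\bZ$ is negligible once $\Psi(1-u)$ diverges.

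For $u\dto 0$ I would invoke the first item of Remark~\ref{rk:tail}: since $\pc > 1 > 0$, we have $\Psi(v) \Sim_{v\uto 1} -(1/\pc)\log(1-v)$, so setting $v = 1-u$ gives $\Psi(1-u) \Sim_{u\dto 0} -(1/\pc)\log u \to +\infty$. As $\log\bZ$ is fixed, $\log\bmu(u) = \Psi(1-u) - \log\bZ \Sim_{u\dto 0} (-1/\pc)\log u$. Recalling that the capital distribution curve is the graph of $\log\bmu$ against $\log u$ (with $u = k/n$ the relative rank, so $\log k = \log n + \log u$), this is exactly asymptotic linearity near $u = 0$ with slope $-1/\pc$, and $-1/\pc \in (-1,0)$ because $\pc > 1$.

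For $u\uto 1$ I would first check $\Psi(0^+) = -\infty$. This follows from Theorem~\ref{theo:equi}, under whose assumptions $\Psi^{-1}$ is a cumulative distribution function with positive density on all of $\R$, so its inverse $\Psi$ maps $(0,1)$ onto $\R$; as a fallback, from continuity and positivity of $\sigma^2$ and the elementary bound $0 < \Gamma(v)-gv = \int_{w=0}^v(\gamma(w)-g)\dd w \leq v\sup_{[0,1]}|\gamma-g|$ one gets $\sigma^2(v)/(2(\Gamma(v)-gv)) \geq c/v$ near $0$ for some $c > 0$, so the defining integral of $\Psi$ diverges to $-\infty$ as $v\dto 0$. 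Either way $\log\bmu(u) = \Psi(1-u) - \log\bZ \to -\infty$ as $u\uto 1$. If moreover $\qc > 0$, the last part of Remark~\ref{rk:tail} gives $\Psi(v) \Sim_{v\dto 0} (1/\qc)\log v$, whence with $v = 1-u$, $\Psi(1-u) \Sim_{u\uto 1} (1/\qc)\log(1-u) \to -\infty$, and again the constant $\log\bZ$ is swallowed, so $\log\bmu(u) \Sim_{u\uto 1} (1/\qc)\log(1-u)$.

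The argument is essentially bookkeeping once Remark~\ref{rk:tail} and Lemma~\ref{lem:bPip} are at hand, so I do not anticipate a serious obstacle; the only slightly delicate point is the justification of $\Psi(0^+) = -\infty$ in the borderline case $\qc = 0$, where Remark~\ref{rk:tail} supplies no rate, which is why I would keep the elementary lower bound $\sigma^2(v)/(2(\Gamma(v)-gv)) \geq c/v$ available as a self-contained substitute for the citation of Theorem~\ref{theo:equi}.
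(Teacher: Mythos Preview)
Your proof is correct and follows exactly the same route as the paper's: the paper's own proof is the two-line observation that $\log\bmu(u) = \Psi(1-u) - \log\bZ$ and that the asymptotics of $\Psi$ near $0$ and $1$ are recorded in Remark~\ref{rk:tail}. You have simply unpacked those two lines in full, and your extra care in justifying $\Psi(0^+)=-\infty$ when $\qc=0$ (via Theorem~\ref{theo:equi} or the elementary lower bound) fills a small gap that the paper's terse reference to Remark~\ref{rk:tail} leaves implicit.
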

\begin{proof}
  By the definition of the capital density, $\log \bmu(u) = \Psi(1-u) - \log \bZ$. Therefore, the asymptotic behaviour of $\log \bmu(u)$ in $0$ and $1$ is a straightforward consequence of Remark~\ref{rk:tail}.
\end{proof}

\subsection{The Chatterjee-Pal phase transition}\label{ss:capital:chapal} We now describe the phase transition observed by Chatterjee and Pal in~\cite{chapalPTRF}, and discuss the relation with the long time behaviour of our asymptotic capital measure. 

Let us assume that, for all $u \in [0,1]$, $\sigma(u)=1$, and that $\gamma$ is decreasing (so that it satisfies the equilibrium assumption~\eqref{hyp:E1}). Then, following~\cite{pp}, the process of market weights $(\mu^1_n(t), \ldots, \mu^n_n(t))_{t \geq 0}$, defined by
\begin{equation*}
  \mu^i_n(t) = \frac{X^i_n(t)}{X^1_n(t) + \cdots + X^n_n(t)},
\end{equation*}
admits a unique stationary distribution. Let us sample $(\mu^1_n, \ldots, \mu^n_n)$ from this distribution, and denote by $(\mu^{[1]}_n, \ldots, \mu^{[n]}_n)$ the decreasing reordering of $(\mu^1_n, \ldots, \mu^n_n)$. 

The set of sequences $(m_k)_{k \geq 1}$ such that $1 \geq m_1 \geq m_2 \geq \cdots \geq 0$ is endowed with the distance $d(m,m') := \sum_{k=1}^{+\infty} 2^{-k} (|m_k-m'_k| \wedge 1)$. Then, Theorem~2 in~\cite{chapalPTRF} writes as follows: when $n$ grows to infinity,
\begin{itemize}
  \item if $\pc=0$, then the sequence $(\mu^{[1]}_n, \ldots, \mu^{[n]}_n)$ converges in probability to $(1, 0, \ldots)$,
  \item if $\pc \in (0,1)$, then the sequence $(\mu^{[1]}_n, \ldots, \mu^{[n]}_n)$ converges in distribution to a Poisson-Dirichlet process with parameter $\pc$,
  \item if $\pc \geq 1$, then the sequence $(\mu^{[1]}_n, \ldots, \mu^{[n]}_n)$ converges in probability to $(0, 0, \ldots)$.
\end{itemize}
The Poisson-Dirichlet process, introduced by Kingman~\cite{kingman}, is a random sequence $(m_k)_{k \geq 1}$ such that, almost surely, $1 > m_1 > m_2 > \cdots > 0$ and $\sum_{k=1}^{+\infty} m_k = 1$. In particular, $m_k$ converges to $0$ as $k$ increases.

In the case $\pc < 1$, let $(m_k)_{k \geq 1}$ refer to the limit, when $n \to +\infty$, of the sequence $(\mu^{[1]}_n, \ldots, \mu^{[n]}_n)$. It is either $(1, 0, \ldots)$ or a Poisson-Dirichlet process. In both cases, given $u \in (0,1)$, the number of companies with rank lower than $nu$ tends to infinity when $n$ grows. As a consequence, for $n$ large enough, an arbitrarily large proportion of the capital is held by companies with rank lower than $nu$. In the limit, all the mass of the long-term asymptotic capital measure is therefore concentrated below the level $u$. Since $u$ may be arbitrarily small, we obtain that, in the large market limit, the whole capital is held by companies with relative rank around $0$, which is our conclusion of the supercritical case~\eqref{it:capital:super} in Subsection~\ref{ss:capital:phtrans}.

In the case $\pc \geq 1$, all the market weights vanish. This is coherent with~\eqref{it:capital:sub} in Subsection~\ref{ss:capital:phtrans}, since the measure $\bmu(u)\dd u$ does not weight points, so that no company holds a positive proportion of capital when $n$ grows to infinity. However, our study of the capital density $\bmu$ provides informations on the capital distribution that are not available from Chatterjee and Pal's results.

As a conclusion, although we observe the very same phenomenon as Chatterjee and Pal, we depict it differently. In particular, they give detailed informations on the supercritical phase that our study cannot recover, while we provide a more precise description the capital distribution in the subcritical phase.


\section{Performance of diversity weighted portfolios}\label{s:portfolio}

We finally address the analysis of the performance of diversity weighted portfolios. The mathematical framework of Stochastic Portfolio Theory is briefly recalled in Subsection~\ref{ss:spt}, where we also introduce a family of portfolios, called diversity weighted portfolios. This family is indexed by a diversity index and interpolates between the equally weighted portfolio and the market portfolio.

The performance of a portfolio rule is measured by its long-term asymptotic growth rate and excess growth rate, that we define in Subsection~\ref{ss:G}. The monotonicity of these quantities with respect to the diversity index is addressed in Subsection~\ref{ss:monoton}, and a reduction formula providing simple expressions is derived in Subsection~\ref{ss:reduction}. 

We use these results to explicit the long-term asymptotic growth rate of the equally weighted portfolio and the market portfolio in Subsection~\ref{ss:perf}, and state global conclusions in Subsection~\ref{ss:optimal}.

\subsection{Stochastic portfolio theory in a nutshell}\label{ss:spt} We first provide a short overview of Stochastic Portfolio Theory~\cite{fernholz, ferkar}.

\subsubsection{Portfolio} A {\em portfolio rule}, or {\em portfolio} for short, is an adapted process 
\begin{equation*}
  \pi_n = (\pi^1_n(t), \ldots, \pi^n_n(t))_{t \geq 0}
\end{equation*}
such that, for all $t \geq 0$, for all $i \in \{1, \ldots, n\}$, $\pi^i_n(t) \geq 0$ and $\pi^1_n(t) + \cdots + \pi^n_n(t) = 1$. It describes the proportion of wealth that one invests in each stock. We assume that portfolios are self-financing, that is to say, there is no exogenous infusion or withdrawal of money after the initial time. Then, the {\em wealth process} $(Z_n^{\pi_n}(t))_{t \geq 0}$ associated with a portfolio $\pi_n$ satisfies
\begin{equation*}
  \frac{\dd Z_n^{\pi_n}(t)}{Z_n^{\pi_n}(t)} = \sum_{i=1}^n \pi^i_n(t) \frac{\dd X^i_n(t)}{X^i_n(t)},
\end{equation*}
and the initial wealth is normalized to $Z_n^{\pi_n}(0)=1$. By Itô's formula, 
\begin{equation*}
  \dd (\log Z^{\pi_n}_n(t)) = \gamma^{\pi_n}_n(t)\dd t + \sum_{i=1}^n \pi^i_n(t)\sigma^i_n(t)\dd B^i(t),
\end{equation*}
where the processes
\begin{equation*}
  \gamma^{\pi_n}_n(t) := \sum_{i=1}^n \pi^i_n(t)\gamma^i_n(t) + \gamma^{\pi_n}_{*,n}(t), \qquad \gamma^{\pi_n}_{*,n}(t) := \frac{1}{2}\sum_{i=1}^n \pi^i_n(t)(1-\pi^i_n(t))(\sigma^i_n(t))^2,
\end{equation*}
are respectively called the {\em growth rate} and the {\em excess growth rate} of the portfolio. 

Clearly, the growth rate of the portfolio writes as the average of the growth rates of the stocks contained in the portfolio, with weights given by the portfolio, plus the excess growth rate. The latter rewrites as the average of the variances of the stocks contained in the portfolio, minus the variance of the wealth process. Thus, the variance reduction due to diversification in the portfolio is exactly measured by the excess growth rate $\gamma^{\pi_n}_{*,n}(t) \geq 0$.

\subsubsection{Diversity weighted portfolios} For all $p \geq 0$, we now define the {\em $p$-diversity weighted} portfolio $\pi^p_n = (\pi^{p,1}_n(t), \ldots, \pi^{p,n}_n(t))_{t \geq 0}$ by
\begin{equation*}
  \forall t \geq 0, \quad \forall i \in \{1, \ldots, n\}, \qquad \pi^{p,i}_n(t) := \frac{(X^i_n(t))^p}{(X^1_n(t))^p + \cdots + (X^n_n(t))^p}.
\end{equation*}
The associated wealth process is denoted by $(Z^p_n(t))_{t \geq 0}$ and the growth rate and excess growth rate processes of the portfolio are respectively denoted by $(\gamma^p_n(t))_{t \geq 0}$, $(\gamma^p_{*,n}(t))_{t \geq 0}$. The parameter $p$ is called the {\em diversity index}. 

Certainly, the choice $p=0$ corresponds to the equally weighted portfolio, while the choice $p=1$ is the market portfolio. For $0<p<1$, the $p$-diversity weighted portfolio interpolates between the equally weighted portfolio and the market portfolio, and it is {\em functionally generated} by a {\em measure of diversity} in the sense of Fernholz~\cite[Section~3.4]{fernholz}. Let us also mention that diversity weighted portfolios, with $p=0.76$, were used in actual portfolio managing strategies for the S\&P 500 Index~\cite[Section~7.2]{fernholz}.

\subsubsection{Long-term growth rate and performance} Following~\cite[Section~1.3]{fernholz}, the growth rate of a portfolio measures its long-term performance, in the sense that
\begin{equation*}
  \lim_{T \to +\infty} \frac{1}{T} \left(\log Z_n^{\pi_n}(T) - \int_{t=0}^T \gamma^{\pi_n}_n(t)\dd t\right) = 0, \quad \text{almost surely}.
\end{equation*}
As a consequence, the study of the long time behaviour of the processes $(\gamma^{\pi_n}_n(t))_{t \geq 0}$ and $(\gamma^{\pi_n}_{*,n}(t))_{t \geq 0}$ arises as a natural question with respect to practical situations. As far as the asymptotic behaviour of portfolio rules in large markets is concerned, previous studies such as~\cite{banner} measured the performance of a sequence of portfolio rules $\{(\pi_n(t))_{t \geq 0}, n \geq 1\}$ by analyzing the asymptotic long-term growth rate 
\begin{equation*}
  \Gamma := \lim_{n \to +\infty} \lim_{t \to +\infty} \gamma^{\pi_n}_n(t).
\end{equation*}
The latter was computed for the Atlas model~\eqref{eq:atlas}, with constant variance coefficients $(\sigma^1_n)^2 = \cdots = (\sigma^n_n)^2 > 0$, or linearly decreasing coefficients $(\sigma^j_n)^2 = \ua + s^2 (n-j)$, $\ua > 0$, $s^2 > 0$, which matches the empirical observation of~\cite[Figure~13.6]{ferkar}. For the equally weighted portfolio and the market portfolio, exact expressions were derived. For $p$-diversity weighted portfolios with $p \in (0,1)$, the long-term growth rate and excess growth rate were explicited in terms of the stationary distribution of the market portfolio. At that time, very little was known about this stationary distribution, so that the authors had to resort to the so-called {\em certainty-equivalent approximation} to describe the large market limit of the long-term growth rate and excess growth rate. Still, in all cases, it was observed that, for large markets, the equally weighted portfolio outperforms diversity weighted portfolios, and in particular, beats the market.

\subsubsection{Growth rates and capital measure} We provide a detailed study of the long-term asymptotic growth rate 
\begin{equation*}
  G^p := \lim_{t \to +\infty} \lim_{n \to +\infty} \gamma^p_n(t)
\end{equation*} 
for $p$-diversity weighted portfolios, which brings forth results in a very synthetic fashion. On account of Subsection~\ref{ss:interv}, we note that, at least in the case of a constant variance function, this approach is equivalent to~\cite{banner}.

Let us remark that, by the definition of $p$-diversity weighted portfolios and due to~\eqref{eq:mfgammasigma} and~\eqref{eq:Pin}, for all $p \geq 0$, the growth rate and excess growth rate of $p$-diversity weighted portfolios write
\begin{equation}\label{eq:growthrates}
  \begin{aligned}
    & \gamma^p_n(t) = \crochet{\gamma}{\Pi^p_n(t)} + \gamma^p_{*,n}(t),\\
    & \gamma^p_{*,n}(t) = \frac{1}{2}\crochet{\sigma^2}{\Pi^p_n(t)} - \frac{1}{2} \sum_{j=1}^n \left(\frac{\ex{p Y^{(j)}_n(t)}}{\ex{p Y^1_n(t)} + \cdots + \ex{p Y^n_n(t)}}\right)^2\sigma^2\left(\frac{j}{n}\right),
  \end{aligned}
\end{equation}
while the quadratic variation of $(\log Z^p_n(t))_{t \geq 0}$ writes
\begin{equation}\label{eq:quadvar}
  \langle \log Z^p_n \rangle (t) = \sum_{j=1}^n \sigma^2\left(\frac{j}{n}\right) \int_{s=0}^t \left(\frac{\ex{p Y^{(j)}_n(s)}}{\ex{p Y^1_n(s)} + \cdots + \ex{p Y^n_n(s)}}\right)^2\dd s.
\end{equation}
We shall prove below that the quantity
\begin{equation*}
  \sum_{j=1}^n \left(\frac{\ex{p Y^{(j)}_n(t)}}{\ex{p Y^1_n(t)} + \cdots + \ex{p Y^n_n(t)}}\right)^2\sigma^2\left(\frac{j}{n}\right)
\end{equation*}
is of order $1/n$, so that the analysis of the long-term asymptotic portfolio performance only relies on the analysis of the long-term asymptotic weighted capital measure.

\subsection{Long-term asymptotic growth rates}\label{ss:G} We first derive laws of large numbers for the processes $(\gamma^p_n(t))_{t \geq 0}$, $(\gamma^p_{*,n}(t))_{t \geq 0}$ and $(Z^p_n(t))_{t \geq 0}$, based on Proposition~\ref{prop:lln:Pi}. The proof of Lemma~\ref{lem:lln:gamma} is postponed to Appendix~\ref{app:lln}.

\begin{lem}\label{lem:lln:gamma}
  Let us assume that the conditions of Proposition~\ref{prop:lln:Pi} are satisfied. Let us fix $T>0$ and $q \in [1,+\infty)$. Then, for all $p \geq 0$, when $n$ grows to infinity,
  \begin{itemize}
    \item the growth rate $(\gamma^p_n(t))_{t \in [0,T]}$ and excess growth rate $(\gamma^p_{*,n}(t))_{t \in [0,T]}$ respectively converge, in $L^q(C([0,T],\R))$, to the deterministic processes $(\gamma^p(t))_{t \in [0,T]}$ and $(\gamma^p_*(t))_{t \in [0,T]}$ defined by
    \begin{equation}\label{eq:gammapgammapstar}
      \forall t \geq 0, \qquad \gamma^p(t) = \crochet{b}{\Pi^p(t)}, \qquad \gamma^p_*(t) = \frac{1}{2}\crochet{\sigma^2}{\Pi^p(t)},
    \end{equation}
    where $b$ is the {\em rate of return function} defined by
    \begin{equation*}
      b(u) := \gamma(u) + \frac{1}{2} \sigma^2(u);
    \end{equation*}
  
    \item the wealth process $(Z^p_n(t))_{t \in [0,T]}$ converges, in $L^q(C([0,T],\R))$, to the deterministic process $(Z^p(t))_{t \in [0,T]}$ defined by
    \begin{equation}\label{eq:Zpinf}
      \forall t \geq 0, \qquad \log Z^p(t) = \int_{s=0}^t \gamma^p(s)\dd s.
    \end{equation}
  \end{itemize}
\end{lem}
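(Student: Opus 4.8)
The plan is to deduce all three convergence statements from the law of large numbers for the weighted capital measure established in Proposition~\ref{prop:lln:Pi}, by expressing each quantity in terms of $\Pi^p_n(t)$ up to a negligible remainder. The starting point is the identity~\eqref{eq:growthrates}: the growth rate and excess growth rate both involve the linear functionals $\scrochet{\gamma}{\Pi^p_n(t)}$ and $\scrochet{\sigma^2}{\Pi^p_n(t)}$, which converge in $L^q(C([0,T],\R))$ to $\scrochet{\gamma}{\Pi^p(t)}$ and $\scrochet{\sigma^2}{\Pi^p(t)}$ directly by Proposition~\ref{prop:lln:Pi} applied to the continuous test functions $\gamma$ and $\sigma^2$. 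Since $b = \gamma + \tfrac12\sigma^2$ is continuous, combining these two gives $\gamma^p(t) = \scrochet{b}{\Pi^p(t)}$ and $\gamma^p_*(t) = \tfrac12\scrochet{\sigma^2}{\Pi^p(t)}$ as in~\eqref{eq:gammapgammapstar}, provided one controls the quadratic correction term
\begin{equation*}
  R_n^p(t) := \sum_{j=1}^n \left(\frac{\ex{p Y^{(j)}_n(t)}}{\ex{p Y^1_n(t)} + \cdots + \ex{p Y^n_n(t)}}\right)^2 \sigma^2\!\left(\frac{j}{n}\right).
\end{equation*}

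The heart of the argument is therefore to show that $R_n^p(t)$ is of order $1/n$, uniformly on $[0,T]$ and in $L^q$. Writing $w^j_n(t) := \ex{p Y^{(j)}_n(t)} / \sum_{i} \ex{p Y^i_n(t)}$ for the (ordered) portfolio weights, one has $\sum_j w^j_n(t) = 1$ and $R_n^p(t) = \sum_j (w^j_n(t))^2 \sigma^2(j/n) \leq \|\sigma^2\|_\infty \max_j w^j_n(t)$, so it suffices to bound $\Exp\big[\sup_{t \le T} (\max_j w^j_n(t))^q\big]$ by $C/n^q$ for a suitable constant. Since $w^j_n(t) \leq \ex{p Y^{(n)}_n(t)} / \sum_i \ex{p Y^i_n(t)} \leq \ex{p(Y^{(n)}_n(t) - Y^{(n-1)}_n(t))} / (\text{number of indices near the top})$, the key is to exploit the ellipticity-driven spreading of the particles: the gap $Y^{(n)}_n - Y^{(n-1)}_n$ between the top two stays $O(1)$ while the denominator accumulates $\Theta(n)$ comparable terms. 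Concretely I would compare the ordered system to the i.i.d. copies $(Y^i(t))$ of the nonlinear process from Theorem~\ref{theo:chaos}: by the propagation-of-chaos coupling one controls $\max_j w^j_n(t)$ in terms of $\max_i \ex{p Y^i(t)} / \sum_i \ex{p Y^i(t)}$, and for i.i.d. variables with finite exponential moments of all orders (guaranteed by~\eqref{hyp:H} together with the bound $\Z^p(t) \leq C_T^p$ of~\eqref{eq:supexpY}), the ratio $\max_i \ex{p Y^i(t)} / \sum_i \ex{p Y^i(t)}$ is $O(1/n)$ in every $L^q$ and can be made uniform over $t \in [0,T]$ by a moment/maximal-inequality argument on the supremum of the underlying paths. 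This is the step I expect to be the main obstacle, because it requires quantitative control of the extreme weight uniformly in time, not merely a pointwise-in-$t$ estimate.

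Granting $\Exp\big[\sup_{t\le T} |R_n^p(t)|^q\big] \to 0$, the convergence of $(\gamma^p_n(t))$ and $(\gamma^p_{*,n}(t))$ in $L^q(C([0,T],\R))$ follows by the triangle inequality from~\eqref{eq:growthrates} and Proposition~\ref{prop:lln:Pi}. For the wealth process, I would start from $\log Z^p_n(t) = \int_0^t \gamma^p_n(s)\,\dd s + M_n(t)$ where $M_n(t) = \sum_i \int_0^t \pi^{p,i}_n(s)\sigma^i_n(s)\,\dd B^i(s)$ is a continuous martingale with quadratic variation~\eqref{eq:quadvar} equal to $\int_0^t R_n^p(s)\,\dd s \leq T\,\sup_{s\le T} R_n^p(s)$; by the Burkholder--Davis--Gundy inequality, $\Exp\big[\sup_{t\le T}|M_n(t)|^q\big] \leq C_q\, T^{q/2}\,\Exp\big[\sup_{s\le T}|R_n^p(s)|^{q/2}\big] \to 0$. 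Meanwhile $\int_0^\cdot \gamma^p_n(s)\,\dd s \to \int_0^\cdot \gamma^p(s)\,\dd s$ in $L^q(C([0,T],\R))$ since integration is a bounded linear map on $C([0,T],\R)$ and $\gamma^p_n \to \gamma^p$ there. Hence $\log Z^p_n \to \log Z^p$ with $\log Z^p(t) = \int_0^t \gamma^p(s)\,\dd s$ as in~\eqref{eq:Zpinf}; the convergence of $Z^p_n$ itself then follows by composing with the continuous (locally Lipschitz) exponential map, using again the uniform $L^q$ bound on $\sup_{t\le T}|\log Z^p_n(t)|$ coming from~\eqref{eq:supexpY} and the boundedness of $\gamma$. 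Since the proof is deferred to Appendix~\ref{app:lln}, I would simply record here that the only non-routine input is the $1/n$ bound on $R_n^p$, for which the structure of the weak interaction and uniform ellipticity are essential.
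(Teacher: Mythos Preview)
Your overall structure is sound---reducing to Proposition~\ref{prop:lln:Pi} plus control of the quadratic remainder $R_n^p(t)$, then handling the wealth process via the martingale part---and this is exactly the skeleton of the paper's proof. However, your treatment of $R_n^p(t)$ takes an unnecessarily hard route and contains a genuine gap.

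You bound $R_n^p(t) \le \|\sigma^2\|_\infty \max_j w^j_n(t)$ and then propose to control the maximal weight via a propagation-of-chaos coupling and extreme-value estimates for the i.i.d.\ limit. The problem is that Theorem~\ref{theo:chaos} gives only weak convergence of fixed finite-dimensional marginals; no pathwise coupling controlling \emph{all $n$} particles simultaneously is available, and producing one with uniform-in-$t$ quantitative bounds on the maximum would require substantial additional work that the paper's framework does not supply. The paper sidesteps this entirely with a one-line algebraic observation: in the notation of Lemma~\ref{lem:lln},
\[
R_n^p(t) \;=\; \frac{\sum_j \ex{2pY^{(j)}_n(t)}\sigma^2(j/n)}{\bigl(\sum_i \ex{pY^i_n(t)}\bigr)^2}
\;=\; \frac{1}{n}\,\frac{\hat{\sigma^2}_n^{(2p)}(t)}{\bigl(\hat{\mathbf{1}}_n^{p}(t)\bigr)^2}\,.
\]
The ratio on the right converges in $L^q(C([0,T],\R))$ to the finite quantity $\hat{\sigma^2}^{(2p)}(t)/(\hat{\mathbf{1}}^{p}(t))^2$ by exactly the machinery behind Proposition~\ref{prop:lln:Pi}, applied with diversity index $2p$ and test function $\sigma^2$. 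The explicit prefactor $1/n$ then kills $R_n^p$ uniformly on $[0,T]$, with no extremes or couplings needed. This is the key simplification you are missing.

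For the wealth process, your BDG argument on the martingale part is essentially what the paper does, since $\langle \log Z^p_n\rangle(t)=\int_0^t R_n^p(s)\,\dd s$. One correction, though: to pass from $\log Z^p_n \to \log Z^p$ in $L^q(C([0,T],\R))$ to $Z^p_n \to Z^p$ in the same space, a uniform $L^q$ bound on $\sup_t |\log Z^p_n(t)|$ is not sufficient; you need $\sup_n \Exp\bigl[\sup_{t\le T}|Z^p_n(t)|^r\bigr]<\infty$ for some $r>q$. The paper obtains this via Doob's inequality on the associated exponential martingale, using the deterministic bound $\langle \log Z^p_n\rangle(t)\le \|\sigma^2\|_\infty\, t$.
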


The deterministic processes $(\gamma^p(t))_{t \geq 0}$, $(\gamma^p_*(t))_{t \geq 0}$ and $(Z^p(t))_{t \geq 0}$ shall be respectively called the asymptotic growth rate, the asymptotic excess growth rate and the asymptotic wealth process associated with $p$-diversity weighted portfolios. Their long time behaviour is determined by Proposition~\ref{prop:phtrans} and Lemma~\ref{lem:lln:gamma} as follows.

\begin{lem}\label{lem:longtime:gamma}
  Let us assume that the conditions of Proposition~\ref{prop:phtrans} hold, and recall the definition~\eqref{eq:pc} of the critical diversity index $\pc \geq 0$.
  \begin{itemize}
    \item For all $p \in [0, \pc)$,
    \begin{equation*}
      G^p := \lim_{t \to +\infty} \gamma^p(t) = \crochet{b}{\bPi^p}, \qquad G_*^p := \lim_{t \to +\infty} \gamma_*^p(t) = \frac{1}{2}\crochet{\sigma^2}{\bPi^p}.
    \end{equation*}
    \item For all $p > \pc$,
    \begin{equation*}
      G^p := \lim_{t \to +\infty} \gamma^p(t) = b(1), \qquad G_*^p := \lim_{t \to +\infty} \gamma_*^p(t) = \frac{1}{2}\sigma^2(1).
    \end{equation*}
  \end{itemize}
\end{lem}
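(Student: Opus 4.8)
The plan is to combine Lemma~\ref{lem:lln:gamma}, which identifies the asymptotic growth rate and excess growth rate as $\gamma^p(t)=\scrochet{b}{\Pi^p(t)}$ and $\gamma^p_*(t)=\frac12\scrochet{\sigma^2}{\Pi^p(t)}$, with the long time behaviour of the asymptotic weighted capital measure $\Pi^p(t)$ established in Proposition~\ref{prop:phtrans}. Since $\gamma$ and $\sigma$ are continuous on $[0,1]$, the functions $b=\gamma+\tfrac12\sigma^2$ and $\sigma^2$ are continuous on $[0,1]$, so they are admissible test functions in Proposition~\ref{prop:phtrans}. The whole argument then reduces to applying that proposition with $f=b$ and with $f=\sigma^2$.

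First I would treat the subcritical phase $p\in[0,\pc)$. Applying Proposition~\ref{prop:phtrans} with the continuous function $f=b$ gives $\lim_{t\to+\infty}\scrochet{b}{\Pi^p(t)}=\scrochet{b}{\bPi^p}$, hence, by~\eqref{eq:gammapgammapstar}, $\lim_{t\to+\infty}\gamma^p(t)=\scrochet{b}{\bPi^p}=:G^p$; applying it with $f=\sigma^2$ gives $\lim_{t\to+\infty}\gamma_*^p(t)=\tfrac12\scrochet{\sigma^2}{\bPi^p}=:G_*^p$. Note that $\bPi^p$ is well defined here because $\pc>0$ is not assumed in the statement, but in the subcritical regime $p<\pc$ forces $\pc>0$, so Lemma~\ref{lem:bPip} does apply and $\bZ^p<+\infty$.

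Next I would treat the supercritical phase $p>\pc$. Again applying Proposition~\ref{prop:phtrans}, this time in the supercritical case, with $f=b$ yields $\lim_{t\to+\infty}\scrochet{b}{\Pi^p(t)}=b(1)$, so $\lim_{t\to+\infty}\gamma^p(t)=b(1)$; with $f=\sigma^2$ it yields $\lim_{t\to+\infty}\scrochet{\sigma^2}{\Pi^p(t)}=\sigma^2(1)$, so $\lim_{t\to+\infty}\gamma_*^p(t)=\tfrac12\sigma^2(1)$. This exhausts the two cases in the statement. There is essentially no obstacle here: the lemma is a direct corollary of the two results it cites, and the only thing to check is that the test functions $b$ and $\sigma^2$ are continuous on $[0,1]$, which is immediate from the continuity assumptions on $\gamma$ and $\sigma$ together with the regularity hypotheses of Theorem~\ref{theo:equi} (hence of Proposition~\ref{prop:phtrans}).
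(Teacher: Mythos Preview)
Your proposal is correct and matches the paper's approach: the paper does not give a separate proof of this lemma, simply stating that the long time behaviour ``is determined by Proposition~\ref{prop:phtrans} and Lemma~\ref{lem:lln:gamma}'', which is exactly the combination you spell out (apply Proposition~\ref{prop:phtrans} with the continuous test functions $f=b$ and $f=\sigma^2$ to the identities~\eqref{eq:gammapgammapstar}).
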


The quantities $G^p$ and $G^p_*$ shall be respectively called the long-term asymptotic growth rate and the long-term asymptotic excess growth rates. When $p=\pc$, the limits of $\gamma^{\pc}(t)$ and $\gamma^{\pc}_*(t)$ when $t \to +\infty$ may not exist, therefore we define
\begin{equation*}
  G^{\pc} := \limsup_{t \to +\infty} \gamma^{\pc}(t), \qquad G_*^{\pc} := \limsup_{t \to +\infty} \gamma_*^{\pc}(t).
\end{equation*}
Following Subsection~\ref{ss:spt}, the performance of the $p$-diversity weighted portfolio is measured by its long-term asymptotic growth rate, therefore we shall look for values of the diversity index $p$ for which $G^p$ is maximal. We will most frequently obtain values for $p$ that are not equal to $\pc$, so that defining $G^{\pc}$ as the superior limit of $\gamma^{\pc}(t)$ ensures the actual optimality of $p$.

Let us finally mention that Proposition~\ref{prop:phtrans} ensures that the functions $p \mapsto G^p$ and $p \mapsto G^p_*$ are continuous on $[0,\pc)$, constant on $(\pc,+\infty)$, and satisfy
\begin{equation*}
  \liminf_{p \to \pc} G^p \leq G^{\pc} \leq \limsup_{p \to \pc} G^p, \qquad \liminf_{p \to  \pc} G_*^p \leq G_*^{\pc} \leq \limsup_{p \to \pc} G_*^p.
\end{equation*}

\subsection{Montonicity criterion}\label{ss:monoton} We first address the monotonicity of the functions $p \mapsto G^p$ and $p \mapsto G^p_*$, based on the following lemma.

\begin{lem}\label{lem:monoton}
  Let us assume that the conditions of Proposition~\ref{prop:lln:Pi} hold and fix a continuous function $f : [0,1] \to \R$. If $f$ is monotonic on $[0,1]$, then, for all $t \geq 0$, the function $p \mapsto \crochet{f}{\Pi^p(t)}$ has the same monotonicity on $[0,+\infty)$.
\end{lem}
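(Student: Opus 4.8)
The statement is that if $f$ is monotonic on $[0,1]$, then $p\mapsto\crochet{f}{\Pi^p(t)}$ has the same monotonicity. Since $\Pi^p(t)$ has density $\ex{pF_t^{-1}(u)}/\Z^p(t)$ with respect to Lebesgue measure on $[0,1]$, this is really a statement about the family of probability measures $\nu_p(\dd u) := \ex{pF_t^{-1}(u)}\dd u / \Z^p(t)$ indexed by $p\geq 0$. The key structural fact is that $u\mapsto F_t^{-1}(u)$ is nondecreasing, so $\{\nu_p\}_{p\geq 0}$ is an exponential family in the parameter $p$ with nondecreasing sufficient statistic; such a family is increasing in the likelihood-ratio order, hence in particular stochastically increasing. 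The plan is to make this precise and then invoke the elementary fact that $q\mapsto\crochet{f}{\nu_q}$ is nondecreasing (resp. nonincreasing) whenever $f$ is nondecreasing (resp. nonincreasing) and $q\mapsto\nu_q$ is stochastically increasing.

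\textbf{Step 1: reduce to differentiability in $p$.} Fix $t\geq 0$ and write $\phi(u):=F_t^{-1}(u)$, which is finite for Lebesgue-a.e.\ $u\in[0,1]$ (it is the quantile function of $Y(t)$) and nondecreasing. By Proposition~\ref{prop:lln:Pi}, $\Z^p(t)=\int_0^1\ex{p\phi(u)}\dd u=\Exp(\ex{pY(t)})$ is finite and, using condition~\eqref{hyp:H} together with the integrability propagation in Proposition~\ref{prop:lln:Pi}, finite for all $p$ in a neighbourhood of any given $p_0\geq 0$; moreover all moments $\int_0^1\phi(u)^k\ex{p\phi(u)}\dd u$ are finite. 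Dominated convergence then justifies differentiating under the integral sign, giving
\begin{equation*}
  \frac{\dd}{\dd p}\crochet{f}{\Pi^p(t)} = \frac{\dd}{\dd p}\frac{\int_0^1 f(u)\ex{p\phi(u)}\dd u}{\int_0^1 \ex{p\phi(u)}\dd u} = \Cov_{\nu_p}\bigl(f(U),\phi(U)\bigr),
\end{equation*}
where $U\sim\nu_p$ and $\Cov_{\nu_p}$ denotes covariance under $\nu_p$; the right-hand side is the standard formula for the derivative of the log-Laplace-type ratio.

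\textbf{Step 2: sign the covariance.} Now $\phi$ is nondecreasing on $[0,1]$ and $f$ is monotonic by hypothesis. If $f$ is nondecreasing, then $f$ and $\phi$ are both nondecreasing functions of $U$, hence positively correlated: by the Chebyshev/FKG correlation inequality for monotone functions of a single real random variable, $\Cov_{\nu_p}(f(U),\phi(U))\geq 0$, so $p\mapsto\crochet{f}{\Pi^p(t)}$ is nondecreasing. If $f$ is nonincreasing, then $f$ and $\phi$ have opposite monotonicity and the same inequality gives $\Cov_{\nu_p}(f(U),\phi(U))\leq 0$, so $p\mapsto\crochet{f}{\Pi^p(t)}$ is nonincreasing. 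Either way $p\mapsto\crochet{f}{\Pi^p(t)}$ has the same monotonicity as $f$, which is the claim.

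\textbf{Main obstacle.} The conceptual content is trivial once the covariance formula is in place; the only genuine care needed is the analytic bookkeeping in Step~1 — namely verifying that $\Z^p(t)$ and the relevant moment integrals are finite on a neighbourhood of each $p_0$, and that differentiation under the integral is legitimate — which rests on condition~\eqref{hyp:H} and the integrability estimate~\eqref{eq:supexpY} of Proposition~\ref{prop:lln:Pi}. One subtlety is the behaviour at $p=0$: there one only has a one-sided derivative, but the sign computation is unchanged, and in any case monotonicity of $p\mapsto\crochet{f}{\Pi^p(t)}$ on $(0,+\infty)$ together with continuity at $0$ (again from Proposition~\ref{prop:lln:Pi}) extends it to $[0,+\infty)$. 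A fully elementary alternative that avoids differentiation altogether is to prove directly that $\nu_p\preceq\nu_{p'}$ in the stochastic order for $p\leq p'$ — since $\dd\nu_{p'}/\dd\nu_p\propto\ex{(p'-p)\phi(u)}$ is a nondecreasing function of $\phi(u)$, hence $\nu_{p'}$ dominates $\nu_p$ in likelihood ratio, a fortiori stochastically — and then use that $\crochet{f}{\cdot}$ preserves (resp.\ reverses) the order for nondecreasing (resp.\ nonincreasing) $f$; I would present the covariance argument as the main line and mention this as the remark.
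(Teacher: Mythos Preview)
Your proposal is correct and follows essentially the same route as the paper: compute $\frac{\dd}{\dd p}\crochet{f}{\Pi^p(t)}$ by differentiating under the integral, recognize it as $\Cov_{\Pi^p(t)}(F_t^{-1}(U),f(U))$, and sign it via the comonotonicity (Chebyshev/FKG) inequality since $F_t^{-1}$ is nondecreasing. The paper's write-up is terser on the analytic justification of the Leibniz rule, while you spell out the integrability input and the $p=0$ endpoint; your likelihood-ratio alternative is a nice addendum not present in the paper.
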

\begin{proof}
  Let us fix a continuous function $f : [0,1] \to \R$. By~\eqref{eq:EY} and the Leibniz integral rule, for all $t \geq 0$, the function 
  \begin{equation*}
    p \mapsto \int_{u=0}^1 \ex{p F_t^{-1}(u)}f(u)\dd u
  \end{equation*}
  is $C^1$ on $[0,+\infty)$, and its derivative writes
  \begin{equation*}
    \frac{\dd}{\dd p} \int_{u=0}^1 \ex{p F_t^{-1}(u)}f(u)\dd u = \int_{u=0}^1 F_t^{-1}(u) \ex{p F_t^{-1}(u)}f(u)\dd u,
  \end{equation*}
  from which it easily follows that the function $p \mapsto \scrochet{f}{\Pi^p(t)}$ is $C^1$ on $[0,+\infty)$ and
  \begin{equation*}
    \frac{\dd}{\dd p} \crochet{f}{\Pi^p(t)} = \crochet{F_t^{-1}f}{\Pi^p(t)} - \crochet{F_t^{-1}}{\Pi^p(t)}\crochet{f}{\Pi^p(t)} = \Cov(F_t^{-1}(U), f(U)),
  \end{equation*}
  where the random variable $U \in [0,1]$ is distributed according to $\Pi^p(t)$. Since $F_t^{-1}$ is nondecreasing, the result follows from a classical comonotonicity argument.
\end{proof}

We can now derive the following monotonicity criterion for the long-term asymptotic growth rate and excess growth rate.

\begin{cor}\label{cor:monoton}
  Let us assume that the conditions of Proposition~\ref{prop:phtrans} hold.
  \begin{itemize}
    \item If the rate of return function $b$ is monotonic on $[0,1]$, then the function $p \mapsto G^p$ has the same monotonicity on $[0,+\infty)$.
    \item If the variance function $\sigma^2$ is monotonic on $[0,1]$, then the function $p \mapsto G^p_*$ has the same monotonicity on $[0,+\infty)$.
  \end{itemize}
\end{cor}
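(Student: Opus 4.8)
The plan is to combine the monotonicity criterion for $p \mapsto \scrochet{f}{\Pi^p(t)}$ from Lemma~\ref{lem:monoton}, valid for each fixed time $t$, with the identification of $G^p$ and $G^p_*$ as long-time limits of such quantities, which is precisely the content of Lemma~\ref{lem:longtime:gamma}. The point is that monotonicity in $p$ is preserved under pointwise limits in $t$.

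First I would treat the subcritical range $p \in [0,\pc)$. By Lemma~\ref{lem:lln:gamma}, $\gamma^p(t) = \scrochet{b}{\Pi^p(t)}$ and $\gamma^p_*(t) = \frac12\scrochet{\sigma^2}{\Pi^p(t)}$; applying Lemma~\ref{lem:monoton} with $f = b$ (resp. $f = \sigma^2$) shows that, for every fixed $t \geq 0$, the map $p \mapsto \gamma^p(t)$ (resp. $p \mapsto \gamma^p_*(t)$) is monotonic on $[0,+\infty)$, with the same monotonicity as $b$ (resp. $\sigma^2$), provided that function is monotonic on $[0,1]$. Now fix $0 \leq p_1 < p_2 < \pc$ and pass to the limit $t \to +\infty$ in the inequality $\gamma^{p_1}(t) \leq \gamma^{p_2}(t)$ (say, in the nondecreasing case): by Lemma~\ref{lem:longtime:gamma} the two sides converge to $G^{p_1} = \scrochet{b}{\bPi^{p_1}}$ and $G^{p_2} = \scrochet{b}{\bPi^{p_2}}$ respectively, hence $G^{p_1} \leq G^{p_2}$. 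The same argument handles $G^p_*$, and the nonincreasing case is identical with the inequalities reversed. This establishes the desired monotonicity on the subinterval $[0,\pc)$.

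It remains to extend the monotonicity across the critical index and onto $(\pc,+\infty)$. On $(\pc,+\infty)$ the functions $p \mapsto G^p$ and $p \mapsto G^p_*$ are constant, equal to $b(1)$ and $\frac12\sigma^2(1)$ respectively, so there is nothing to prove there; the issue is the junction at $p = \pc$. Here I would use the one-sided limit statement recorded just before Subsection~\ref{ss:monoton}: $\liminf_{p \to \pc} G^p \leq G^{\pc} \leq \limsup_{p \to \pc} G^p$. Since $p \mapsto G^p$ is monotonic on $[0,\pc)$ (just proved) and monotonic — in fact constant — on $(\pc,+\infty)$, the one-sided limits $\lim_{p \uto \pc} G^p$ and $\lim_{p \dto \pc} G^p$ both exist; moreover, using Lemma~\ref{lem:bPip}, $\lim_{p \uto \pc}\scrochet{b}{\bPi^p}$ is either $b(1)$ or $\scrochet{b}{\bPi^{\pc}}$, and $b(1) = \lim_{p \dto \pc} G^p$ by the supercritical formula, so the two one-sided limits are consistent with the monotonicity type of $b$ and sandwich $G^{\pc}$ in the correct order. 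Chaining the inequality $G^{p} \leq \lim_{p'\uto\pc}G^{p'} \leq G^{\pc} \leq G^{p''}$ for $p < \pc < p''$ then yields monotonicity of $p \mapsto G^p$ on all of $[0,+\infty)$, and likewise for $p \mapsto G^p_*$.

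The main obstacle is the behaviour at criticality: $G^{\pc}$ is defined as a $\limsup$ and need not be a genuine limit, and in the case $\bZ^{\pc} < +\infty$ with a strictly monotone $b$ there is a genuine gap between $\scrochet{b}{\bPi^{\pc}}$ and $b(1)$, so one must check that the $\limsup$ defining $G^{\pc}$ actually lands on the correct side of that gap to be compatible with overall monotonicity. This is exactly what the sandwich $\liminf_{p\to\pc} G^p \leq G^{\pc} \leq \limsup_{p\to\pc} G^p$ from Subsection~\ref{ss:G} delivers, once one observes that, by the monotonicity already proved on each side of $\pc$ together with Lemma~\ref{lem:bPip}, the full limit points of $G^p$ as $p \to \pc$ lie between $\lim_{p\uto\pc} G^p$ and $b(1)$; monotonicity of $b$ fixes which of these is larger, and hence pins $G^{\pc}$ between them in a way consistent with the monotonic extension. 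Everything else is a routine passage to the limit.
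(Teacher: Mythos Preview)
Your argument is correct, but you work harder than necessary at the critical index $\pc$. The paper states the corollary immediately after Lemma~\ref{lem:monoton} without an explicit proof, the intended argument being the direct one: Lemma~\ref{lem:monoton} gives, for \emph{every} fixed $t \geq 0$ and for \emph{all} $0 \leq p_1 < p_2$ (not only $p_1 < p_2 < \pc$), the inequality $\gamma^{p_1}(t) \leq \gamma^{p_2}(t)$ in the nondecreasing case. Since $G^p$ is $\lim_{t\to+\infty} \gamma^p(t)$ for $p \neq \pc$ and $\limsup_{t\to+\infty} \gamma^{\pc}(t)$ at $p = \pc$, one passes the pointwise inequality directly to the limit: for $p_1 < \pc = p_2$ one uses $\lim \leq \limsup$, and for $p_1 = \pc < p_2$ one uses that $\limsup$ of the smaller side is bounded by the (existing) limit of the larger side. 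This yields $G^{p_1} \leq G^{p_2}$ for all $p_1 < p_2$ in $[0,+\infty)$ in one stroke, and likewise for $G^p_*$.

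Your route through the sandwich $\liminf_{p\to\pc} G^p \leq G^{\pc} \leq \limsup_{p\to\pc} G^p$ together with Lemma~\ref{lem:bPip} is valid (modulo the tacit assumption $\pc > 0$ so that a genuine left limit along $[0,\pc)$ exists), but it replaces a one-line $t$-limit by a case analysis on whether $\bZ^{\pc}$ is finite, which the direct approach avoids entirely. The gain of the direct argument is robustness: it uses nothing about the structure of $\bPi^p$ near $\pc$, only that $G^p$ is a $\lim$ or $\limsup$ in $t$ of a quantity already monotone in $p$ for each $t$.
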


\subsection{The reduction formula}\label{ss:reduction} We complete the monotonicity criterion of Corollary~\ref{cor:monoton} by the following reduction formula expressing the long-term asymptotic growth rate in terms of the long-term asymptotic excess growth rate in the subcritical phase.

\begin{prop}\label{prop:IPP}
  Let us assume that the conditions of Proposition~\ref{prop:phtrans} hold, and that $\pc>0$. Then, for all $p \in [0, \pc)$,
  \begin{equation*}
    G^p = (1-p) G^p_* + g.
  \end{equation*}
\end{prop}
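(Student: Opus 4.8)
The plan is to express both $G^p$ and $G^p_*$ through the density of $\bPi^p$ and reduce the claimed identity to an exact-derivative computation. By Lemma~\ref{lem:longtime:gamma}, for $p \in [0,\pc)$ one has $G^p = \crochet{b}{\bPi^p} = \crochet{\gamma}{\bPi^p} + \frac{1}{2}\crochet{\sigma^2}{\bPi^p}$ and $G^p_* = \frac{1}{2}\crochet{\sigma^2}{\bPi^p}$, so $G^p = (1-p)G^p_* + g$ is equivalent to $\crochet{\gamma}{\bPi^p} + \frac{p}{2}\crochet{\sigma^2}{\bPi^p} = g$. Since $\bPi^p$ has density $\ex{p\Psi(u)}/\bZ^p$ on $[0,1]$ and $g = \int_0^1 \gamma(u)\,\dd u$, multiplying by $\bZ^p = \int_0^1 \ex{p\Psi(u)}\,\dd u$ shows this amounts to proving
\[
  \int_{u=0}^1 \left(\gamma(u) - g + \frac{p}{2}\sigma^2(u)\right)\ex{p\Psi(u)}\,\dd u = 0.
\]

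The key observation is that the integrand is a total derivative: since $\Gamma' = \gamma$ and $\Psi'(u) = \sigma^2(u)/\bigl(2(\Gamma(u)-gu)\bigr)$, the product rule gives, on $(0,1)$,
\[
  \frac{\dd}{\dd u}\left[(\Gamma(u) - gu)\ex{p\Psi(u)}\right] = \left(\gamma(u) - g + \frac{p}{2}\sigma^2(u)\right)\ex{p\Psi(u)}.
\]
I would therefore integrate by parts on $[\varepsilon, 1-\varepsilon]$, where \eqref{hyp:UE} and \eqref{hyp:E1} make $\Gamma$, $\Psi$ and hence the product $C^1$, and then let $\varepsilon \dto 0$. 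On the left-hand side, $\gamma$ and $\sigma^2$ are bounded and $\ex{p\Psi}$ is Lebesgue-integrable on $[0,1]$ by Lemma~\ref{lem:bPip} (this is exactly where $p < \pc$ enters), so dominated convergence recovers the full integral above; on the right-hand side there remains only $\lim_{\varepsilon \to 0}\bigl[(\Gamma(u) - gu)\ex{p\Psi(u)}\bigr]_{\varepsilon}^{1-\varepsilon}$.

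It then remains to check that both boundary terms vanish, which I expect to be the only slightly delicate point. Near $0$: $\Psi$ is nondecreasing with $\Psi(1/2) = 0$, hence $\ex{p\Psi(u)} \leq 1$ for $u \leq 1/2$, while $\bigl|\Gamma(u) - gu\bigr| = \bigl|\int_{w=0}^u (\gamma(w) - g)\,\dd w\bigr| \leq ||\gamma - g||_{\infty}\,u$, so the product tends to $0$. Near $1$: writing $\Gamma(u) - gu = \int_{w=u}^1 (g - \gamma(w))\,\dd w$ gives $\bigl|\Gamma(u) - gu\bigr| \leq ||\gamma - g||_{\infty}\,(1-u)$, and since $u \mapsto \ex{p\Psi(u)}$ is nondecreasing and integrable near $1$ we get $(1-u)\ex{p\Psi(u)} \leq \int_{v=u}^1 \ex{p\Psi(v)}\,\dd v \to 0$; so the product vanishes there too. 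This proves $\int_0^1 \bigl(\gamma - g + \frac{p}{2}\sigma^2\bigr)\ex{p\Psi}\,\dd u = 0$, and dividing by $\bZ^p$ yields $\crochet{\gamma}{\bPi^p} + \frac{p}{2}\crochet{\sigma^2}{\bPi^p} = g$, which rearranges to $G^p = (1-p)G^p_* + g$. I do not anticipate a genuine obstacle: the whole argument rests on the monotonicity of $\Psi$, the normalization $\Psi(1/2) = 0$, the boundedness of $\gamma$, and the finiteness $\bZ^p < +\infty$ — the hypothesis $p < \pc$ being precisely what guarantees the last point (and with it the finiteness of $G^p$ and $G^p_*$).
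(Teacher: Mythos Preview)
Your proof is correct and follows essentially the same integration-by-parts argument as the paper: both identify $(\gamma(u)-g+\tfrac{p}{2}\sigma^2(u))\ex{p\Psi(u)}$ as the derivative of $(\Gamma(u)-gu)\ex{p\Psi(u)}$ and show the boundary terms vanish. The only minor difference is that the paper handles the endpoint $u\uto 1$ via the asymptotic $\ex{p\Psi(u)}(\Gamma(u)-gu)\sim(g-\gamma(1))(1-u)^{1-p/\pc}$ from Remark~\ref{rk:tail}, whereas your monotonicity bound $(1-u)\ex{p\Psi(u)}\leq\int_u^1\ex{p\Psi(v)}\,\dd v\to 0$ is a shade more elementary and uses only $\bZ^p<+\infty$.
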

\begin{proof}
  Let us assume that $\pc>0$ and fix $p \in [0,\pc)$. Using Lemma~\ref{lem:longtime:gamma}, we first write
  \begin{equation*}
    G^p = \crochet{\gamma}{\bPi^p} + \frac{1}{2}\crochet{\sigma^2}{\bPi^p} = \crochet{\tilde{\gamma}}{\bPi^p} + g + G^p_*,
  \end{equation*}
  where we recall that $\tilde{\gamma}(u) = \gamma(u) - g$. Thanks to the first-order analysis of $\Psi$ carried out in Remark~\ref{rk:tail},
  \begin{equation*}
    \lim_{u \dto 0} \ex{p\Psi(u)} (\Gamma(u)-gu) = 0, \quad \lim_{u \uto 1} \ex{p\Psi(u)} (\Gamma(u)-gu) = \lim_{u \uto 1} (g-\gamma(1))(1-u)^{1-p/\pc} = 0,
  \end{equation*}
  so that integrating by parts yields
  \begin{equation*}
    \int_{u=0}^1 \ex{p \Psi(u)} \tilde{\gamma}(u) \dd u = - \int_{u=0}^1 p\Psi'(u) \ex{p\Psi(u)} (\Gamma(u)-gu)\dd u = -\frac{p}{2} \int_{u=0}^1 \ex{p\Psi(u)}\sigma^2(u)\dd u,
  \end{equation*}
  hence $\scrochet{\tilde{\gamma}}{\bPi^p} = - p G^p_*$.
\end{proof}

\begin{rk}\label{rk:reduction}
  In the supercritical phase, elementary algebra allows to derive a similar reduction formula, where $p$ has to be replaced with $\pc$, namely $G^p = (1-\pc) G^p_* + g$, for all $p>\pc$. Both formulas rewrite in a compact form as
  \begin{equation*}
    \forall p\not=\pc, \qquad G^p = (1-p\wedge\pc) G^p_* + g,
  \end{equation*}
  and this also holds true for $p=\pc$ as soon as at least one of the functions $p \mapsto G^p$ or $p \mapsto G^p_*$ is continuous at $\pc$.
\end{rk}

\subsection{Performance of the equally weighted and the market portfolio}\label{ss:perf} Let us apply the results of Proposition~\ref{prop:IPP} to describe the performance of the equally weighted and the market portfolio.

{\em Equally weighted portfolio:} the long-term asymptotic growth rate writes
\begin{equation*}
  G^{(0)} = G^{(0)}_* + g = \frac{1}{2}\int_{u=0}^1 \sigma^2(u)\dd u + g > g,
\end{equation*}
so that the equally weighted portfolio grows faster than the market mean growth rate $g$, by a factor depending only on the volatility structure of the market.
  
{\em Market portfolio:} if $\pc > 1$, then the long-term asymptotic growth rate writes $G^{(1)} = g$, so that the market portfolio grows at the market mean growth rate. If $\pc < 1$, then 
\begin{equation*}
  G^{(1)} = (1-\pc) \frac{\sigma^2(1)}{2} + g > g,
\end{equation*} 
so that the market portfolio grows faster than the market mean growth rate, by a factor depending on both the growth rate function and the variance function of the market.

\subsection{Optimal selection of portfolios and volatility structure}\label{ss:optimal} We now combine the results of Corollary~\ref{cor:monoton} and Proposition~\ref{prop:IPP} to select the portfolio rule with best performance, depending on the volatility structure of the market. We sum up our results in Conclusions~\eqref{conc:1}, \eqref{conc:2} and~\eqref{conc:3}. 

Let us first assume that the variance function $\sigma^2$ is nonincreasing, which matches the volatility size effect. Then, Corollary~\ref{cor:monoton} implies that the long-term asymptotic excess growth rate $G^p_*$ is nonincreasing on $[0,+\infty)$. Using the reduction formula of Proposition~\ref{prop:IPP}, we deduce that the long-term asymptotic growth rate $G^p$ is nonincreasing on $[0,+\infty)$, therefore it is maximal for $p=0$.
\begin{enumerate}[label=(C1), ref=C1]
  \item\label{conc:1} If the variance function is nonincreasing, then the equally weighted portfolio is optimal among $p$-diversity weighted portfolios.
\end{enumerate}

A particular case of a nonincreasing variance function is the case of a constant variance function. Then, by Remark~\ref{rk:reduction}, for all $p \not= \pc$,
\begin{equation}\label{eq:varconst}
  G^p = (1-p \wedge \pc) \frac{\sigma^2}{2} + g.
\end{equation}
The expression above has the same right and left limits in $\pc$, so that, by Remark~\ref{rk:reduction}, the formula~\eqref{eq:varconst} is actually valid for all $p \in [0,+\infty)$.
\begin{enumerate}[label=(C2), ref=C2]
  \item\label{conc:2} If the variance function is constant, then, for all $p \geq 0$, the long-term asymptotic growth rate of the $p$-diversity weighted portfolio is given by the formula~\eqref{eq:varconst}.
\end{enumerate}

We finally look for conditions on the market model to produce a situation in which the equally weighted portfolio is {\em not} optimal among $p$-diversity weighted portfolios. On account of Corollary~\ref{cor:monoton}, this is the case if the rate of return function $b$ is increasing on $[0,1]$. In such a situation, and under Assumptions~\eqref{hyp:UE} and~\eqref{hyp:E1},
\begin{equation*}
  b(1) = \gamma(1) + \frac{1}{2}\sigma^2(1) > \gamma(0) + \frac{1}{2}\sigma^2(0) > g,
\end{equation*}
so that $\pc < 1$. Then, using the results of Subsection~\ref{ss:perf},
\begin{equation*}
  G^{(0)} = \int_{u=0}^1 b(u)\dd u < b(1) = G^{(1)},
\end{equation*}
that is to say, the market portfolio outperforms the equally weighted portfolio --- and it is actually optimal among all $p$-diversity weighted portfolios.

\begin{expl}\label{expl:portfolio}
  Let us specify an example of a model where the market portfolio is optimal. We use the growth rate function introduced in the mean-field approximation of the Atlas model of Remark~\ref{rk:atlas}, $\gamma(u) = \gamma_{\alpha}(u) = g(\alpha+1)(1-u)^{\alpha}$, with $\alpha > 0$ to be specified below. Recall that this growth rate function satisfies Assumptions~\eqref{hyp:E1} and~\eqref{hyp:E2}. We now choose the variance function $\sigma^2$ in order to satisfy the uniform ellipticity assumption~\eqref{hyp:UE} and to ensure that the rate of return function $b = \gamma + \sigma^2/2$ is increasing; for instance, we let
  \begin{equation*}
    \sigma^2(u) = 2\left(C + u - \gamma_{\alpha}(u)\right),
  \end{equation*}
  with $C=1+g(\alpha+1)$, see Figure~\ref{fig:sigma2}. Then, for all $\alpha>0$, $b(u) = C+u$ is increasing and $\sigma^2$ satisfies the uniform ellipticity assumption~\eqref{hyp:UE}. We now take $\alpha>2$ to ensure that the regularity assumptions on $\gamma$ and $\sigma^2$ required in Theorem~\ref{theo:equi} are fulfilled. This completes the construction of our model, and effectively provides an instance of a mean-field Atlas market model where the market portfolio outperforms the equally weighted portfolio.
  \begin{figure}[ht]
    \includegraphics[width=15cm]{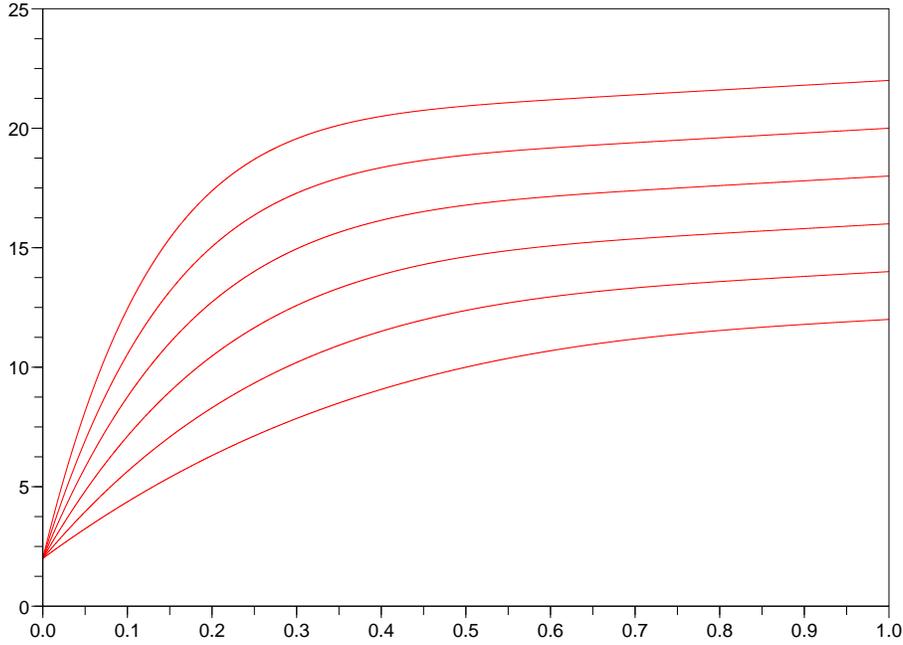}
    \caption{The variance function $\sigma^2$ for $g=1$ and $\alpha$ taking integer values ranging from $3$ (lowest curve) to $8$ (highest curve).}
    \label{fig:sigma2}
  \end{figure}
\end{expl}

Example~\ref{expl:portfolio} leads to the following conclusion.
\begin{enumerate}[label=(C3), ref=C3]
  \item\label{conc:3} One can exhibit an example of a model where the market portfolio is optimal among all $p$-diversity weighted portfolios. It is necessary that, in such a model, small stocks have a smaller variance than large stocks, so that the volatility size effect is violated.
\end{enumerate}


\appendix

\section{Proofs of Proposition~\ref{prop:lln:Pi} and Lemma~\ref{lem:lln:gamma}}\label{app:lln}

This appendix is dedicated to the proofs of Proposition~\ref{prop:lln:Pi} and of the related Lemma~\ref{lem:lln:gamma}. We first prove~\eqref{eq:supexpY}.

\begin{proof}[Proof of~\eqref{eq:supexpY}]
  Let us fix $T>0$ and $p \geq 0$. By Theorem~\ref{theo:chaos}, for all $t \in [0,T]$,
  \begin{equation*}
    \begin{aligned}
      \Z^p(t) := \int_{u=0}^1 \ex{pF_t^{-1}(u)}\dd u = \Exp\left(\ex{pY(t)}\right) & = \Exp\left(\ex{pY(0) + p\int_{s=0}^t \gamma(F_s(Y(s)))\dd s + p\int_{s=0}^t \sigma(F_s(Y(s)))\dd B(s)}\right)\\
      & \leq \ex{p ||\gamma||_{\infty} T} \Exp\left(\ex{p Y(0)} \Exp\left(\ex{p\int_{s=0}^t \sigma(F_s(Y(s)))\dd B(s)} \Big| Y(0)\right)\right)\\
      & \leq \ex{p ||\gamma||_{\infty} T + (p^2/2)||\sigma^2||_{\infty}T} \int_{y \in \R} \ex{py}m(\dd y) =: C^p_T,
    \end{aligned}
  \end{equation*}
  and Assumption~\eqref{hyp:H} ensures that the right-hand side above is finite.
\end{proof}

We now address the second part of Proposition~\ref{prop:lln:Pi}. In this purpose, we first state the following auxiliary lemma.

\begin{lem}\label{lem:lln}
  Under the assumptions of Proposition~\ref{prop:lln:Pi}, for all $T>0$ and $p \geq 0$, for all continuous function $f : [0,1] \to \R$, 
  \begin{equation*}
    \lim_{n \to +\infty} \Exp\left(\sup_{t \in [0,T]} \left|\hat{f}^p_n(t) - \hat{f}^p(t)\right|\right) = 0,
  \end{equation*}
  where, for all $t \geq 0$,
  \begin{equation*}
    \hat{f}^p_n(t) := \frac{1}{n} \sum_{j=1}^n \ex{pY^{(j)}_n(t)} f\left(\frac{j}{n}\right), \quad \hat{f}^p(t) := \int_{u=0}^1 \ex{p F_t^{-1}(u)} f(u)\dd u.
  \end{equation*}
\end{lem}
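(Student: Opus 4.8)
The plan is to put $\hat f^p_n(t)$ and $\hat f^p(t)$ into a common quantile form and split their difference into a discretization error and a Wasserstein (quantile) error. Writing $F^n_t$ for the empirical cumulative distribution function of $(Y^1_n(t),\dots,Y^n_n(t))$, its pseudo-inverse satisfies $(F^n_t)^{-1}(u) = Y^{(j)}_n(t)$ for $u \in ((j-1)/n, j/n]$, so that, setting $f_n(u) := f(\lceil nu\rceil/n)$,
\begin{equation*}
  \hat f^p_n(t) = \int_{u=0}^1 \ex{p (F^n_t)^{-1}(u)} f_n(u) \dd u, \qquad \hat f^p(t) = \int_{u=0}^1 \ex{p F_t^{-1}(u)} f(u) \dd u.
\end{equation*}
I would then decompose
\begin{equation*}
  \hat f^p_n(t) - \hat f^p(t) = \int_{u=0}^1 \ex{p(F^n_t)^{-1}(u)}\bigl(f_n(u) - f(u)\bigr)\dd u + \int_{u=0}^1 f(u)\bigl(\ex{p(F^n_t)^{-1}(u)} - \ex{pF_t^{-1}(u)}\bigr)\dd u,
\end{equation*}
bound the first term by $\omega_f(1/n)\,\Sigma^p_n(t)$, where $\omega_f$ is the modulus of continuity of $f$ on $[0,1]$ and $\Sigma^q_n(t) := \frac1n\sum_{j=1}^n \ex{qY^j_n(t)}$, and bound the second term, using $|\ex{a}-\ex{b}| \le |a-b|(\ex{a}+\ex{b})$ together with the Cauchy--Schwarz inequality and~\eqref{eq:wasserstein}, by $\sqrt2\,\|f\|_\infty\, W_2(F^n_t, F_t)\bigl(\Sigma^{2p}_n(t) + \Z^{2p}(t)\bigr)^{1/2}$. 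Taking the supremum over $t \in [0,T]$ and then the expectation, the lemma reduces to two estimates.

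The first estimate is the uniform exponential moment bound $\sup_n \Exp\bigl[\sup_{t\in[0,T]}\Sigma^q_n(t)\bigr] < +\infty$ for every $q \ge 0$. Since $\sup_{t\le T}\Sigma^q_n(t) \le \frac1n\sum_{j=1}^n \sup_{t\le T}\ex{qY^j_n(t)}$, exchangeability reduces this to $\Exp\bigl[\sup_{t\le T}\ex{qY^1_n(t)}\bigr]$; dominating $\ex{qY^1_n(t)} \le \ex{q\|\gamma\|_\infty T}\,\ex{qY^1_n(0)}\,\mathcal E_n(t)$ with $\mathcal E_n(t) := \exp\!\bigl(q\int_0^t\sigma^1_n(s)\dd B^1(s) - \tfrac{q^2}{2}\int_0^t\sigma^1_n(s)^2\dd s\bigr)$ an exponential martingale with bounded integrand, Doob's $L^2$ inequality gives $\Exp[\sup_{t\le T}\mathcal E_n(t)^2] \le 4\,\ex{\frac{3q^2}{2}\|\sigma^2\|_\infty T}$, and combining this with independence of $Y^1_n(0)$ and Assumption~\eqref{hyp:H} yields the bound, exactly in the spirit of the proof of~\eqref{eq:supexpY}; together with $\sup_{t\le T}\Z^{2p}(t) \le C_T^{2p}$ from~\eqref{eq:supexpY}, this makes the first term vanish and keeps the factor $\bigl(\Sigma^{2p}_n(t) + \Z^{2p}(t)\bigr)^{1/2}$ bounded in $L^1$ uniformly in $n$. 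The second estimate is $\Exp\bigl[\sup_{t\le T} W_2(F^n_t, F_t)\bigr] \to 0$: by Theorem~\ref{theo:chaos} the path-empirical measure $\nu^n$ converges in probability to $P$ in $\P(C([0,+\infty),\R))$; combined with the uniform bound $\sup_n\Exp\bigl[\sup_{t\le T}|Y^1_n(t)|^3\bigr] < +\infty$ (a consequence of the boundedness of $\gamma,\sigma$, the Burkholder--Davis--Gundy inequality and~\eqref{hyp:H}), this upgrades to convergence of $\nu^n$ to $P$ in $2$-Wasserstein distance over $C([0,T],\R)$ in probability, hence in $L^1$ by uniform integrability; since projecting a coupling onto time $t$ shows that this path-space distance dominates $\sup_{t\le T} W_2(F^n_t, F_t)$ pointwise, the claim follows, and adding a Cauchy--Schwarz step (using the $L^1$-boundedness just obtained for the weight factor) closes the argument.

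The main obstacle is that the test function $y \mapsto \ex{py}$ is unbounded, so one cannot feed the weak path-convergence of Theorem~\ref{theo:chaos} directly into an integral of this form; the technical heart is therefore the uniform-in-$n$-and-in-$t$ exponential moment estimate of the first step, obtained by dominating the log-capitalizations by an exponential martingale with bounded coefficients and applying Doob's inequality. This single estimate simultaneously kills the discretization error and supplies the uniform integrability needed to convert the path-level propagation of chaos into $W_2$-convergence of the time marginals; once it is in place, everything else is routine.
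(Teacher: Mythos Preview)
Your opening matches the paper exactly: rewrite both sides as quantile integrals, peel off the discretization error using the modulus of continuity of $f$ (cf.~\eqref{eq:step0}), and reduce everything to $\int_0^1\bigl|\ex{p(F^n_t)^{-1}(u)}-\ex{pF_t^{-1}(u)}\bigr|\,\dd u$. Your Doob-based exponential moment bound $\sup_n\Exp[\sup_{t\le T}\Sigma^q_n(t)]<+\infty$ is also correct and is precisely the paper's estimate~\eqref{eq:supq}.

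The gap is in your treatment of the remaining term. The mean-value-plus-Cauchy--Schwarz step produces a factor $W_2(F^n_t,F_t)$, and your subsequent argument rests on $\sup_n\Exp\bigl[\sup_{t\le T}|Y^1_n(t)|^3\bigr]<+\infty$. But the hypotheses of Proposition~\ref{prop:lln:Pi} only give a \emph{first} moment for $m$ together with~\eqref{hyp:H}, and~\eqref{hyp:H} controls only the \emph{right} tail of $m$: it bounds $\Exp[\ex{pY(0)}]$ for $p\ge 0$, not $\Exp[|Y(0)|^q]$. Nothing forbids $m$ to have a left tail behaving like $|y|^{-3}\,\dd y$; then $Y(0)$, and hence every $Y(t)$ (the drift and martingale increments being bounded), fails to have a second moment, so $W_2(F^n_t,F_t)=+\infty$ for all $t$ and your bound is vacuous. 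Your claimed third-moment estimate is therefore not a consequence of~\eqref{hyp:H}.

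The paper avoids this by never putting a Wasserstein distance on the $Y$-scale. It truncates the integrand at level $M$: the bounded part $I^M_n(t):=\int_0^1\bigl(|\ex{p(F^n_t)^{-1}}-\ex{pF_t^{-1}}|\wedge M\bigr)\,\dd u$ is shown to vanish in $L^1(C([0,T],\R))$ via finite-dimensional convergence (propagation of chaos) combined with tightness from a Kolmogorov moment estimate (BDG plus moments of $\ex{pY}$); the tail part $J^M_n(t)$ is controlled uniformly in $n$ by the very Doob estimate you proved, used here as a uniform-integrability device. All moment requirements fall on $\ex{pY}$, which~\eqref{hyp:H} handles, and none on $Y$ itself.

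If $m$ happened to have all polynomial moments, your route would go through and would be a little shorter than the paper's; a minor loose end would remain in the closing step (an $L^1$ bound on $\sup_t W_2$ and an $L^1$ bound on the weight do not directly control the expectation of their product---you need one factor in $L^2$), but that is easily patched once higher moments are available.
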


Before giving the proof of Lemma~\ref{lem:lln}, let us explain how to complete the proof of Proposition~\ref{prop:lln:Pi}: let us fix a continuous function $f : [0,1] \to \R$, $p \geq 0$ and $T>0$. Then, for all $t \in [0,T]$,
\begin{equation*}
  \crochet{f}{\Pi^p_n(t)} = \frac{\hat{f}_n^p(t)}{\hat{\mathbf{1}}_n^p(t)},
\end{equation*}
where we denote by $\mathbf{1}$ the constant function equal to $1$. Combining Lemma~\ref{lem:lln} with the Slutsky theorem, and using the continuity of the mapping 
\begin{equation*}
  \left((x(t))_{t \in [0,T]}, (y(t))_{t \in [0,T]}\right) \mapsto \left(\frac{x(t)}{y(t)}\right)_{t \in [0,T]}
\end{equation*}
at all point $((x(t))_{t \in [0,T]}, (y(t))_{t \in [0,T]}) \in (C([0,T],\R))^2$ such that, for all $t \in [0,T]$, $y(t) \not= 0$, we deduce that the sequence of processes $(\scrochet{f}{\Pi^p_n(t)})_{t \in [0,T]}$ converges in probability, in $C([0,T],\R)$, to the process $(\scrochet{f}{\Pi^p(t)})_{t \in [0,T]}$ introduced in Proposition~\ref{prop:lln:Pi}. Thanks to the elementary bound
\begin{equation*}
  \forall t \geq 0, \qquad \left|\crochet{f}{\Pi^p_n(t)}\right| \leq ||f||_{\infty},
\end{equation*}
we conclude that the convergences above also hold in $L^q(C([0,T],\R))$, for all $q \in [1, +\infty)$.

\begin{proof}[Proof of Lemma~\ref{lem:lln}]
  Let us fix $T>0$ and $p \geq 0$. The key observation is that, for all $t \in [0,T]$, the reordered vector $(Y^{(1)}_n(t), \ldots, Y^{(n)}_n(t))$ writes
  \begin{equation*}
    \forall j \in \{1, \ldots, n\}, \quad \forall u \in [(j-1)/n,j/n), \qquad Y^{(j)}_n(t) = (H*\nu^n_t)^{-1}(u), 
  \end{equation*}
  where $(H*\nu^n_t)^{-1}$ refers to the pseudo-inverse of the empirical cumulative distribution function of $Y^1_n(t), \ldots, Y^n_n(t)$. Therefore, for all continuous function $f : [0,1] \to \R$, for all $t \in [0,T]$,
  \begin{equation*}
    \begin{aligned}
      & \left|\hat{f}^p_n(t) - \hat{f}^p(t) \right| = \left|\sum_{j=1}^n \int_{u=(j-1)/n}^{j/n} \left(\ex{p(H*\nu^n_t)^{-1}(u)} f \left(\frac{j}{n}\right) - \ex{pF_t^{-1}(u)}f(u)\right)\dd u\right|\\
      & \qquad \leq ||f||_{\infty} \int_{u=0}^1 \left|\ex{p(H*\nu^n_t)^{-1}(u)} - \ex{pF_t^{-1}(u)}\right|\dd u + \sum_{j=1}^n \int_{u=(j-1)/n}^{j/n} \ex{pF_t^{-1}(u)} \left|f \left(\frac{j}{n}\right) - f(u)\right|\dd u.
    \end{aligned}
  \end{equation*}
  Combining the uniform continuity of $f$ with~\eqref{eq:supexpY} yields
  \begin{equation}\label{eq:step0}
    \lim_{n \to +\infty} \sup_{t \in [0,T]} \sum_{j=1}^n \int_{u=(j-1)/n}^{j/n} \ex{pF_t^{-1}(u)} \left|f \left(\frac{j}{n}\right) - f(u)\right|\dd u = 0.
  \end{equation}
  
  We now fix $M>0$ and write
  \begin{equation*}
    \int_{u=0}^1 \left|\ex{p(H*\nu^n_t)^{-1}(u)} - \ex{pF_t^{-1}(u)}\right|\dd u = I_n^M(t) + J_n^M(t),
  \end{equation*}
  where
  \begin{equation*}
    \begin{aligned}
      & I_n^M(t) := \int_{u=0}^1 \left(\left|\ex{p(H*\nu^n_t)^{-1}(u)} - \ex{pF_t^{-1}(u)}\right| \wedge M\right)\dd u,\\
      & J_n^M(t) := \int_{u=0}^1 \left[\left|\ex{p(H*\nu^n_t)^{-1}(u)} - \ex{pF_t^{-1}(u)}\right|-M\right]^+\dd u,
    \end{aligned}
  \end{equation*}
  with $[x]^+ := x \vee 0$. In Step~1 below, we shall establish that
  \begin{equation}\label{eq:step1}
    \forall M > 0, \qquad \lim_{n \to +\infty} \Exp\left(\sup_{t \in [0,T]} I^M_n(t)\right) = 0,
  \end{equation}
  while Step~2 is dedicated to the proof of
  \begin{equation}\label{eq:step2}
    \lim_{M \to +\infty} \sup_{n \geq 1} \Exp\left(\sup_{t \in [0,T]} J_n^M(t)\right) = 0.
  \end{equation}
  Then, it follows from~\eqref{eq:step0} and~\eqref{eq:step1} that, for all $M > 0$,
  \begin{equation*}
    \limsup_{n \to +\infty} \Exp\left(\sup_{t \in [0,T]} \left|\hat{f}^p_n(t) - \hat{f}^p(t) \right|\right) \leq \sup_{n \geq 1} \Exp\left(\sup_{t \in [0,T]} J_n^M(t)\right),
  \end{equation*}
  and the conclusion stems from~\eqref{eq:step2}.
  
  {\em Step~1.} Let us first note that, by Remark~\ref{rk:chaos}, for all $t \in [0,T]$,
  \begin{equation*}
    \lim_{n \to +\infty} \Exp\left(I^M_n(t)\right) = 0,
  \end{equation*}
  so that, by the Slutsky theorem, the continuous process $(I^M_n(t))_{t \in [0,T]}$ converges, in finite dimensional distribution, to $0$. Taking for granted that the sequence of the laws of the processes $(I^M_n(t))_{t \in [0,T]}$, $n \geq 1$ is tight, the Prohorov theorem~\cite[Theorem~8.1, p.~54]{billingsley} implies that this sequence converges to $0$ in probability in $C([0,T],\R)$. Then,~\eqref{eq:step1} follows from the uniform boundedness of the process $(I^M_n(t))_{t \in [0,T]}$ with respect to $n$.
  
  To complete this step, it remains to prove that the sequence of the laws of the processes $(I^M_n(t))_{t \in [0,T]}$, $n \geq 1$ is tight. First, the sequence $\{I^M_n(0), n \geq 1\}$ is bounded by $M$, and therefore the sequence of the laws of $I^M_n(0)$, $n \geq 1$ is tight. Thus, by the Kolmogorov criterion~\cite[Theorem~12.3, p.~95]{billingsley}, it now suffices to exhibit $\alpha \geq 0$, $\delta > 0$ and $C \geq 0$ such that, for all $t,s \in [0,T]$,
  \begin{equation*}
    \forall n \geq 1, \qquad \Exp\left(|I^M_n(t)-I^M_n(s)|^{\alpha}\right) \leq C |t-s|^{1+\delta}.
  \end{equation*}
  We first use the chain of elementary inequalities
  \begin{equation*}
    \begin{aligned}
      \forall x_1,x_2,y_1,y_2 \in \R, \qquad \big||x_1-y_1| \wedge M - |x_2-y_2| \wedge M\big| & \leq \big||x_1-y_1|-|x_2-y_2|\big|\\
      & \leq |x_1-x_2| + |y_1-y_2|
    \end{aligned}
  \end{equation*}
  to rewrite, for all $t,s \in [0,T]$ such that $s \leq t$,
  \begin{equation*}
    |I^M_n(t)-I^M_n(s)| \leq \int_{u=0}^1 \left|\ex{p(H*\nu_t^n)^{-1}(u)}-\ex{p(H*\nu_s^n)^{-1}(u)}\right|\dd u + \int_{u=0}^1 \left|\ex{pF_t^{-1}(u)}-\ex{pF_s^{-1}(u)}\right|\dd u.
  \end{equation*}
  Let us now fix $\alpha > 2$. By the Jensen inequality, the inequality above yields
  \begin{equation}\label{eq:step1:jensen}
    \begin{aligned}
      & |I^M_n(t)-I^M_n(s)|^{\alpha}\\
      & \qquad \leq 2^{\alpha-1}\left(\int_{u=0}^1 \left|\ex{p(H*\nu_t^n)^{-1}(u)}-\ex{p(H*\nu_s^n)^{-1}(u)}\right|^{\alpha}\dd u + \int_{u=0}^1 \left|\ex{pF_t^{-1}(u)}-\ex{pF_s^{-1}(u)}\right|^{\alpha}\dd u\right).
    \end{aligned}
  \end{equation}
  Let us address the first term in the right-hand side of~\eqref{eq:step1:jensen}. Using the Jensen inequality again,
  \begin{equation*}
    \begin{aligned}
      & \int_{u=0}^1 \left|\ex{p(H*\nu_t^n)^{-1}(u)}-\ex{p(H*\nu_s^n)^{-1}(u)}\right|^{\alpha}\dd u = \frac{1}{n} \sum_{i=1}^n \left|\ex{pY^i_n(t)}-\ex{pY^i_n(s)}\right|^{\alpha}\\
      & \leq 2^{\alpha-1} \left(\frac{1}{n} \sum_{i=1}^n \left|\int_{r=s}^t \ex{pY^i_n(r)}\left(p\gamma^i_n(r) + \frac{p^2}{2}(\sigma^i_n(r))^2\right)\dd r\right|^{\alpha} + \frac{1}{n} \sum_{i=1}^n \left|\int_{r=s}^t p\ex{pY^i_n(r)}\sigma_n^i(r)\dd B^i(r)\right|^{\alpha}\right).
    \end{aligned}
  \end{equation*}
  
  On the one hand,
  \begin{equation*}
    \begin{aligned}
      & \Exp\left(\left|\int_{r=s}^t \ex{pY^i_n(r)}\left(p\gamma^i_n(r) + \frac{p^2}{2}(\sigma^i_n(r))^2\right)\dd r\right|^{\alpha}\right)\\
      & \qquad \leq \left(p||\gamma||_{\infty} + \frac{p^2}{2}||\sigma^2||_{\infty}\right)^{\alpha}(t-s)^{\alpha-1}\int_{r=s}^t \Exp\left(\ex{\alpha p Y^i_n(r)}\right)\dd r,
    \end{aligned}
  \end{equation*}
  and by the same arguments as in the proof of~\eqref{eq:supexpY},
  \begin{equation*}
    \forall r \in [0,T], \qquad \Exp\left(\ex{\alpha p Y^i_n(r)}\right) \leq C^{\alpha p}_T,
  \end{equation*}
  where the constant $C^{\alpha p}_T$ does not depend on $n$. As a consequence,
  \begin{equation*}
    \Exp\left(\frac{1}{n} \sum_{i=1}^n \left|\int_{r=s}^t \ex{pY^i_n(r)}\left(p\gamma^i_n(r) + \frac{p^2}{2}(\sigma^i_n(r))^2\right)\dd r\right|^{\alpha}\right) \leq C^{\alpha p}_T\left(p||\gamma||_{\infty} + \frac{p^2}{2}||\sigma^2||_{\infty}\right)^{\alpha}(t-s)^{\alpha}.
  \end{equation*}
  
  On the other hand, the Burkholder-Davis-Gundy inequality implies that there exists $K \geq 0$ depending only on $\alpha$ such that
  \begin{equation*}
    \begin{aligned}
      \Exp\left(\left|\int_{r=s}^t p\ex{pY^i_n(r)}\sigma_n^i(r)\dd B^i(r)\right|^{\alpha}\right) & \leq K \Exp\left(\left|\int_{r=s}^t p^2\ex{2pY^i_n(r)}(\sigma_n^i(r))^2\dd r\right|^{\alpha/2}\right)\\
      & \leq Kp^2||\sigma^2||_{\infty} (t-s)^{\alpha/2-1} \int_{r=s}^t \Exp\left(\ex{\alpha p Y^i_n(r)}\right)\dd r\\
      & \leq Kp^2||\sigma^2||_{\infty} (t-s)^{\alpha/2} C^{\alpha p}_T.
    \end{aligned}
  \end{equation*}
  
  As a conclusion, there exists $C' \geq 0$ such that
  \begin{equation*}
    \Exp\left(\int_{u=0}^1 \left|\ex{p(H*\nu_t^n)^{-1}(u)}-\ex{p(H*\nu_s^n)^{-1}(u)}\right|^{\alpha}\dd u\right) \leq C'|t-s|^{\alpha/2}.
  \end{equation*}
  
  The second term of~\eqref{eq:step1:jensen} rewrites
  \begin{equation*}
    \int_{u=0}^1 \left|\ex{pF_t^{-1}(u)}-\ex{pF_s^{-1}(u)}\right|^{\alpha}\dd u = \Exp\left(\left|\ex{pF_t^{-1}(U)}-\ex{pF_s^{-1}(U)}\right|^{\alpha}\right),
  \end{equation*} 
  where $U$ is a uniform random variable on $[0,1]$. Note that $\ex{pF_t^{-1}(U)}$ has the same marginal distribution as $\ex{pY(t)}$, and $\ex{pF_s^{-1}(U)}$ has the same marginal distribution as $\ex{pY(s)}$. By~\eqref{eq:coupl} and~\eqref{eq:wasserstein},
  \begin{equation*}
    \int_{u=0}^1 \left|\ex{pF_t^{-1}(u)}-\ex{pF_s^{-1}(u)}\right|^{\alpha}\dd u \leq \Exp\left(\left|\ex{pY(t)}-\ex{pY(s)}\right|^{\alpha}\right),
  \end{equation*} 
  and the same arguments as for the first term in the right-hand side of~\eqref{eq:step1:jensen} allow us to conclude that the right-hand side above is bounded by $C'(t-s)^{\alpha/2}$. As a conclusion,
  \begin{equation*}
    \Exp\left(|I^M_n(t)-I^M_n(s)|^{\alpha}\right) \leq 2^{\alpha-1}C'(t-s)^{\alpha/2},
  \end{equation*}
  therefore the sequence of the laws of $(I^M_n(t))_{t \geq 0}$, $n \geq 1$ is tight.
  
  {\em Step 2.} Using the chain of elementary inequalities
  \begin{equation*}
    \begin{aligned}
      \forall x,x' \in \R, \qquad [|x-x'|-M]^+ & \leq |x-x'|\ind{|x-x'|\geq M}\\
      & \leq |x-x'|\ind{|x| \geq |x'|\vee M/2} + |x-x'|\ind{|x'| \geq |x|\vee M/2}\\
      & \leq 2|x|\ind{|x|\geq M/2} + 2|x'|\ind{|x'|\geq M/2},
    \end{aligned}
  \end{equation*}
  we obtain
  \begin{equation*}
    J_n^M(t) \leq \int_{u=0}^1 \ex{p(H*\nu^n_t)^{-1}(u)}\ind{\ex{p(H*\nu^n_t)^{-1}(u)} \geq M/2}\dd u + \int_{u=0}^1 \ex{pF_t^{-1}(u)}\ind{\ex{pF_t^{-1}(u)} \geq M/2}\dd u.
  \end{equation*}
  By the Markov inequality,
  \begin{equation*}
    \int_{u=0}^1 \ex{pF_t^{-1}(u)}\ind{\ex{pF_t^{-1}(u)} \geq M/2}\dd u \leq \frac{2}{M}\int_{u=0}^1 \ex{2pF_t^{-1}(u)}\dd u, 
  \end{equation*}
  so that~\eqref{eq:supexpY} applied with $2p$ leads to
  \begin{equation*}
    \lim_{M \to +\infty} \sup_{t \in [0,T]} \int_{u=0}^1 \ex{pF_t^{-1}(u)}\ind{\ex{pF_t^{-1}(u)} \geq M/2}\dd u = 0.
  \end{equation*}
  
  We complete this step by proving that
  \begin{equation}\label{eq:limM}
    \lim_{M \to +\infty} \sup_{n \geq 1} \Exp\left(\sup_{t \in [0,T]} \int_{u=0}^1 \ex{p(H*\nu^n_t)^{-1}(u)}\ind{\ex{p(H*\nu^n_t)^{-1}(u)} \geq M/2}\dd u\right) = 0.
  \end{equation}
  To this aim, we first write
  \begin{equation*}
    \begin{aligned}
      \sup_{t \in [0,T]} \int_{u=0}^1 \ex{p(H*\nu^n_t)^{-1}(u)}\ind{\ex{p(H*\nu^n_t)^{-1}(u)} \geq M/2}\dd u & = \sup_{t \in [0,T]} \frac{1}{n} \sum_{j=1}^n \ex{pY^{j}_n(t)}\ind{\ex{pY^{j}_n(t)} \geq M/2}\\
      & = \sup_{t \in [0,T]} \frac{1}{n} \sum_{i=1}^n \ex{pY^i_n(t)}\ind{\ex{pY^i_n(t)} \geq M/2}\\
      & \leq \frac{1}{n} \sum_{i=1}^n \sup_{t \in [0,T]}  \ex{pY^i_n(t)}\ind{\ex{pY^i_n(t)} \geq M/2},
    \end{aligned}
  \end{equation*}
  so that, owing to the exchangeability of the processes $(Y^1_n(t))_{t \in [0,T]}, \ldots, (Y^n_n(t))_{t \in [0,T]}$,
  \begin{equation*}
    \begin{aligned}
      \Exp\left(\sup_{t \in [0,T]} \int_{u=0}^1 \ex{p(H*\nu^n_t)^{-1}(u)}\ind{\ex{p(H*\nu^n_t)^{-1}(u)} \geq M/2}\dd u\right) & \leq \Exp\left(\sup_{t \in [0,T]}  \ex{pY^1_n(t)}\ind{\ex{pY^1_n(t)} \geq M/2}\right)\\
      & \leq \Exp\left(M_n(T)\ind{M_n(T) \geq M/2}\right),
    \end{aligned}
  \end{equation*}
  where $M_n(T) := \sup_{t \in [0,T]}\ex{pY^1_n(t)}$. Hence, to obtain~\eqref{eq:limM}, it suffices to prove the uniform integrability of the sequence of random variables $(M_n(T))_{n \geq 1}$; thus, it suffices to exhibit $q > p$ such that
  \begin{equation}\label{eq:supq}
    \sup_{n \geq 1} \Exp\left(\sup_{t \in [0,T]} \ex{qY^1_n(t)}\right) < +\infty.
  \end{equation}
  To carry this task out, we fix $q > p$. Proceeding as in the proof~\eqref{eq:supexpY}, we write
  \begin{equation*}
    \begin{aligned}
      \Exp\left(\sup_{t \in [0,T]} \ex{qY^1_n(t)}\right) & \leq \ex{q||\gamma||_{\infty} T} \Exp\left(\ex{q Y^1_n(0)} \sup_{t \in [0,T]}\ex{q \int_{s=0}^t \sigma^1_n(s) \dd B^1(s)}\right)\\
      & = \ex{q||\gamma||_{\infty} T} \Exp\left(\ex{q Y^1_n(0)} \Exp\left(\sup_{t \in [0,T]}\ex{q \int_{s=0}^t \sigma^1_n(s) \dd B^1(s)} \Big| Y^1(0)\right)\right).
    \end{aligned}
  \end{equation*}
  For all $t \in [0,T]$,
  \begin{equation*}
    \ex{q \int_{s=0}^t \sigma^1_n(s) \dd B^1(s)} \leq E(t)^2 \ex{(q^2/4)||\sigma^2||_{\infty}T},
  \end{equation*}
  where $(E(t))_{t \geq 0}$ is the exponential martingale defined by
  \begin{equation*}
    \forall t \geq 0, \qquad E(t) := \ex{(q/2) \int_{s=0}^t \sigma^1_n(s) \dd B^1(s) - (q^2/8)\int_{s=0}^t (\sigma^1_n(s))^2 \dd s}.
  \end{equation*}
  By Doob's inequality,
  \begin{equation*}
    \begin{aligned}
      \Exp\left(\sup_{t \in [0,T]} E(t)^2\Big| Y^1(0)\right) & \leq 4 \Exp\left(E(T)^2\Big| Y^1(0)\right)\\
      & = 4 \Exp\left(\ex{q\int_{s=0}^T \sigma^1_n(s) \dd B^1(s) - (q^2/4)\int_{s=0}^T (\sigma^1_n(s))^2 \dd s}\Big| Y^1(0)\right)\\
      & \leq 4 \Exp\left(\ex{q\int_{s=0}^T \sigma^1_n(s) \dd B^1(s) - (q^2/2)\int_{s=0}^T (\sigma^1_n(s))^2 \dd s}\Big| Y^1(0)\right)\ex{(q^2/4)||\sigma^2||_{\infty} T}\\
      & = 4\ex{(q^2/4)||\sigma^2||_{\infty} T}.
    \end{aligned}
  \end{equation*}
  As a consequence,
  \begin{equation*}
    \Exp\left(\sup_{t \in [0,T]}\ex{q \int_{s=0}^t \sigma^1_n(s) \dd B^1(s)} \Big| Y^1(0)\right) \leq 4\ex{(q^2/2)||\sigma^2||_{\infty} T},
  \end{equation*}
  so that, finally,
  \begin{equation*}
    \begin{aligned}
      \Exp\left(\sup_{t \in [0,T]} \ex{qY^1_n(t)}\right) & \leq 4 \ex{q||\gamma||_{\infty} T + (q^2/2)||\sigma^2||_{\infty}T}\Exp\left(\ex{q Y^1_n(0)}\right)\\
      & = 4 \ex{q||\gamma||_{\infty} T + (q^2/2)||\sigma^2||_{\infty}T} \int_{y \in \R} \ex{qy}m(\dd y).
    \end{aligned}
  \end{equation*}
  By Assumption~\eqref{hyp:H}, the right-hand side above is finite and does not depend on $n$. Therefore,~\eqref{eq:supq} is satisfied and the proof of~\eqref{eq:step2} is completed.
\end{proof}

We complete this appendix with the proof of Lemma~\ref{lem:lln:gamma}.

\begin{proof}[Proof of Lemma~\ref{lem:lln:gamma}]
  We first address the laws of large numbers~\eqref{eq:gammapgammapstar} for the growth rate and excess growth rate. On account of~\eqref{eq:growthrates} and Proposition~\ref{prop:lln:Pi}, it suffices to prove that
  \begin{equation*}
    \sum_{j=1}^n \left(\frac{\ex{p Y^{(j)}_n(t)}}{\ex{p Y^1_n(t)} + \cdots + \ex{p Y^n_n(t)}}\right)^2\sigma^2\left(\frac{j}{n}\right)
  \end{equation*}
  converges to $0$ in $L^q(C([0,T],\R))$. To this aim, we remark that, using the notations of Lemma~\ref{lem:lln}, for all $t \geq 0$,
  \begin{equation*}
    \sum_{j=1}^n \left(\frac{\ex{p Y^{(j)}_n(t)}}{\ex{p Y^1_n(t)} + \cdots + \ex{p Y^n_n(t)}}\right)^2\sigma^2\left(\frac{j}{n}\right) = \frac{1}{n} \frac{\hat{\sigma^2}^{(2p)}_n(t)}{(\hat{\mathbf{1}}^p_n(t))^2},
  \end{equation*}
  where $\mathbf{1}$ refers to the the constant function equal to $1$. By the same arguments as in the proof of Proposition~\ref{prop:lln:Pi} and with the notations of Lemma~\ref{lem:lln}, we obtain that $\hat{\sigma^2}^{(2p)}_n(t)/(\hat{\mathbf{1}}^p_n(t))^2$ converges, in $L^q(C([0,T],\R))$, to $\hat{\sigma^2}^{(2p)}(t)/(\hat{\mathbf{1}}^p(t))^2$, therefore the right-hand side above converges to $0$ and~\eqref{eq:gammapgammapstar} follows.
  
  In addition, we deduce from the argument above and~\eqref{eq:quadvar} that the process $(\langle \log Z_n^p\rangle (t))_{t \in [0,T]}$ converges in probability, in $C([0,T],\R)$, to $0$, and that the process $(\log Z_n^p(t))_{t \in [0,T]}$ converges in probability, in $C([0,T],\R)$, to the process $(\log Z^p(t))_{t \in [0,T]}$ defined by~\eqref{eq:Zpinf}. Using the continuity of the mapping
  \begin{equation*}
    (y(t))_{t \in [0,T]} \mapsto \left(\ex{y(t)}\right)_{t \in [0,T]}
  \end{equation*}
  on $C([0,T],\R)$, we deduce that the process $(Z_n^p(t))_{t \in [0,T]}$ converges in probability, in $C([0,T],\R)$, to the process $(Z^p(t))_{t \in [0,T]}$. Let $q \in [1,+\infty)$. To conclude that the convergence also holds in $L^q(C([0,T],\R))$ we prove that, for $r>q$,
  \begin{equation*}
    \sup_{n \geq 1} \Exp\left(\sup_{t \in [0,T]} |Z_n^p(t)|^r\right) < +\infty.
  \end{equation*}
  This proof of this latter fact is based on Doob's inequality in a similar fashion as at the end of the proof of Lemma~\ref{lem:lln}. Uniformity in $n$ follows from the fact that
  \begin{equation*}
    \langle \log Z_n^p\rangle (t) \leq ||\sigma^2||_{\infty} \int_{s=0}^t \frac{\sum_{i=1}^n \ex{2pY^i_n(s)}}{\left(\sum_{i=1}^n \ex{pY^i_n(s)}\right)^2}\dd s \leq ||\sigma^2||_{\infty} t.
  \end{equation*}
  This completes the proof.
\end{proof}


\section{Long time behaviour of the asymptotic capital measure}\label{app:longtime}

This appendix is dedicated to the proof of Lemma~\ref{lem:bPip} and Proposition~\ref{prop:phtrans}. We first discuss the finiteness of 
\begin{equation*}
  \bZ^p = \int_{u=0}^1 \ex{p \Psi(u)} \dd u.
\end{equation*}

\begin{lem}\label{lem:Zp}
  Let us assume that the uniform ellipticity condtion~\eqref{hyp:UE} and the equilibrium condition~\eqref{hyp:E1} hold, and recall the definition~\eqref{eq:pc} of the critical diversity index $\pc \geq 0$. 
  \begin{itemize}
    \item if $\pc > 0$, then for all $p \in [0,\pc)$, $\bZ^p < +\infty$;
    \item for all $p \in (\pc,+\infty)$, $\bZ^p = +\infty$.
  \end{itemize}
\end{lem}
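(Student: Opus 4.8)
The plan is to reduce the whole statement to the behaviour of the integrand $\ex{p\Psi(u)}$ near $u=1$. First I would record that, under~\eqref{hyp:UE} and~\eqref{hyp:E1}, the map $v \mapsto \sigma^2(v)/(2(\Gamma(v)-gv))$ is continuous and positive on $(0,1)$, so that $\Psi$ is well defined, $C^1$ and increasing on $(0,1)$; in particular $\Psi(u) \leq \Psi(1/2) = 0$ for all $u \in (0,1/2]$, hence $\ex{p\Psi(u)} \leq 1$ there for every $p \geq 0$. Consequently $\bZ^p < +\infty$ if and only if $\int_{u=1/2}^1 \ex{p\Psi(u)}\dd u < +\infty$, and since $\bZ^0 = 1$, only $p>0$ and the endpoint $u=1$ matter.

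Next, starting from the identity $\Gamma(v) - gv = g(1-v) - \int_{w=v}^1 \gamma(w)\dd w$ already used in Remark~\ref{rk:tail}, and writing $\int_{w=v}^1 \gamma(w)\dd w = \gamma(1)(1-v) + \int_{w=v}^1(\gamma(w)-\gamma(1))\dd w$, the continuity of $\gamma$ at $1$ gives $(1-v)^{-1}(\Gamma(v)-gv) \to g - \gamma(1)$ as $v \uto 1$. Combined with $\sigma^2(v) \to \sigma^2(1) > 0$, this yields, in the case $\pc > 0$, a two-sided comparison: for every $\epsilon \in (0,1)$ there is $v_\epsilon \in (1/2,1)$ such that
\[
  \frac{1-\epsilon}{\pc(1-v)} \leq \frac{\sigma^2(v)}{2(\Gamma(v)-gv)} \leq \frac{1+\epsilon}{\pc(1-v)}, \qquad v \in [v_\epsilon,1).
\]
Integrating from $v_\epsilon$ to $u$ and exponentiating, one obtains constants $0 < c_\epsilon \leq C_\epsilon < +\infty$ with $c_\epsilon\,(1-u)^{-(1-\epsilon)/\pc} \leq \ex{\Psi(u)} \leq C_\epsilon\,(1-u)^{-(1+\epsilon)/\pc}$ for $u \in [v_\epsilon,1)$.

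The two bullets then follow by choosing $\epsilon$ appropriately. For the first, let $p \in [0,\pc)$ with $\pc>0$: the case $p=0$ is trivial, and otherwise $p/\pc<1$, so one may fix $\epsilon$ small enough that $p(1+\epsilon)/\pc < 1$; then $\ex{p\Psi(u)} \leq C_\epsilon^p(1-u)^{-p(1+\epsilon)/\pc}$ is integrable near $1$, whence $\bZ^p < +\infty$. For the second, let $p > \pc$: if $\pc>0$ then $p/\pc>1$, so fix $\epsilon$ with $p(1-\epsilon)/\pc \geq 1$; then $\ex{p\Psi(u)} \geq c_\epsilon^p(1-u)^{-p(1-\epsilon)/\pc}$ is not integrable near $1$, so $\bZ^p = +\infty$. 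If instead $\pc=0$, then $\gamma(1)=g$, so $(1-v)^{-1}(\Gamma(v)-gv) \to 0$; hence for every $K>0$ there is $v_K \in (1/2,1)$ with $\sigma^2(v)/(2(\Gamma(v)-gv)) \geq K/(1-v)$ on $[v_K,1)$, and integrating gives $\ex{p\Psi(u)} \geq c_K(1-u)^{-pK}$, so taking $K \geq 1/p$ again forces $\bZ^p = +\infty$.

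The computations are elementary; the only point requiring care is the passage from the asymptotic `$\Sim$' of Remark~\ref{rk:tail} to genuine two-sided bounds, that is, checking that the convergence $(1-v)^{-1}(\Gamma(v)-gv)\to g-\gamma(1)$ can be integrated against $\dd v/(1-v)$ with only a controlled $\epsilon$-loss on each side of the integrability threshold $p=\pc$. The degenerate case $\pc=0$ is then isolated by the observation that $\Gamma(v)-gv = o(1-v)$, which makes $\Psi$ blow up faster than logarithmically at $1$ and forces divergence of $\bZ^p$ for every $p>0$.
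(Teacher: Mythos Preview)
Your proof is correct and follows essentially the same approach as the paper: both arguments bound $\sigma^2(v)/(2(\Gamma(v)-gv))$ from above and below by multiples of $1/(1-v)$ near $v=1$, integrate to control $\Psi(u)$ by multiples of $-\log(1-u)$, and then compare the resulting exponent of $(1-u)$ with~$1$; the heavy-tailed case $\pc=0$ is likewise treated by observing that $(\Gamma(v)-gv)/(1-v)\to 0$ forces $\Psi$ to blow up superlogarithmically. Your parametrization via a single multiplicative $\epsilon$ is marginally cleaner than the paper's additive~$\eta$ on numerator and denominator separately, but the content is the same.
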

\begin{proof}
  We shall distinguish between the exponential case $\pc>0$ and the heavy-tailed case $\pc=0$.
  {\em Exponential case, $\pc > 0$}: then $\gamma(1) < g$. Let $\eta > 0$ be small enough for the inequalities $\sigma^2(1)>\eta$ and $g-\gamma(1) > \eta$ to hold. Recall that $\sigma^2(1) > 0$ due to Assumption~\eqref{hyp:UE}. Then, there exists $u^* \in [0,1)$ such that, for all $v \in [u^*,1]$,
  \begin{equation*}
    \begin{aligned}
      & \sigma^2(1)-\eta \leq \sigma^2(v) \leq \sigma^2(1)+\eta,\\
      & g-\gamma(1)-\eta \leq \frac{\Gamma(v)-gv}{1-v} \leq g-\gamma(1)+\eta,
    \end{aligned}
  \end{equation*}
  so that, for all $u \in [u^*,1)$,
  \begin{equation*}
    \begin{aligned}
      \frac{\sigma^2(1)-\eta}{2(g-\gamma(1)+\eta)} \log\left(\frac{1-u^*}{1-u}\right) \leq \Psi(u)-\Psi(u^*) \leq \frac{\sigma^2(1)+\eta}{2(g-\gamma(1)-\eta)} \log\left(\frac{1-u^*}{1-u}\right).
    \end{aligned} 
  \end{equation*}
  As a consequence, for all $p \geq 0$, for all $u \in [u^*,1)$, 
  \begin{equation}\label{eq:cbetaeta}
    C_-(\eta)\int_{v=u^*}^u(1-v)^{-\beta_-(\eta)}\dd v \leq \int_{v=u^*}^u \ex{p\Psi(v)} \dd v \leq C_+(\eta)\int_{v=u^*}^u(1-v)^{-\beta_+(\eta)}\dd v,
  \end{equation}
  where
  \begin{equation*}
    \begin{aligned}
      & \beta_-(\eta) := p\frac{\sigma^2(1)-\eta}{2(g-\gamma(1)+\eta)}, \qquad C_-(\eta) := \ex{p\Psi(u^*)}(1-u^*)^{\beta_-(\eta)},\\
      & \beta_+(\eta) := p\frac{\sigma^2(1)+\eta}{2(g-\gamma(1)-\eta)}, \qquad C_+(\eta) := \ex{p\Psi(u^*)}(1-u^*)^{\beta_+(\eta)}.
    \end{aligned}
  \end{equation*}
  Certainly, $\bZ^p$ is finite if and only if the limit when $u \uto 1$ of the central term in the inequality~\eqref{eq:cbetaeta} is finite.
  \begin{itemize}
    \item If $p \in [0,\pc)$, then for $\eta$ small enough, $\beta_+(\eta) < 1$, so that the right-hand side of~\eqref{eq:cbetaeta} admits a finite limit when $u \uto 1$.
    \item If $p > \pc$, then for $\eta$ small enough, $\beta_-(\eta) > 1$, so that the left-hand side of~\eqref{eq:cbetaeta} grows to $+\infty$ when $u \uto 1$.
  \end{itemize}
  This completes the proof in the case $\pc > 0$.
  
  {\em Heavy-tailed case, $\pc=0$}: then $\gamma(1) = g$. Note that we only have to address the case $p>\pc$. Let $p > 0$ and let $\eta > 0$ small enough for the inequality $\ua p/(2\eta) \geq 1$ to hold, where $\ua := \inf_{u \in [0,1]} \sigma^2(u) > 0$ due to Assumption~\eqref{hyp:UE}. Then, there exists $u^* \in [0,1)$ such that, for all $v \in [u^*,1]$, 
  \begin{equation*}
    \Gamma(v) - gv \leq \eta(1-v),
  \end{equation*}
  so that, for all $u \in [u^*,1)$,
  \begin{equation*}
    \Psi(u) = \Psi(u^*) + \int_{v=u^*}^u \frac{\sigma^2(v)}{2(\Gamma(v)-gv)}\dd v \geq \Psi(u^*) + \frac{\ua}{2\eta}\left(\log(1-u^*)-\log(1-u)\right).
  \end{equation*}
  As a consequence,
  \begin{equation*}
    \ex{p \Psi(u)} \geq \ex{p \Psi(u^*) + \ua \log(1-u^*)/(2\eta)} (1-u)^{-\ua p/(2\eta)},
  \end{equation*}
  and the choice of $\eta$ ensures that the integral of the right-hand side above diverges to $+\infty$ in $1$. This completes the proof in the case $\pc=0$.
\end{proof}

\begin{rk}
  At the criticality, whether $\bZ^{\pc} = +\infty$ or $\bZ^{\pc} < +\infty$ cannot be {\em a priori} determined. Indeed, on the one hand, for the choice of coefficients introduced in Example~\ref{expl:beta}, it is easily checked that $\bZ^{\pc} = +\infty$. On the other hand, assume that $\pc > 0$ and the coefficients $\gamma$ and $\sigma$ are chosen so that the asymptotic expansion of $\Psi$ writes
  \begin{equation*}
    \Psi(u) = \frac{1}{\pc}\left(-\log(1-u) - \beta\log\left(-\log(1-u)\right)\right) + \grandO_{u \uto 1}(1), \qquad \beta > 1.
  \end{equation*}
  Then, it is straightforward to check that $\bZ^{\pc} < +\infty$.
\end{rk}

We now complete the proof of Lemma~\ref{lem:bPip}.

\begin{proof}[Proof of Lemma~\ref{lem:bPip}]
  By Lemma~\ref{lem:Zp}, $\bZ^p < +\infty$ for all $p \in [0, \pc)$, so that the probability distribution $\bPi^p$ is well-defined. We now fix a continuous function $f : [0,1] \to \R$ and prove that the function $p \mapsto \scrochet{f}{\bPi^p}$ is continuous on $[0,\pc)$. Certainly, it suffices to prove that, for all $p \in [0,\pc)$,
  \begin{equation}\label{eq:pf:bPp:1}
    \lim_{p' \to p} \int_{u=0}^1 \ex{p' \Psi(u)} f(u)\dd u = \int_{u=0}^1 \ex{p \Psi(u)} f(u)\dd u.
  \end{equation}
  Let us fix $p \in [0,+\infty)$. Then, for all $u \in (0,1)$,
  \begin{equation*}
    \lim_{p' \to p} \ex{p' \Psi(u)} f(u) = \ex{p \Psi(u)} f(u),
  \end{equation*}
  while, taking $q \in (p,\pc)$, we write
  \begin{equation*}
    \forall p' \in [0,q], \qquad \left|\ex{p' \Psi(u)}f(u)\right| \leq \ex{q [\Psi(u)]^+} ||f||_{\infty},
  \end{equation*}
  where we recall that $[\psi]^+ := \psi \vee 0$. It easily follows from Lemma~\ref{lem:Zp} that the right-hand side above is integrable on $[0,1]$, so that~\eqref{eq:pf:bPp:1} stems from the dominated convergence theorem. Note that the same arguments allow to prove that, if $\bZ^{\pc} < +\infty$, then the function $p \mapsto \scrochet{f}{\bPi^p}$ is continuous on $[0,\pc]$.
  
  To complete the proof, it remains to show that, if $\bZ^{\pc} = +\infty$, then $\lim_{p \uto \pc} \scrochet{f}{\bPi^p} = f(1)$. In this purpose, we assume that $\bZ^{\pc} = +\infty$. Then, Fatou's lemma immediately yields
  \begin{equation*}
    \lim_{p \uto \pc} \int_{u=0}^1 \ex{p \Psi(u)}\dd u = +\infty.
  \end{equation*}
  Let $\eta > 0$, then by the continuity of $f$, there exists $u^* \in [0,1)$ such that, for all $u \in [u^*,1]$, $f(1)-\eta \leq f(u) \leq f(1)+\eta$. Let us define, for all $p \in [0,\pc)$,
  \begin{equation*}
    I_f(p) := \frac{\displaystyle \int_{u=0}^{u^*} \ex{p \Psi(u)}f(u)\dd u}{\displaystyle \int_{u=0}^1 \ex{p \Psi(u)}\dd u}, \qquad I(p) := \frac{\displaystyle \int_{u=0}^{u^*} \ex{p \Psi(u)}\dd u}{\displaystyle \int_{u=0}^1 \ex{p \Psi(u)}\dd u},
  \end{equation*}
  and write
  \begin{equation*}
    \crochet{f}{\bPi^p} = I_f(p) + \frac{\displaystyle \int_{u=u^*}^1 \ex{p \Psi(u)}f(u)\dd u}{\displaystyle \int_{u=0}^1 \ex{p \Psi(u)}\dd u},
  \end{equation*}
  so that
  \begin{equation*}
    I_f(p) + (f(1)-\eta)(1-I(p)) \leq \crochet{f}{\bPi^p} \leq I_f(p) + (f(1)+\eta)(1-I(p)).
  \end{equation*}
  
  Observing that
  \begin{equation*}
    \limsup_{p \uto \pc} \left|\int_{u=0}^{u^*} \ex{p \Psi(u)}f(u)\dd u\right| \leq ||f||_{\infty} \int_{u=0}^{u^*} \ex{\pc \Psi(u)}\dd u < +\infty,
  \end{equation*}
  we deduce that $I_f(p)$ and $I(p)$ vanish when $p \uto \pc$. The conclusion is straightforward.
\end{proof}

We now prove Proposition~\ref{prop:phtrans}. We shall use the following result regarding the convergence in Wasserstein distance.

\begin{lem}\label{lem:wass}
  Let $(G_t)_{t \geq 0}$ be a family of cumulative distribution functions on $\R$ and $G_{\infty}$ be a cumulative distribution function on $\R$, such that:
  \begin{itemize}
    \item there exists $q \geq 1$ such that $\lim_{t \to +\infty} W_q(G_t, G_{\infty}) = 0$,
    \item the probability distribution with cumulative distribution function $G_{\infty}$ admits a positive density with respect to the Lebesgue measure on $\R$, so that both $G_{\infty}$ and $G^{-1}_{\infty}$ are continuous, respectively on $\R$ and $(0,1)$.
  \end{itemize}
  Then, for all $y \in \R$, $\lim_{t \to +\infty} G_t(y) = G_{\infty}(y)$, and for all $u \in (0,1)$, $\lim_{t \to +\infty} G_t^{-1}(u) = G_{\infty}^{-1}(u)$.
\end{lem}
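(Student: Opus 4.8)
The plan is to deduce both statements from a single observation: by the representation~\eqref{eq:wasserstein}, the hypothesis $W_q(G_t,G_\infty)\to 0$ says exactly that $G_t^{-1}\to G_\infty^{-1}$ in $L^q([0,1],\dd u)$, and the positive-density assumption on $G_\infty$ is what lets us upgrade an integral (or weak) mode of convergence to pointwise convergence at every point.

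First I would treat the cumulative distribution functions. Since $[0,1]$ has finite Lebesgue measure and $q\ge 1$, the $L^q$-convergence of $G_t^{-1}$ to $G_\infty^{-1}$ implies convergence in Lebesgue measure on $[0,1]$. Letting $U$ be a uniform random variable on $[0,1]$, this means $G_t^{-1}(U)\to G_\infty^{-1}(U)$ in probability, hence in distribution; as the law of $G_t^{-1}(U)$ (resp.\ $G_\infty^{-1}(U)$) is precisely the probability distribution with cumulative distribution function $G_t$ (resp.\ $G_\infty$), this shows that $G_t$ converges weakly to $G_\infty$. Since $G_\infty$ has a positive density on $\R$ it is continuous at every point, so weak convergence gives $\lim_{t\to+\infty}G_t(y)=G_\infty(y)$ for all $y\in\R$, by the Portmanteau theorem. (Alternatively one may stay self-contained: $W_1(G_t,G_\infty)\le W_q(G_t,G_\infty)\to 0$ by Jensen, and $W_1(G_t,G_\infty)=\int_\R|G_t(y)-G_\infty(y)|\dd y$, so $G_t\to G_\infty$ in $L^1(\R)$, which combined with the monotonicity of the $G_t$ and the continuity of $G_\infty$ forces pointwise convergence everywhere.)

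Next I would deduce the statement on the pseudo-inverses from the one just obtained. Because $G_\infty$ has a positive density, it is a continuous, strictly increasing bijection from $\R$ onto $(0,1)$, so $G_\infty^{-1}$ is its genuine inverse, finite and continuous on $(0,1)$, with $G_\infty(G_\infty^{-1}(u))=u$ for all $u\in(0,1)$. Fix such a $u$ and $\varepsilon>0$. Strict monotonicity gives $G_\infty\big(G_\infty^{-1}(u)-\varepsilon\big)<u<G_\infty\big(G_\infty^{-1}(u)+\varepsilon\big)$, so by the pointwise convergence $G_t(y)\to G_\infty(y)$ applied at $y=G_\infty^{-1}(u)\pm\varepsilon$ we get, for $t$ large enough, $G_t\big(G_\infty^{-1}(u)-\varepsilon\big)<u<G_t\big(G_\infty^{-1}(u)+\varepsilon\big)$. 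By the definition $a^{-1}(u)=\inf\{y:a(y)>u\}$ of the pseudo-inverse and the monotonicity of $G_t$, this forces $G_\infty^{-1}(u)-\varepsilon\le G_t^{-1}(u)\le G_\infty^{-1}(u)+\varepsilon$ for $t$ large; letting $t\to+\infty$ and then $\varepsilon\dto 0$ yields $\lim_{t\to+\infty}G_t^{-1}(u)=G_\infty^{-1}(u)$.

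The only genuine obstacle is precisely this upgrade from a global mode of convergence ($L^q$ on $[0,1]$, or weak convergence of the $G_t$) to pointwise convergence at each individual point, for both $G_t$ and $G_t^{-1}$; this is where the continuity and strict monotonicity granted by the positive-density assumption are indispensable, and it is also the place where one must be slightly careful with the strict versus non-strict inequalities in the definition of the pseudo-inverse. Everything else is routine.
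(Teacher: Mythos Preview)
Your proof is correct and follows the same overall skeleton as the paper's: convergence in Wasserstein distance is upgraded to weak convergence of the underlying laws, and the positive-density assumption then turns weak convergence into pointwise convergence of both $G_t$ and $G_t^{-1}$ at every point.

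The only noteworthy difference is in how much is cited versus derived. The paper simply invokes that Wasserstein convergence metrizes weak convergence, and then quotes two parallel classical facts (weak convergence implies $G_t(y)\to G_\infty(y)$ at continuity points of $G_\infty$, and $G_t^{-1}(u)\to G_\infty^{-1}(u)$ at continuity points of $G_\infty^{-1}$), concluding immediately since both limits are continuous everywhere. You instead unpack the first step (going through $L^q$-convergence of the quantile functions, convergence in measure, and the quantile transform $G_t^{-1}(U)$), and for the second part you do not cite the quantile-convergence result but rederive it by hand from the pointwise convergence of $G_t$, using the strict monotonicity of $G_\infty$ to sandwich $G_t^{-1}(u)$. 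This makes your argument more self-contained; the paper's is shorter but relies on the reader knowing both standard facts. The alternative $W_1=\int_\R|G_t-G_\infty|\,\dd y$ route you mention is a nice touch and likewise valid.
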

\begin{proof}
  Since the Wasserstein distance metrizes the weak convergence, $G_t$ converges weakly to $G_{\infty}$. This classically implies that $G_t(y)$ converges to $G_{\infty}(y)$ for all continuity point $y \in \R$ of $G_{\infty}$~\cite[Theorem~2.2, p.~86]{durrett}, and $G_t^{-1}(u)$ converges to $G_{\infty}^{-1}(u)$ for all continuity point $u \in (0,1)$ of $G_{\infty}^{-1}$~\cite[Theorem~2.1, p.~85]{durrett}. Since $G_{\infty}$ is continuous on $\R$ and $G_{\infty}^{-1}$ is continuous on $(0,1)$, then the proof is completed. 
\end{proof}

\begin{proof}[Proof of Proposition~\ref{prop:phtrans}]
  Let us assume that the conditions of Theorem~\ref{theo:equi} and Proposition~\ref{prop:lln:Pi} are satisfied. Recall that the critical diversity index $\pc \geq 0$ is defined in~\eqref{eq:pc}.

  {\em Subcritical case.} Let us assume that $\pc > 0$ and let $p \in [0,\pc)$. Following~\S\ref{sss:heuristic}, it suffices to prove that, for all continuous function $f : [0,1] \to \R$,
  \begin{equation}\label{eq:limtexpFt}
    \lim_{t \to +\infty} \int_{u=0}^1 \ex{p \tF^{-1}_t(u)} f(u) \dd u = \int_{u=0}^1 \ex{p \tF^{-1}_{\infty}(u)} f(u) \dd u,
  \end{equation}
  where the cumulative distribution function $\tF_{\infty}$ is defined by Theorem~\ref{theo:equi}. Combining the latter with Lemma~\ref{lem:wass}, it is already known that, for all $u \in (0,1)$, $\lim_{t \to +\infty} \tF_t^{-1}(u) = \tF_{\infty}^{-1}(u)$. As a consequence, and since $f$ is bounded,~\eqref{eq:limtexpFt} follows if we exhibit $q > p$ such that
  \begin{equation}\label{eq:supqFt}
    \sup_{t \geq 0} \int_{u=0}^1 \ex{q \tF_t^{-1}(u)} \dd u < +\infty.
  \end{equation}
  In this purpose, let us fix $q \geq 0$ such that $p < q < \pc$ and remark that
  \begin{equation*}
    \int_{u=0}^1 \ex{q \tF_t^{-1}(u)} \dd u = \Exp\left(\ex{q \tY(t)}\right).
  \end{equation*}
  By Itô's formula and~\eqref{eq:supexpY},
  \begin{equation*}
    \frac{\dd}{\dd t} \Exp\left(\ex{q \tY(t)}\right) = \Exp\left(\ex{q \tY(t)}\tilde{b}_q(\tF_t(\tY(t)))\right),
  \end{equation*}
  where $\tilde{b}_q(u) := q \tilde{\gamma}(u) + q^2\sigma^2(u)/2$. It follows from a straightforward analysis of the function $q \mapsto \tilde{b}_q(1)$ that, since $q < \pc$, then $\tilde{b}_q(1) < 0$. Therefore, by the continuity of $\tilde{b}_q$, there exist $\eta > 0$ and $u^* \in [0,1)$ such that, for all $u \in [u^*,1]$, $\tilde{b}_q(u) \leq -\eta$. As a consequence, for all $t \geq 0$,
  \begin{equation*}
    \begin{aligned}
      & \Exp\left(\ex{q \tY(t)}\tilde{b}_q(\tF_t(\tY(t)))\right)\\
      & \qquad = \Exp\left(\ex{q \tY(t)}\tilde{b}_q(\tF_t(\tY(t)))\ind{\tF_t(\tY(t)) < u^*}\right) + \Exp\left(\ex{q \tY(t)}\tilde{b}_q(\tF_t(\tY(t)))\ind{\tF_t(\tY(t)) \geq u^*}\right)\\
      & \qquad \leq \Exp\left(\ex{q \tY(t)}\tilde{b}_q(\tF_t(\tY(t)))\ind{\tF_t(\tY(t)) < u^*}\right) - \eta \Exp\left(\ex{q \tY(t)}\ind{\tF_t(\tY(t)) \geq u^*}\right)\\
      & \qquad \leq (||\tilde{b}_q||_{\infty} + \eta) \Exp\left(\ex{q \tY(t)}\ind{\tF_t(\tY(t)) < u^*}\right) - \eta \Exp\left(\ex{q \tY(t)}\right).
    \end{aligned}
  \end{equation*}
  For all $t \geq 0$, the definition of $\tF_t^{-1}$ and the right continuity of $\tF_t$ yield, for all $u \in (0,1)$, $\tF_t(\tF_t^{-1}(u)) \geq u$. As a consequence,
  \begin{equation*}
    \Exp\left(\ex{q \tY(t)}\ind{\tF_t(\tY(t)) < u^*}\right) = \int_{u=0}^1 \ex{q\tF_t^{-1}(u)}\ind{\tF_t(\tF_t^{-1}(u)) < u^*} \dd u \leq \int_{u=0}^{u^*} \ex{q\tF_t^{-1}(u)} \dd u \leq u^* \ex{q\tF_t^{-1}(u^*)},
  \end{equation*}
  and the right-hand side converges to $u^* \ex{q\tF_{\infty}^{-1}(u^*)} < +\infty$ when $t \to +\infty$. As a consequence, there exists $C < +\infty$ such that
  \begin{equation*}
    \frac{\dd}{\dd t} \Exp\left(\ex{q \tY(t)}\right) \leq C - \eta \Exp\left(\ex{q \tY(t)}\right),
  \end{equation*}
  and~\eqref{eq:supqFt} follows from Gronwall's lemma.
  
  {\em Supercritical case.} For $p > \pc \geq 0$, Theorem~\ref{theo:equi}, Fatou's lemma and Lemma~\ref{lem:Zp} yield
  \begin{equation}\label{eq:pf:lemltsuper0}
    \lim_{t \to +\infty} \int_{u=0}^1 \ex{p \tF_t^{-1}(u)}\dd u = +\infty.
  \end{equation}
  Let $\eta > 0$. By the continuity of $f$, there exists $u^* \in [0,1)$ such that, for all $u \in [u^*,1]$, $f(1) - \eta \leq f(u) \leq f(1) + \eta$. Besides, there exists $M > 0$ such that $\tF_{\infty}(M) > u^*$. Then, for all $t \geq 0$,
  \begin{equation*}
    \begin{aligned}
      \int_{u=0}^1 \ex{p\tF_t^{-1}(u)}f(u) \dd u & = \Exp\left(\ex{p Y(t)}f(\tF_t(\tY(t)))\right)\\
      & = \Exp\left(\ex{p \tY(t)}f(\tF_t(\tY(t)))\ind{\tY(t) < M}\right) + \Exp\left(\ex{p \tY(t)}f(\tF_t(\tY(t)))\ind{\tY(t) \geq M}\right).
    \end{aligned}
  \end{equation*}
  
  On the one hand,
  \begin{equation*}
    \Exp\left(\ex{p \tY(t)}f(\tF_t(\tY(t)))\ind{\tY(t) < M}\right) \leq ||f||_{\infty} \ex{pM},
  \end{equation*}
  so that
  \begin{equation*}
    \lim_{t \to +\infty} \frac{\Exp\left(\ex{p \tY(t)}f(\tF_t(\tY(t)))\ind{\tY(t) < M}\right)}{\Exp\left(\ex{p \tY(t)}\right)} = 0.
  \end{equation*}
  
  On the other hand, since Lemma~\ref{lem:wass} implies that $\lim_{t \to +\infty} \tF_t(M) = \tF_{\infty}(M)$, then for $t$ large enough one has, for all $y \geq M$, $\tF_t(y) \geq \tF_t(M) \geq u^*$. Therefore, for $t$ large enough,
  \begin{equation*}
    f(1)-\eta \leq \frac{\Exp\left(\ex{p \tY(t)}f(\tF_t(\tY(t)))\ind{\tY(t) \geq M}\right)}{\Exp\left(\ex{p \tY(t)}\ind{\tY(t) \geq M}\right)} \leq f(1)+\eta,
  \end{equation*}
  while $\Exp\left(\ex{p \tY(t)}\ind{\tY(t) \geq M}\right) / \Exp\left(\ex{p \tY(t)}\right)$ converges to $1$. As a conclusion, 
  \begin{equation*}
    f(1) - \eta \leq \liminf_{t \to +\infty} \frac{\displaystyle \int_{u=0}^1 \ex{p\tF_t^{-1}(u)} f(u)\dd u}{\displaystyle \int_{u=0}^1 \ex{p\tF_t^{-1}(u)}\dd u} \leq \limsup_{t \to +\infty} \frac{\displaystyle \int_{u=0}^1 \ex{p\tF_t^{-1}(u)} f(u)\dd u}{\displaystyle \int_{u=0}^1 \ex{p\tF_t^{-1}(u)}\dd u} \leq f(1) + \eta,
  \end{equation*}
  and the proof of the supercritical case is completed.  
  
  {\em Criticality, case $\bZ^{\pc} = +\infty$.} Note that the proof in the supercritical case above only requires that $p$ be such that~\eqref{eq:pf:lemltsuper0} holds. As soon as $\bZ^{\pc} = +\infty$, Fatou's lemma implies that~\eqref{eq:pf:lemltsuper0} holds with $p=\pc$, so that we similarly obtain that $\lim_{t \to +\infty} \scrochet{f}{\Pi^{\pc}(t)} = f(1)$.

  {\em Criticality, case $\bZ^{\pc} < +\infty$.} We finally assume that $\bZ^{\pc} < +\infty$ and prove~\eqref{eq:encadre}. In this purpose, we let $\ell \in [-||f||_{\infty}, ||f||_{\infty}]$ be the limit of a converging sequence $(\scrochet{f}{\Pi^{\pc}(t_k)})_{k \geq 1}$, where $t_k$ grows to infinity with $k$. We shall prove that
  \begin{equation}\label{eq:encadre:ell}
    f(1) \wedge \crochet{f}{\bPi^{\pc}} \leq \ell \leq f(1) \vee \crochet{f}{\bPi^{\pc}}.
  \end{equation}
  
  First, we deduce from Fatou's lemma that there exists a subsequence of $(t_k)_{k \geq 1}$, that we still index by $k$ for convenience, such that
  \begin{equation*}
    \lim_{k \to +\infty} \int_{u=0}^1 \ex{\pc \tF_{t_k}^{-1}(u)} \dd u = I \in [J, +\infty],
  \end{equation*}
  where
  \begin{equation*}
    J := \int_{u=0}^1\ex{\pc \tF^{-1}_{\infty}(u)}\dd u < +\infty.
  \end{equation*}
  
  Let us now fix $\eta > 0$. By the continuity of $f$, there exists $u^* \in [0,1)$ such that, for all $u \in [u^*,1]$, $f(1)-\eta \leq f(u) \leq f(1)+\eta$. Now let $M \geq 0$ be large enough for the inequality $\tF_{\infty}(M) > u^*$ to hold. Then, for all $k \geq 1$,
  \begin{equation*}
    \crochet{f}{\Pi^{\pc}(t_k)} = \frac{\displaystyle \int_{u=0}^1 \ex{\pc \tF^{-1}_{t_k}(u)}f(u)\ind{\tF^{-1}_{t_k}(u) \leq M}\dd u}{\displaystyle \int_{u=0}^1 \ex{\pc \tF^{-1}_{t_k}(u)}\dd u} + \frac{\displaystyle \int_{u=0}^1 \ex{\pc \tF^{-1}_{t_k}(u)}f(u)\ind{\tF^{-1}_{t_k}(u) > M}\dd u}{\displaystyle \int_{u=0}^1 \ex{\pc \tF^{-1}_{t_k}(u)}\dd u}.
  \end{equation*}
  
  On the one hand, since the equilibrium distribution does not weight points,
  \begin{equation*}
    \lim_{k \to +\infty} \int_{u=0}^1 \ex{\pc \tF^{-1}_{t_k}(u)}f(u)\ind{\tF^{-1}_{t_k}(u) \leq M}\dd u = \int_{u=0}^1 \ex{\pc \tF^{-1}_{\infty}(u)}f(u)\ind{\tF^{-1}_{\infty}(u) \leq M}\dd u =: J^M_f,
  \end{equation*}
  and the limit is finite. As a consequence,
  \begin{equation*}
    \lim_{k \to +\infty} \frac{\displaystyle \int_{u=0}^1 \ex{\pc \tF^{-1}_{t_k}(u)}f(u)\ind{\tF^{-1}_{t_k}(u) \leq M}\dd u}{\displaystyle \int_{u=0}^1 \ex{\pc \tF^{-1}_{t_k}(u)}\dd u} = \frac{J^M_f}{I},
  \end{equation*}
  where it is understood that the limit is null whenever $I=+\infty$.
  
  On the other hand, by Lemma~\ref{lem:wass}, for $k$ large enough, $F_{t_k}(M) \geq u^*$ so that 
  \begin{equation*}
    \begin{aligned}
      & (f(1)-\eta)\int_{u=0}^1 \ex{\pc \tF^{-1}_{t_k}(u)}\ind{\tF^{-1}_{t_k}(u) > M}\dd u\\
      & \qquad \leq \int_{u=0}^1 \ex{\pc \tF^{-1}_{t_k}(u)}f(u)\ind{\tF^{-1}_{t_k}(u) > M}\dd u \leq (f(1)+\eta)\int_{u=0}^1 \ex{\pc \tF^{-1}_{t_k}(u)}\ind{\tF^{-1}_{t_k}(u) > M}\dd u,
    \end{aligned}
  \end{equation*}
  therefore
  \begin{equation*}
    \begin{aligned}
      & (f(1)-\eta)\left(1-\frac{\displaystyle\int_{u=0}^1 \ex{\pc \tF^{-1}_{t_k}(u)}\ind{\tF^{-1}_{t_k}(u) \leq M}\dd u}{\displaystyle \int_{u=0}^1 \ex{\pc \tF^{-1}_{t_k}(u)}\dd u}\right)\\
      & \qquad \leq \frac{\displaystyle \int_{u=0}^1 \ex{\pc \tF^{-1}_{t_k}(u)}f(u)\ind{\tF^{-1}_{t_k}(u) > M}\dd u}{\displaystyle \int_{u=0}^1 \ex{\pc \tF^{-1}_{t_k}(u)}\dd u} \leq (f(1)+\eta)\left(1-\frac{\displaystyle\int_{u=0}^1 \ex{\pc \tF^{-1}_{t_k}(u)}\ind{\tF^{-1}_{t_k}(u) \leq M}\dd u}{\displaystyle \int_{u=0}^1 \ex{\pc \tF^{-1}_{t_k}(u)}\dd u}\right),
    \end{aligned}
  \end{equation*}
  
  As a consequence,
  \begin{equation*}
    \frac{J^M_f}{I} + (f(1)-\eta)\left(1-\frac{J^M}{I}\right) \leq \ell \leq \frac{J^M_f}{I} + (f(1)+\eta)\left(1-\frac{J^M}{I}\right) 
  \end{equation*}
  where
  \begin{equation*}
    J^M := \lim_{k \to +\infty} \int_{u=0}^1 \ex{\pc \tF^{-1}_{t_k}(u)}\ind{\tF^{-1}_{t_k}(u) \leq M}\dd u = \int_{u=0}^1 \ex{\pc \tF^{-1}_{\infty}(u)}\ind{\tF^{-1}_{\infty}(u) \leq M}\dd u.
  \end{equation*}
  By the dominated convergence theorem,
  \begin{equation*}
    \lim_{M \to +\infty} J^M_f = J_f := \int_{u=0}^1 \ex{\pc \tF^{-1}_{\infty}(u)}f(u)\dd u, \qquad \lim_{M \to +\infty} J^M = J,
  \end{equation*}
  so that
  \begin{equation*}
    \frac{J_f}{I} + (f(1)-\eta)\left(1-\frac{J}{I}\right) \leq \ell \leq \frac{J_f}{I} + (f(1)+\eta)\left(1-\frac{J}{I}\right),
  \end{equation*}
  and letting $\eta$ vanish yields
  \begin{equation*}
    \ell = \frac{J_f}{I} + f(1)\left(1-\frac{J}{I}\right).
  \end{equation*}

  We conclude by remarking that $J/I \in [0,1]$, while 
  \begin{equation*}
    \frac{J_f}{I} = \frac{J_f}{J}\frac{J}{I} = \crochet{f}{\bPi^{\pc}}\frac{J}{I},
  \end{equation*}
  so that $\ell$ writes as a convex combination of $\scrochet{f}{\bPi^{\pc}}$ and $f(1)$ and therefore satisfies~\eqref{eq:encadre:ell}.
\end{proof}


\end{document}